\date{}
\title{\textbf{Continuous Latent Position Models for Instantaneous Interactions}}
\author[1]{\footnote{The authors contributed equally to this work.}
Riccardo Rastelli}
\author[2]{Marco Corneli}
\affil[1]{\footnotesize School of Mathematics and Statistics, University College Dublin, Dublin, Ireland\\
\footnotesize riccardo.rastelli@ucd.ie}
\affil[2]{\footnotesize Universit\'e C\^{o}te d'Azur, INRIA, Center of Modeling, Simulation and Interactions, MAASAI Team, Nice, France\\
\footnotesize marco.corneli@univ-cotedazur.fr}
\theoremstyle{plain}
\newtheorem{proposition}[]{Proposition}
\theoremstyle{definition}
\theoremstyle{remark}
\newtheorem*{remark}{Remark}
\numberwithin{equation}{section}
\numberwithin{figure}{section}
\newcommand{\bz}{\textbf{z}}
\newcommand{\bZ}{\textbf{Z}}
\newcommand{\blambda}{\boldsymbol{\lambda}}
\newcommand{\scalar}[2]{\left\langle #1 , #2 \right\rangle}
\newcommand{\norm}[1]{\parallel #1 \parallel}
\begin{document}
\rowcolors{2}{gray!25}{white}
\counterwithout{figure}{section}
\counterwithout{figure}{subsection}
\counterwithout{equation}{section}
\counterwithout{equation}{subsection}

\maketitle
\begin{abstract}
\baselineskip=20pt
\noindent
We create a framework to analyse the timing and frequency of instantaneous interactions between pairs of entities. This type of interaction data is especially common nowadays, and easily available. Examples of instantaneous interactions include email networks, phone call networks and some common types of technological and transportation networks. Our framework relies on a novel extension of the latent position network model: we assume that the entities are embedded in a latent Euclidean space, and that they move along individual trajectories which are continuous over time. These trajectories are used to characterize the timing and frequency of the pairwise interactions. We discuss an inferential framework where we estimate the individual trajectories from the observed interaction data, and propose applications on artificial and real data.
\\

\noindent
{\bf Keywords:} 
Latent Position Models; Dynamic Networks; Non-Homogeneous  Poisson Process; Spatial Embeddings; Statistical Network Analysis.
\end{abstract}

\baselineskip=20pt
\section{Introduction}\label{sec:introduction}
The Latent Position Model (LPM, \cite{hoff2002latent}) is a widely used statistical model that can be used to characterize a network through a latent space representation.
The model embeds the nodes of the network as points in the real plane, and then uses these latent features to explain the observed interactions between the entities. 
This provides a neat and easy-to-interpret graphical representation of the observed interaction data, which is able to capture some extremely common empirical features such as transitivity and homophily.

In this paper, we propose a new LPM that can be used to model repeated instantaneous interactions between entities, over an arbitrary time interval. 
The time dimension is continuous, and an interaction between any two nodes may happen at any point in time.
We propose a data generative mechanism which is inspired by the extensive literature on LPMs, and we define an efficient estimation framework to fit our model.

Since the foundational work of \textcite{hoff2002latent}, the literature on LPMs has been developed in many directions, both from the methodological and from the applied point of views. 
Recent review papers on the topic include \textcite{salter2012review, rastelli2015properties, raftery2017comment, sosa2020review}.

As regards statistical methodology, the original paper of \textcite{hoff2002latent} defined a framework to infer and interpret a LPM for binary interactions. 
The authors introduced two types of LPMs: the \textbf{projection model} and the \textbf{distance model}. 

The \textbf{projection model} postulates that the probability that an edge appears between any two nodes is determined by the dot product of the latent coordinates of the two respective nodes. 
As a consequence, a crucial contribution for the edge probability is given by the direction the nodes point towards.
By contrast, the distance model defines the connection probability as a function of the Euclidean distance between the two nodes. Nodes that are located close to each other are more likely to connect than nodes that are located far apart.
Both models provide a clear representation of the interaction data which can be used to study the network's topology, or to construct model-based summaries and visualizations, or predictions.

More recently, the projection model and its variations have been extensively studied and used in a variety of applications (see \cite{hoff2005bilinear, hoff2018additive} and references therein). 
This model has also clear connections to a rich machine learning literature on spatial embeddings, which include \textcite{rahimi2007random, lee1999learning, halko2011finding}. 
Variations of the projection model have been extended to dynamic settings \parencite{durante2014nonparametric, durante2016locally}, and other types of networks frameworks \parencite{durante2017nonparametric}.

As regards the \textbf{distance model}, this has been extended by \textcite{handcock2007model} and \textcite{krivitsky2009representing} to represent clustering of the nodes and more flexible degree distributions.
In the context of networks evolving over time, dynamic extensions of the model have been considered in \textcite{sarkar2006dynamic}, and more recently in several works including \textcite{sewell2015latent} and \textcite{friel2016} for binary interactions.
The recent review paper of \textcite{kim2018review} provides additional references on dynamic network modeling.
Other relevant and interesting works that revolve around the distance models in either static or dynamic settings include \textcite{gollini2014joint, salter2017latent} for multi-view networks, \textcite{sewell2016latent} for dynamic weighted networks, and \textcite{gormley2007latent, sewell2015analysis} for networks of rankings.
We also mention \textcite{raftery2012fast, fosdick2019multiresolution, rastelli2018computationally, tafakori2019measuring} which introduce original and closely related modeling or computational ideas.

Crucially, we note that all existing dynamic LPMs consider a discrete time dimension, whereby the interactions are observed at a number of different points in time\footnote{One exception is \textcite{durante2016locally} but we find their approach and goals fundamentally different from ours.}. 
By contrast, a fundamental original aspect of this paper is that it considers a fully continuous LPM whereby the interactions are instantaneous, and they can happen at any point in time.
Continuous networks of this type are especially common and widely available, as they include email networks \parencite{klimt2004enron}, functional brain networks \parencite{park2013structural}, and other networks of human interactions (see \cite{cattuto2010dynamics, barrat2013temporal} and references therein).

Some of the approaches that have been proposed in the statistics literature to model instantaneous interactions include \textcite{corneli2017multiple} and \textcite{matias2018semiparametric}, however we note that these approaches rely on extensions of the stochastic blockmodel \parencite{nowicki2001estimation}, and not on the LPM.
Another relevant strand of literature focuses instead on modeling this type of data using Hawkes processes (see \cite{junuthula2019block} and references therein). \\

We propose our new Continuous Latent Position Model (CLPM) both for the projection model framework and for the distance model framework. 
In our approach, each of the nodes is characterized by a latent trajectory on the real plane, which is assumed to be a piece-wise linear curve. 
The interactions between any two nodes are modeled as events of a inhomogeneous Poisson point process, whose rate is determined by the instantaneous positions of the nodes, at each point in time.
The piece-wise linear curve assumption gives sufficient flexibility regarding the possible trajectories, while not affecting the purely continuous nature of the framework, in that the rate of the Poisson process is not piece-wise constant. 
This is a major difference with respect to other approaches that have been considered (\cite{corneli2017multiple} and one of the approaches of \cite{matias2018semiparametric}).

We propose a penalized likelihood approach to perform inference, and we use optimization via gradient descent to obtain optimal estimates of the model parameters. 
We have created a software that implements our estimation method, which is publicly available from \textcite{clpm_github}.

The paper is structured as follows: in Section \ref{sec:model} we introduce our new model and its two variants (i.e. the projection and distance model), and we derive the main equations that are used in the paper; in Section \ref{sec:inference} we describe our approach to estimate the model parameters; in Section \ref{sec:exp_simu} we illustrate our procedure on three synthetic datasets, whereas in Section \ref{sec:applications} we propose real data applications.
We give final comments and conclusions in Section \ref{sec:conclusions}.

\section{The model}\label{sec:model}

\subsection{Modeling the interaction times}\label{subsec:mod1}
The data that we observe is stored as a list of interactions (or \emph{edge list}) in the format $\mathcal{E} :=\{(\tau_e, i_e, j_e)\}_{e \in \mathbb{N}}$, where $\tau_e \in [0,T]$ for all $e$ is the interaction time between the nodes $i_e$ and $j_e$, with $i_e,j_e$  $\in \left\{1,\dots,N\right\}$.
We consider undirected interactions without self loops, although extensions to the directed case are straightforward.
We emphasize that all interactions are instantaneous, i.e. their length is not relevant or not recorded. An interaction between two nodes may occur at any point in time $\tau_e \in [0, T]$.
Let us now formally introduce the list of the interaction times between two arbitrary nodes $i$ and $j$:  
\begin{equation}\label{eq:int_ij}
    \mathcal{E}_{ij} = \{\tau_1^{(i,j)}, \dots, \tau_{E_{ij}}^{(i,j)}\},
\end{equation}
where $E_{ij}$ is the total number of times $i$ interacts with $j$ before $T$.
We assume that the interaction times in the above equation are the realization of a inhomogeneous Poisson point process with instantaneous rate function denoted with $\lambda_{ij}(t) \geq 0,\ \forall t \in [0,T]$ and nodes $i$ and $j$.
Using a more convenient (but equivalent) characterization, we state that the waiting time for a new interaction event between $i$ and $j$ is exponential with a variable rate that changes over time.
Then, if we assume that the inhomogeneous point processes are independent for all pairs $i$ and $j$, the likelihood function for the rates can be written as:
\begin{equation}\label{eq:likelihood_1}
    \mathcal{L}\left(\blambda\right) = \prod_{i,j:\ i < j} \left[\left(\prod_{e \in \mathcal{E}_{ij}} \lambda_{ij}(\tau_e)\right) \exp\left\{-\int_{0}^{T} \lambda_{ij}(t)dt\right\}\right],
\end{equation}
where, for simplicity, we have removed the superscript $(i,j)$ from $\tau_e^{(i,j)}$.
In the sections below we will specify the conditions that make the processes independent.

\subsection{Latent positions}\label{subsec:mod2}
Our goal is to embed the nodes of the network into a latent space, such that the latent positions are the primary driving factor behind the frequency and timing of the interactions between the nodes.
Crucially, since the time dimension is continuous and interactions can happen at any point in time, we aim at creating a modeling framework which also evolves continuously over time.
Thus, the fundamental assumption of our model is that, at any point in time, the Poisson rate function $\lambda_{ij}(t)$ is determined by the latent positions of the corresponding nodes, which we denote $\bz_i(t) \in \mathbb{R}^2$ and $\bz_j(t) \in \mathbb{R}^2$.

\begin{remark}
We assume that the number of dimensions of the latent space is equal to $2$, because the main interest of the proposed approach is in latent space visualization of the network. However,  we note that the generative model presented in this section can be easily extended to the case $\bz_i(t) \in  \mathbb{R}^d$, with $d>2$.
\end{remark}

To facilitate the inference task, the trajectories are assumed to be piece-wise linear curves, characterized by a number of user-defined
change points in the time dimension. 
These change points are in common across the trajectories of all nodes and they determine the points in time when the linear motions of the nodes change direction and speed.
This means that we must define a grid of the time dimension through $K$ change points
\[
0=\eta_1 < \eta_2 < \dots < \eta_{K-1} < \eta_K = T 
\]
that are common across all trajectories.
We stress that this modeling choice is only meant to restrict the variety of continuous trajectories that we may consider, as it allows us to use a tractable parametric structure while keeping a high flexibility regarding the trajectories that can be obtained. Also, we make the assumption that, within any two consecutive critical points, the speed at which any given node moves remains constant.
As a consequence, we only need to store the coordinates of the nodes at the change points, since all the intermediate positions can then be obtained with:
\begin{equation}\label{eq:trajectories_1}
    \bz_i\left((1-t)\eta_k + t\eta_{k+1}\right) = (1-t)\bz_i\left(\eta_k\right) + t\bz_i\left(\eta_{k+1}\right) \quad \quad \forall t \in [0,1]
\end{equation}
for any change points $\eta_k$ and $\eta_{k+1}$ and node $i$.

\subsubsection{Projection Model}\label{subsubsec:PM}
Similarly to the foundational paper of \textcite{hoff2002latent}, we introduce two possible characterizations of the rates through the latent positions: one is inspired by the projection model, the other is inspired by the distance model.
In our projection model, we assume that:
\begin{itemize}
    \item[i)] the latent positions are constrained within the first quadrant of $\mathbb{R}^2$, i.e. $\bz_i(t) \in \mathbb{R}^+ \times \mathbb{R}^+$, for all $t \in [0,T]$ and node $i$;
    \item[ii)] the rate is exactly equal to the dot product between the positions, i.e. \\
    $\lambda_{ij}(t)~=~\scalar{\bz_i(t)}{\bz_j(t)}$, for all $t \in [0,T]$ and nodes $i$ and $j$.
\end{itemize}
As a consequence, the further the nodes are positioned from the origin, the more frequent their interactions will be, especially towards other nodes that are aligned in the same direction\footnote{Indeed, we could say equivalently
\begin{equation*}
\lambda_{ij}(t) = \cos\left(\alpha_{ij}\right) \norm{\bz_i(t)} \norm{\bz_j(t)}
\end{equation*}
where $\alpha_{ij}\in [0,\frac{\pi}{2}]$ is the angle between $\bz_i(t)$ and $\bz_j(t)$ and $\norm{\cdot}$ is the Euclidean norm.
}.
Viceversa, we are not expecting frequent interactions for nodes that are located too close to the origin, or between pairs of nodes forming an angle which is close to $90$ degrees.
The restriction of the latent space to the first quadrant guarantees that the rate remains always non-negative, and we argue that this assumption does not diminish the flexibility nor interpretability of the model.


By taking the logarithm of Eq.~\eqref{eq:likelihood_1} and replacing $\lambda_{ij}(t)$, the log-likelihood for the projection model is
\begin{equation}\label{eq:log_lik_proj}
    \log \mathcal{L}(\bZ) = \sum_{i,j: i<j}\left\{\left(\sum_{e \in \mathcal{E}_{ij}}\log\left(\scalar{\bz_i(\tau_e)}{\bz_j(\tau_e)}\right)\right) - \int_{0}^T \scalar{\bz_i(t)}{\bz_j(t)}dt \right\}
\end{equation}

\begin{remark}
When expressing the Poisson rate $\lambda_{ij}(\cdot)$ as a function of the latent trajectories, we move from the unconditional independence assumption leading to Eq.~\eqref{eq:likelihood_1} to a conditional one. The timelines of events for all pairs of nodes are independent \emph{given} the latent trajectories.
\end{remark}

As proven in Appendix~\ref{sec:proof_projection}, the integral term appearing in Eq.~\eqref{eq:log_lik_proj} can be calculated analytically, thus leading to a closed form expression for the log-likelihood of the projection model.

\subsubsection{Distance Model}\label{subsubsec:DM}
Here, we introduce a version of the latent position model that uses the latent Euclidean distances between the nodes, rather than the dot products.
We argue that, in this context, the distance model provides both more flexibility and easier interpretability.
The simulation studies that we perform in Section~\ref{sec:exp_simu} will highlight these advantages.

In the distance model, we assume that:
\begin{equation}\label{eq:distance_1}
    \log \lambda_{ij}(t) = \beta - \|\bz_i(t) - \bz_j(t)\|^2
\end{equation}
where the last term corresponds to the squared Euclidean distance between nodes $i$ and $j$ at time $t$. We also introduced an intercept term $\beta$ as per the original LPM by \textcite{hoff2002latent}. The intercept term affects all nodes but extensions of the model where it become specific to each node can also be considered for both the projection and the distance models.
By taking the logarithm of Eq.~\eqref{eq:likelihood_1} and using Eq.~\eqref{eq:distance_1} the log-likelihood of the distance model becomes: \begin{equation}
  \log \mathcal{L}(\beta,\bZ) = \sum_{i,j} \left\{ \left(\sum_{e \in \mathcal{E}_{ij}} (\beta - \parallel \bz_{i}(\tau_e)  - \bz_{j}(\tau_e) \parallel^2) \right) - \int_{0}^T e^ {\beta - \parallel \bz_{i}(s)  - \bz_{j}(s) \parallel^2} ds \right\}
  \label{eq:log_l}
\end{equation}
Similarly to the projection model, also the above log-likelihood has a closed form, since the integral inside the brackets can be calculated analytically (proof in Appendix~\ref{sec:closed_form}).

\subsection{Penalized likelihood}\label{subsec:Penalty}
Due to the piece-wise linearity assumption in Eq.~\eqref{eq:distance_1}, for each node we only need to estimate its positions at times $\{\eta_k\}_{k \in [K]}$.
In order to avoid over fitting and obtain more interpretable and meaningful results, we use prior distributions over the latent positions at $\{\eta_k\}_{k \in [K]}$, as a means to penalize large velocities of the nodes in the latent space.

\paragraph{Projection Model:} as a penalization, in this case we require that:
\begin{equation}\label{eq:prior_proj_1}
    \scalar{\frac{\bz_i(\eta_{k+1})}{\|\bz_i(\eta_{k+1})\|}}{\frac{\bz_i(\eta_{k})}{\|\bz_i(\eta_{k})\|}} \stackrel{\perp}{\sim} \mathcal{TN}\left(\mu, (\eta_{k+1}-\eta_{k})\sigma^2,0,1 \right)\quad \quad \forall k=1,\dots,K-1
\end{equation}
for all nodes, where $\mu$ and $\sigma^2$ are hyper-parameters to be fixed by the user.
The above assumption states that the cosine of the angle between the position of a node at a change point, and the position of the same node at the following change point, follows a truncated Gaussian distribution (in  $[0,1]$).
 In the applications we choose $\mu \approx 1$ and a small value of $\sigma^2$ so that we require that any node rotates around the origin as little as necessary.

\paragraph{Distance Model.} We define Gaussian random walk priors on the critical points of the latent trajectories:
\begin{equation}\label{eq:prior_1}
\begin{split}
    &\bz_i(0) \stackrel{iid}{\sim}\mathcal{N}(0,\sigma_0^2 I_2) \\
    &\bz_i(\eta_{k+1})\ |\ \bz_i(\eta_k)\stackrel{\perp}{\sim} \mathcal{N}\left(\bz_i(\eta_k), (\eta_{k+1}-\eta_{k})\sigma^2I_2 \right)\quad \quad \forall k=1,\dots,K-1
\end{split}    
\end{equation}
for every node $i$ where  $I_2$ is the identity matrix of order two.
The equation above (with $\sigma^2=1$) would correspond to a Brownian motion process for the $i$-th latent trajectory, except that we would only observe it at the change points, where the latent positions are estimated.
However, as the number of change points increases, the prior that we specify tends to a scaled Brownian motion on the plane.
The parameters $\sigma_0^2$ and $\sigma^2$ are user-defined, hence, similarly to the projection model, the Gaussian priors can be used as penalizations.
In order to obtain sensible penalizations, we choose small values of the variance parameters, as to ensure that the nodes are scattered around the origin of the space, and that the speed of the nodes along the trajectories is not too large.
In this way, the nodes are forced to move as little as necessary, making the latent visualization of the network easier to read and interpret.

\begin{remark}
The likelihood function of the original latent distance model of \textcite{hoff2002latent} is not identifiable with respect to translations, rotations, and reflections of the latent positions. 
This is a challenging issue in a Bayesian setting that relies on sampling from the posterior distribution.
In fact, the posterior samples become non-interpretable, since the affine transformations may have occurred during the collection of the sample \parencite{shortreed2006positional}.
These non-identifiabilities are not especially relevant in an optimization setting, since usually the equivalent configurations of model parameters lead to the same qualitative results and interpretations.
However, a case for non-identifiability can be made for dynamic networks, since translations, rotations, and reflections can occur across time, thus affecting results and interpretation.
The penalizations that we introduce in this paper ensure that the nodes move as little as necessary, thus disfavouring any rotations, translations and reflections of the space. As a consequence, the penalizations directly address the identifiability issues and the latent point process remains comparable across time.
\end{remark}

\section{Inference}\label{sec:inference}
In this section, we discuss the inference for the distance model described in Section~\ref{subsubsec:DM}, but an analogous procedure is considered for the projection model.

Recalling that we work with undirected graphs, and thanks to Eq.~\eqref{eq:prior_1}, the log-likelihood is
\begin{equation}\label{eq:loss_sgd}
\begin{split}
\log \mathcal{L}(\beta, \bZ) 
&= \sum_{i=1}^{n} \left\{ \frac{1}{2}\sum_{\substack{j=1 \\ j\neq i}}^n \left[\left(\sum_{e \in \mathcal{E}_{ij}} (\beta - \parallel \bz_{i}(\tau_e)  - \bz_{j}(\tau_e) \parallel^2) \right) - \int_{0}^T e^ {\beta - \parallel \bz_{i}(s)  - \bz_{j}(s) \parallel^2} ds\right]  \right.\\
&\hspace{1cm}\left. - \frac{1}{2\sigma^2} \sum_{k=1}^K \norm{\bz_i(\eta_{k}) - \bz_i(\eta_{k-1})}^2 - \frac{1}{2\sigma_0^2} \norm{\bz_i(\eta_{0})}^2 \right\} + C, \\
\end{split}
\end{equation}
where $C$ is a constant term that does not depend on  ($\bZ$, $\beta$) and the integral can be explicitly computed as shown in Appendix~\ref{sec:closed_form}.
Since this log-likelihood has a closed form, we implement it and rely on automatic differentiation~\parencite{griewank1989automatic,baydin2018automatic} to maximize it numerically, with respect to $(\beta, \bZ)$, via gradient ascent. 

We have implemented the estimation algorithm and visualization tools in a software repository, called \texttt{CLPM}, which is publicly available \parencite{clpm_github}.

Moreover, as it can be seen in Eq.~\eqref{eq:loss_sgd}, the log-likelihood is additive in the number of nodes. Potentially, this remark allows one to speed up the inference of the model parameters by means of stochastic gradient ascent~\parencite{bottou2010large}. Indeed, let us introduce $\psi_1, \dots, \psi_n$ such that
\begin{equation*}
\begin{split}
\psi_i(\beta, \bZ) :&=  \frac{1}{2}\sum_{\substack{j  = 1 \\ j \neq i}}^n \left[\left(\sum_{e \in \mathcal{E}_{ij}}(\beta - \parallel \bz_{i}(\tau_e)  - \bz_{j}(\tau_e) \parallel^2) \right) - \int_{0}^T e^ {\beta - \parallel \bz_{i}(s)  - \bz_{j}(s) \parallel^2} ds\right] \\
&- \frac{1}{2\sigma^2} \sum_{k=1}^K \norm{\bz_i(\eta_{k}) - \bz_i(\eta_{k-1})}^2 - \frac{1}{2\sigma_0^2} \parallel \bz_i(\eta_0)\parallel^2
\end{split}
\end{equation*}
and a discrete random variable $\Psi(\beta, \bZ)$ such that
\[
\pi := \pi_i := \mathbb{P}\{\Psi (\beta, \bZ)=\psi_i(\beta, \bZ)| \bZ\} = \frac{1}{n}, \qquad\forall  i \in \{1,\dots, n\}
\]
where we stress that the above probability  is conditional to $\bZ$ and given the model parameter $\beta$. Then, let us denote $\nabla$ the gradient operator with respect to $(\beta, \bZ)$  and $\mathbb{E}_{\pi}$ the expectation taken with respect tho the probability measure $\pi$ introduced above (and hence with $\bZ$ given). Then, we have the following 
\begin{proposition}
$n\nabla \Psi{(\beta,\bZ)}$ is an unbiased estimator of $\nabla \log \mathcal{L}(\beta, \bZ)$.
\end{proposition}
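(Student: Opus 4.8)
The plan is to reduce the statement to two elementary facts: linearity of the gradient operator, and linearity of expectation under the (finite, parameter-free) sampling distribution $\pi$. The only genuine content is bookkeeping: checking that the additive constant $C$ plays no role, that $\nabla$ commutes with the finite sum $\sum_i$, and that the sampling probabilities $\pi_i = 1/n$ do not depend on $(\beta,\bZ)$, so that differentiation does not interact with the law of $\Psi$.

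First I would differentiate Eq.~\eqref{eq:loss_sgd}. Since $C$ does not depend on $(\beta,\bZ)$ we have $\nabla C = 0$, and since the sum over $i$ is finite we may exchange $\nabla$ and $\sum_i$ to obtain $\nabla \log \mathcal{L}(\beta,\bZ) = \sum_{i=1}^n \nabla \psi_i(\beta,\bZ)$. Each $\psi_i$ is differentiable in $(\beta,\bZ)$ because the integral term admits the closed form derived in Appendix~\ref{sec:closed_form}, so the right-hand side is a well-defined vector. Next I would identify the random vector $\nabla \Psi(\beta,\bZ)$: because $\Psi(\beta,\bZ)$ equals $\psi_i(\beta,\bZ)$ on an event of probability $\pi_i$, and because that event (and its probability $1/n$) does not depend on $(\beta,\bZ)$, differentiating on each such event gives that $\nabla \Psi(\beta,\bZ)$ equals $\nabla \psi_i(\beta,\bZ)$ with probability $\pi_i = 1/n$. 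Taking the expectation with respect to $\pi$ — a finite sum, hence no integrability or interchange-of-limit concerns — yields
\[
\mathbb{E}_{\pi}\bigl[\nabla \Psi(\beta,\bZ)\bigr] = \sum_{i=1}^n \pi_i\, \nabla \psi_i(\beta,\bZ) = \frac{1}{n}\sum_{i=1}^n \nabla \psi_i(\beta,\bZ) = \frac{1}{n}\,\nabla \log \mathcal{L}(\beta,\bZ),
\]
and multiplying through by $n$ gives $\mathbb{E}_{\pi}\bigl[\,n\nabla\Psi(\beta,\bZ)\,\bigr] = \nabla \log \mathcal{L}(\beta,\bZ)$, which is exactly the claim.

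There is no substantive obstacle here; the one point that deserves an explicit sentence is the parameter-independence of $\pi$. Had the $\pi_i$ been allowed to depend on $\bZ$ (as in importance-weighted or REINFORCE-type estimators), differentiating the law of $\Psi$ would have produced an additional score-function term and $n\nabla\Psi$ would fail to be unbiased; it is precisely the uniform choice $\pi_i \equiv 1/n$ that makes the naive estimator work. I would close by remarking that the same argument applies verbatim to the projection-model log-likelihood of Eq.~\eqref{eq:log_lik_proj}, using its own additive per-node decomposition.
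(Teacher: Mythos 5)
Your proof is correct and takes essentially the same route as the paper's one-line argument: write $\mathbb{E}_{\pi}[n\nabla\Psi(\beta,\bZ)] = \sum_{i=1}^n \nabla\psi_i(\beta,\bZ)$ and pull the gradient outside the finite sum to recover $\nabla\log\mathcal{L}(\beta,\bZ)$. Your additional bookkeeping (that $\nabla C = 0$ and that the uniform weights $\pi_i = 1/n$ do not depend on $(\beta,\bZ)$, so no score-function term arises) is a sound elaboration of points the paper leaves implicit, not a different proof.
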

\begin{proof}
\[
\mathbb{E}_{\pi}\left[ n\nabla \Psi(\beta,\bZ) \right] = \sum_{i=1}^n \nabla \psi_i{(\beta,\bZ)} = \nabla {\sum_{i=1}^n \psi_i}(\beta,\bZ) = \nabla {\log \mathcal{L}(\beta,\bZ)},
\]
where the last equality follows from the additivity of the gradient operator.
\end{proof}
The above proposition allows us to sample (subsets of) nodes uniformly at random, with re-injection, and use each sample (a.k.a. mini-batch) to update the model parameters via stochastic gradient ascent, as shown in \textcite{bottou2010large}.

\section{Experiments: synthetic data}\label{sec:exp_simu}
In this section we illustrate applications of our methodology on artificial data. 
We propose two types of frameworks: in the first one, we consider dynamic block structures (which involve the presence of communities, hubs,  and isolated points). In this case, our aim is to inspect how the network dynamics are captured by \texttt{CLPM}. In the second framework, we generate data using the distance \texttt{CLPM} and we aim at recovering the simulated trajectories for each node.

\subsection{Dynamic block structures}\label{subsec:simu_1}
\paragraph{Simulation study 1.} In this first experiment, we use a data generative mechanism that relies on a dynamic blockmodel structure for instantaneous interactions \parencite{corneli2017multiple}.
We specifically focus on a special case of a dynamic stochastic blockmodel where we can have community structure, but we cannot have disassortative mixing, i.e. the rate of interactions within a community cannot be smaller than the rate of interactions between communities.
In this framework, the dynamic stochastic blockmodel approximately corresponds to a special case of our distance \texttt{CLPM}, whereby the nodes clustered together essentially are located nearby.

In the generative framework that we consider the only node-specific information is the cluster label, hence, this structure is not as flexible as the \texttt{CLPM} as regards modeling node's individual behaviours.
So, our goal here is to obtain a latent space visualization for these data, and to ensure that \texttt{CLPM} can accurately capture and highlight the presence of communities. 
An aspect of particular importance is how \texttt{CLPM} reacts to the creation and dissolution of communities over time: for this purpose, our generated data includes changes in the community structure over time.

For this setup, we consider the time interval $[0,40]$ (for simplicity we use seconds as a unit measure of time), and divide this into $4$ consecutive time segments of $10$ seconds each.
In each of the $4$ time segments, $60$ nodes are arranged into different community structures.
Thus, any changes in community structure are synchronous for all nodes and they happen at the endpoints of a time segment.
The rate of interactions between any two nodes is determined by their group allocations in that specific time segment.
The rate remains constant in each time segment, so that we effectively have a piece-wise homogeneous Poisson process over time, for each dyad.

We denote with $X^{(s)} \in \mathbb{N}^{N\times N}$ a simulated weighted interaction matrix which counts how many interactions occur in the $s$-th time segment for each dyad:
\[
X^{(s)}_{ij} | \mathbf{C} \sim \mathcal{P}\left(\theta_{\mathbf{c}_i \mathbf{c}_j}^{(s)} \right), 
\]
where $\mathcal{P}(\cdot)$ indicates the Poisson probability mass function, and $\mathbf{C}$ is a latent vector of length $N$ indicating the cluster labels of each of the nodes. 
Once we know the number of interactions for each dyad and each segment, the timing of these interactions can be sampled from a uniform distribution in the respective time segment.
More in detail, the rate parameters are characterized as follows:
\begin{itemize}
    \item[i)] in the time segment $[0,10[$, the expected number of interactions is the same for every pair of nodes: $\theta^{(1)}_{\mathbf{c}_1 \mathbf{c}_j} = 1$, for all $i$ and $j$;
    \item[ii)] in the time segment $[10,20[$, three communities emerge, in particular $\theta^{(2)}_{11} = 10$, $\theta^{(2)}_{22} = 5$ and $\theta^{(2)}_{33} = 1$, whereas the rate for any two nodes in different communities is $1$; 
    \item[iii)] in the time segment $[20,30[$, the first community splits and each half joins a different existing community. The two remaining communities are characterized by $\theta^{(3)}_{11} = \theta^{(3)}_{22} = 5$. Again, any two nodes in different communities interact with rate $1$;
    \item[iv)] in the time segment $[30,40]$, we are back to the same structure as in i).
\end{itemize}
Throughout the simulation, node $1$ always behaves as a hub, and node $60$ is always isolated.
This means that node $1$ interacts with rate $10$ at all times with any other node, whereas node $60$ interacts with rate $0.01$ at all times with any other node, regardless of any cluster label.

In Figure \ref{fig:sim_a_projection_results} we show a collection of snapshots at some critical time points, for the projection model.
\begin{figure}
     \centering
     \begin{subfigure}[b]{0.495\textwidth}
         \centering
         \includegraphics[width=\textwidth]{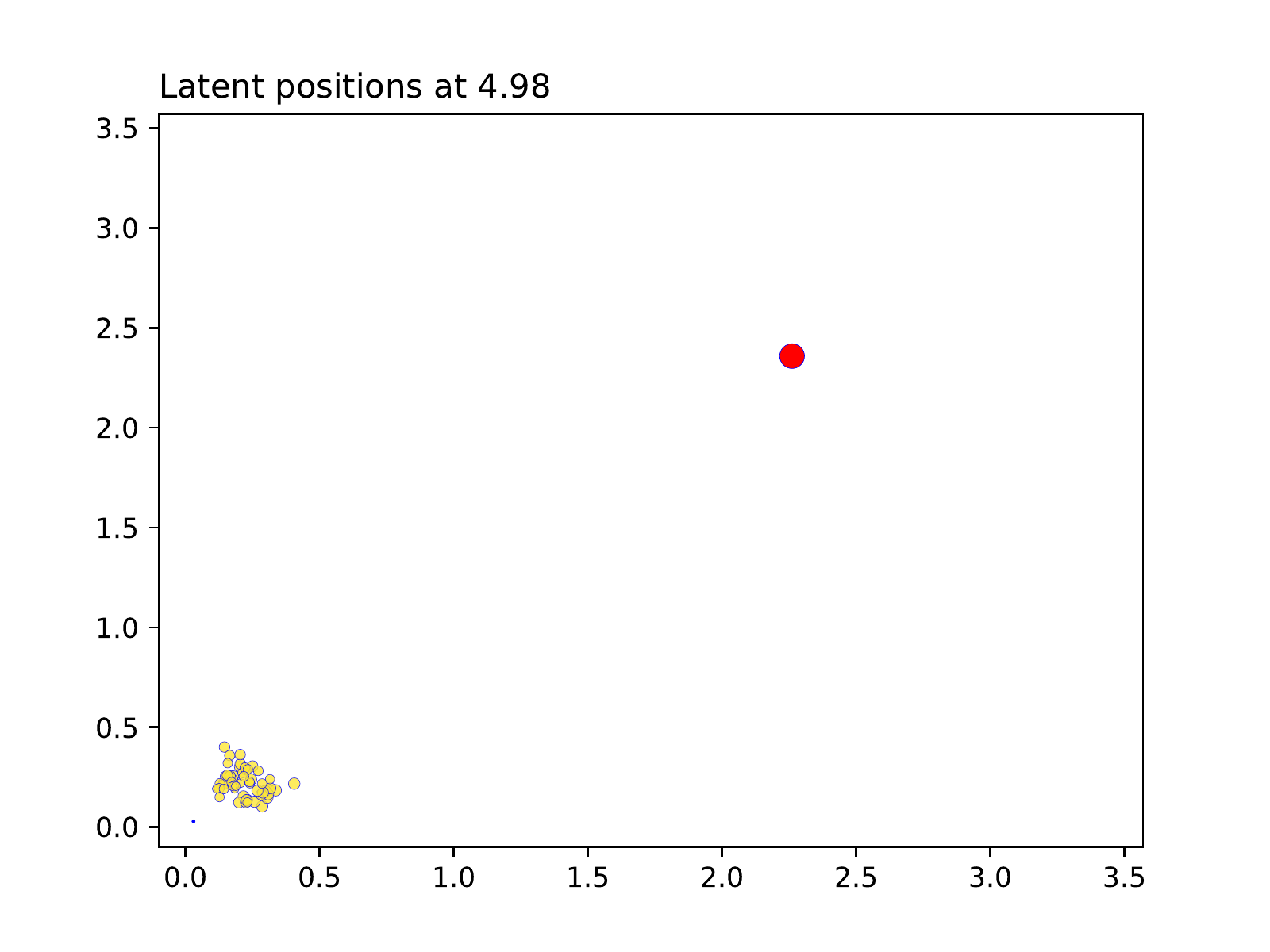}
     \end{subfigure}
     \hfill
     \begin{subfigure}[b]{0.495\textwidth}
         \centering
         \includegraphics[width=\textwidth]{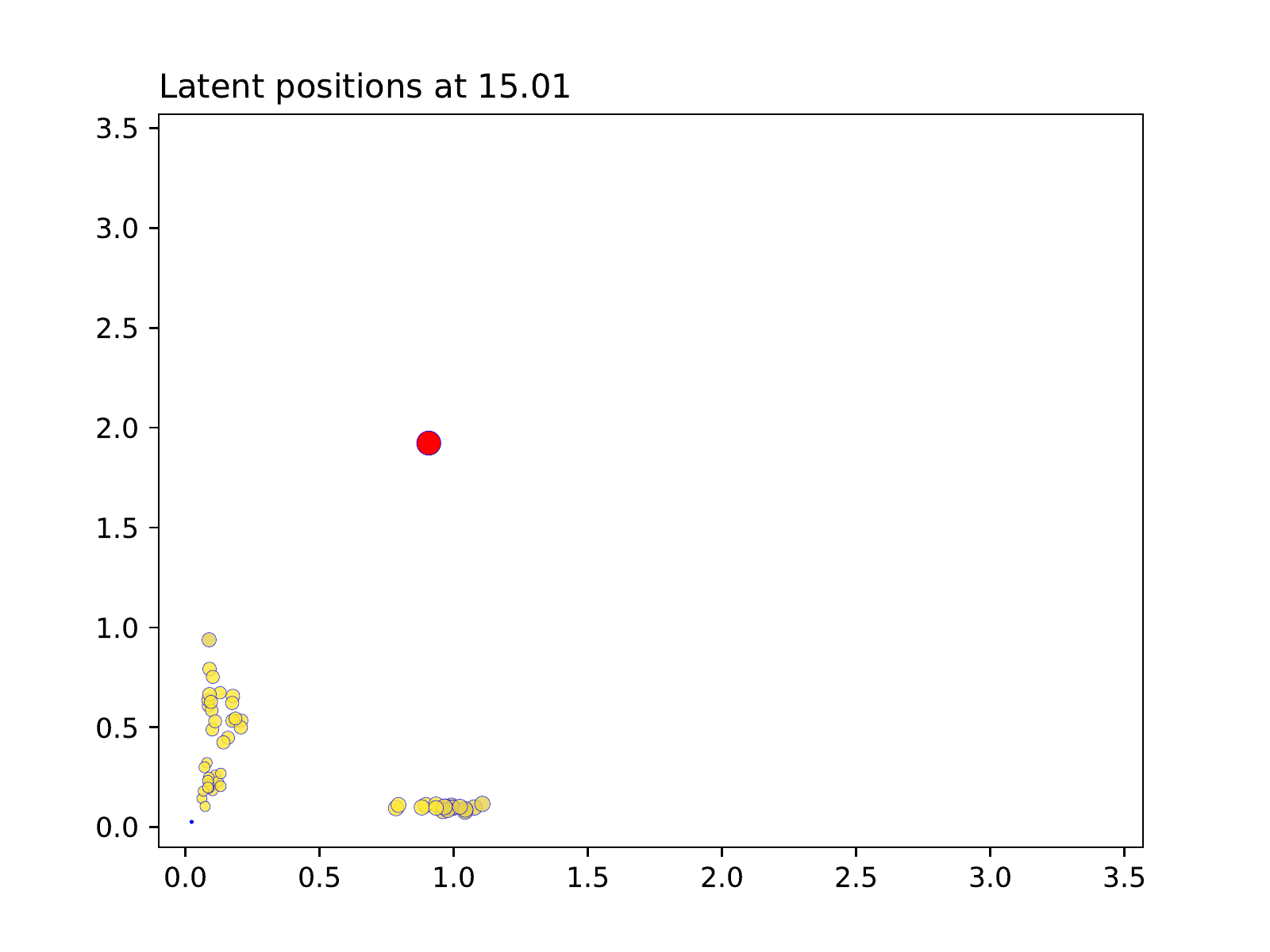}
     \end{subfigure}
     \centering
     \begin{subfigure}[b]{0.495\textwidth}
         \centering
         \includegraphics[width=\textwidth]{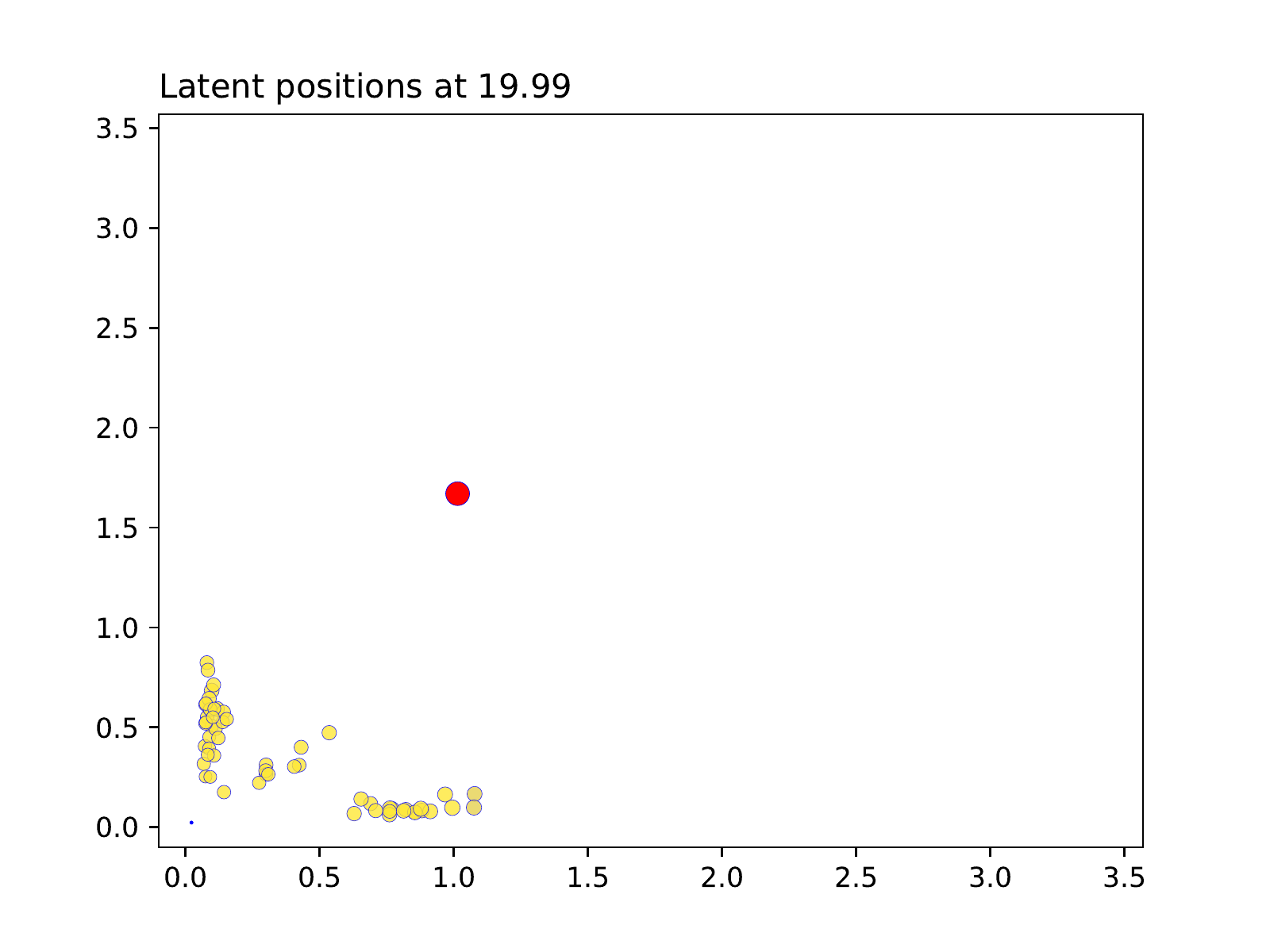}
     \end{subfigure}
     \hfill
     \begin{subfigure}[b]{0.495\textwidth}
         \centering
         \includegraphics[width=\textwidth]{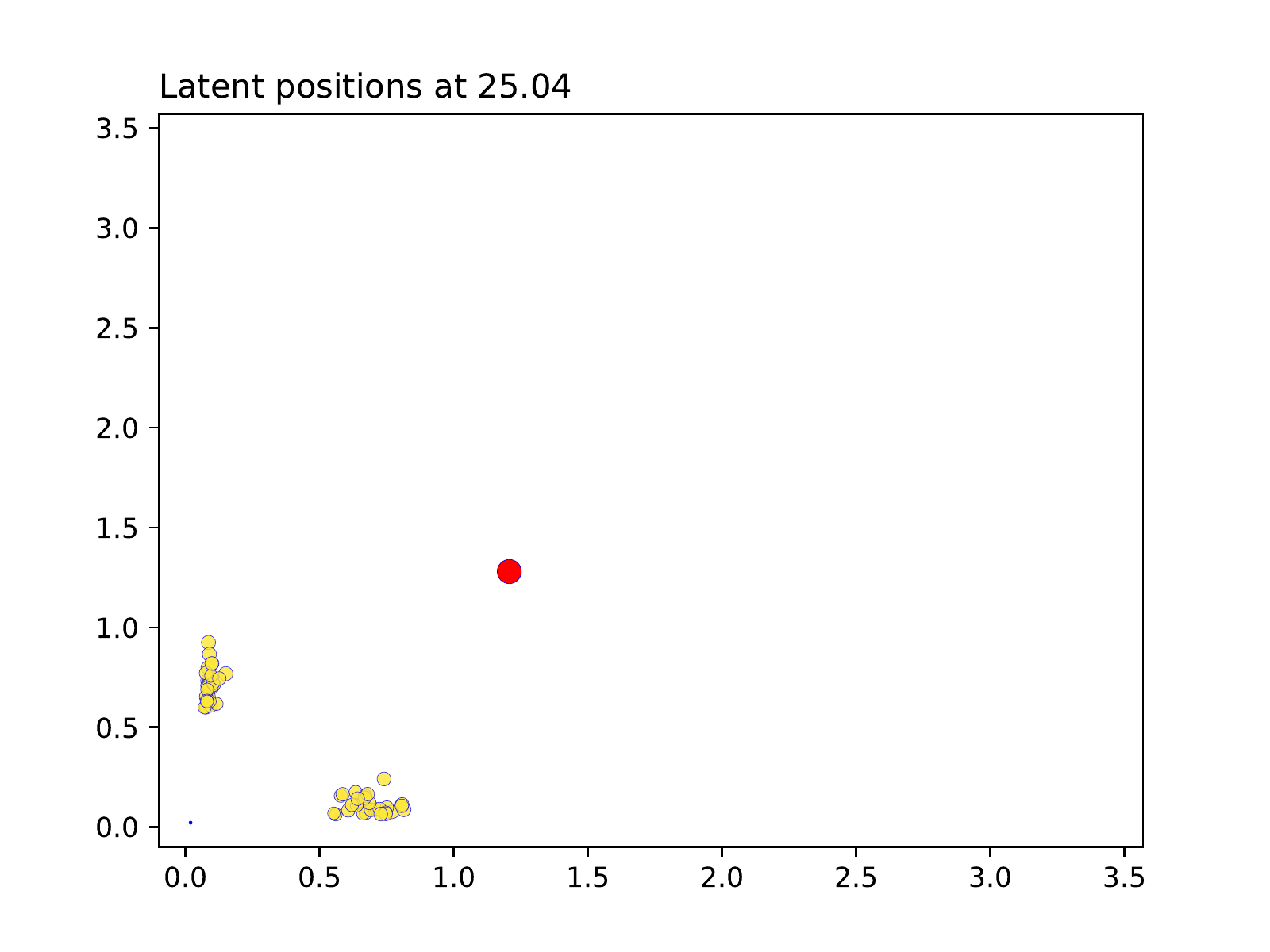}
     \end{subfigure}
     \begin{subfigure}[b]{0.495\textwidth}
         \centering
         \includegraphics[width=\textwidth]{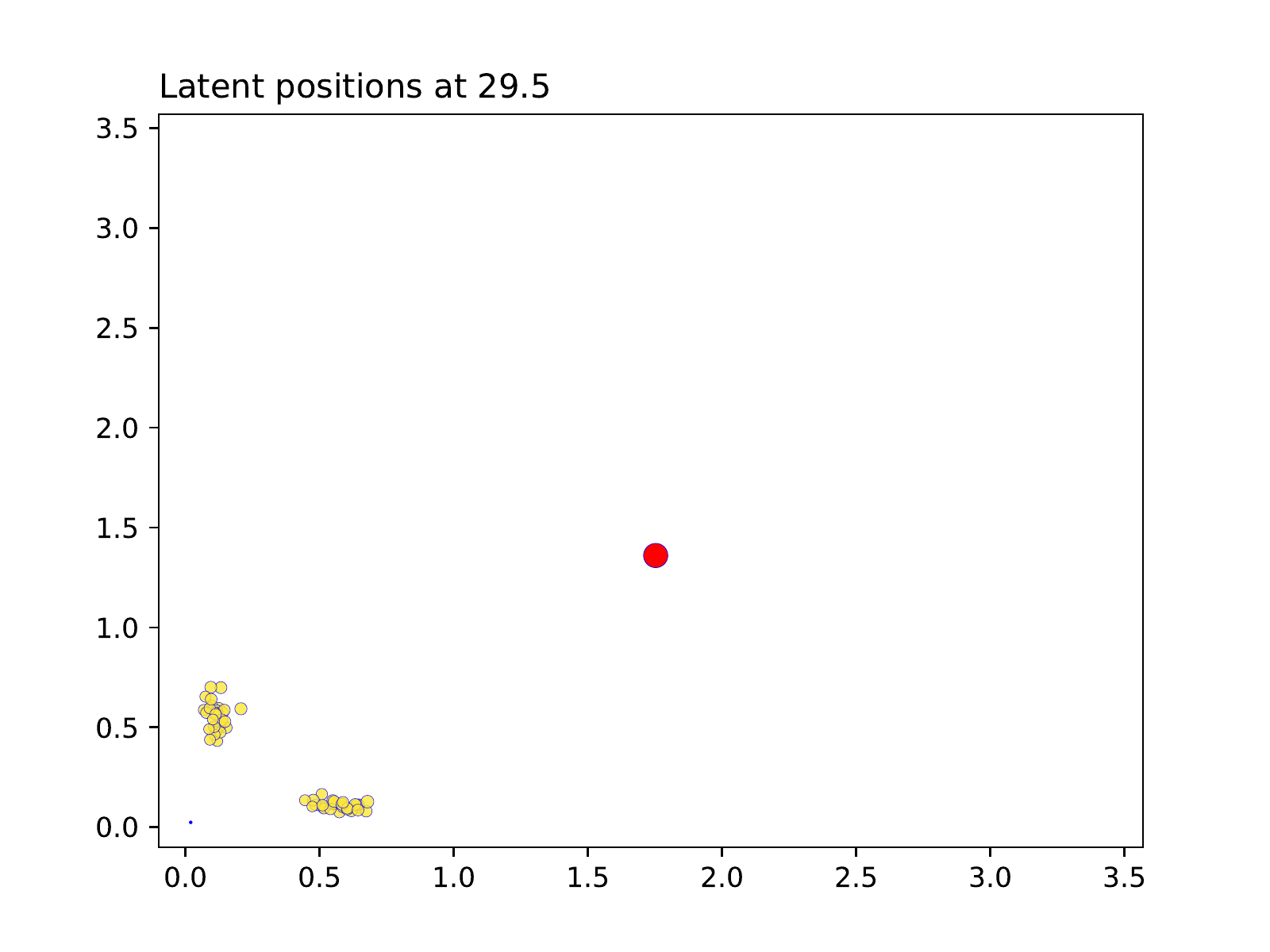}
     \end{subfigure}
     \hfill
     \begin{subfigure}[b]{0.495\textwidth}
         \centering
         \includegraphics[width=\textwidth]{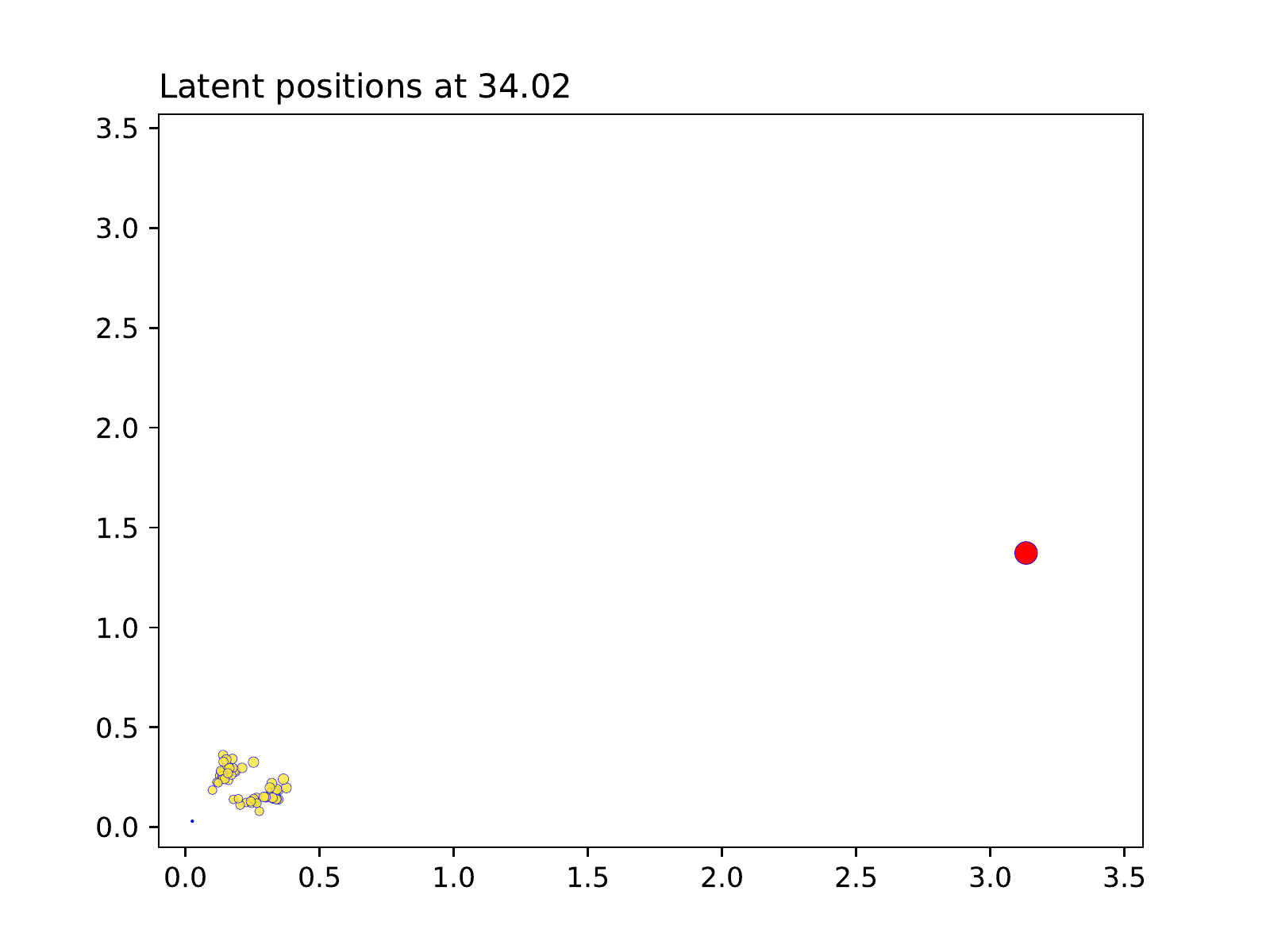}
     \end{subfigure}
        \caption{\textbf{Simulation study 1}: snapshots for the projection model.}
        \label{fig:sim_a_projection_results}
\end{figure}
The full videos of the results are provided in the code repository.
The main observation is that the communities are clearly captured at all times, and they are clearly visually separated.
While the two clusters with a stronger community structure are almost aligned to the axes, the non-community in the second time segment, which has low interaction rate, is instead positioned more centrally and it is more dispersed, but still separated from the others.
The hub is always located very far from the origin and from other points, since this guarantees a large dot product value with respect to all other nodes.
By contrast, the isolated node is always located at the origin of the space.

 Figure \ref{fig:sim_a_distance_results} shows instead the snapshots for the distance model.
\begin{figure}
     \centering
     \begin{subfigure}[b]{0.495\textwidth}
         \centering
         \includegraphics[width=\textwidth]{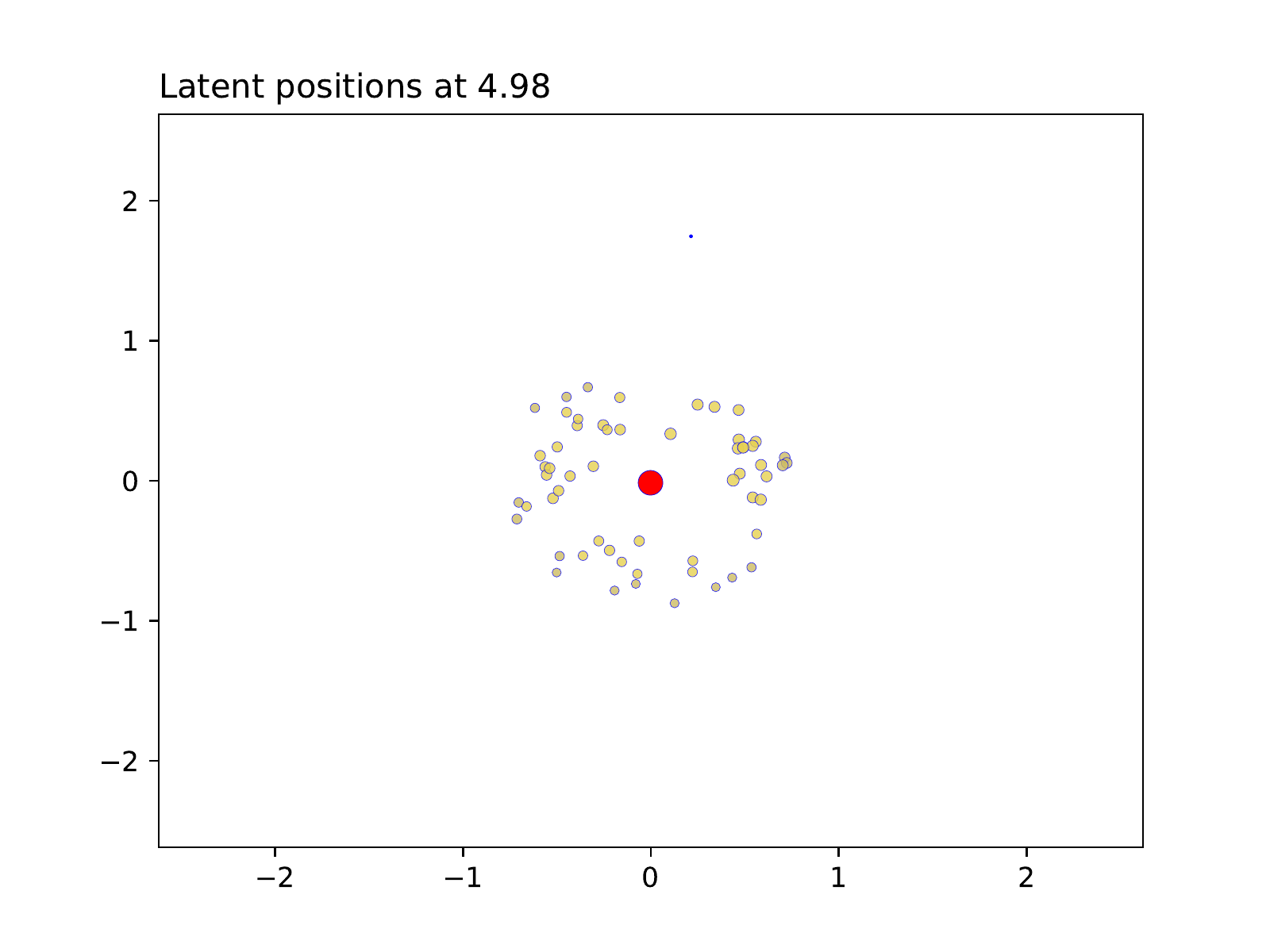}
     \end{subfigure}
     \hfill
     \begin{subfigure}[b]{0.495\textwidth}
         \centering
         \includegraphics[width=\textwidth]{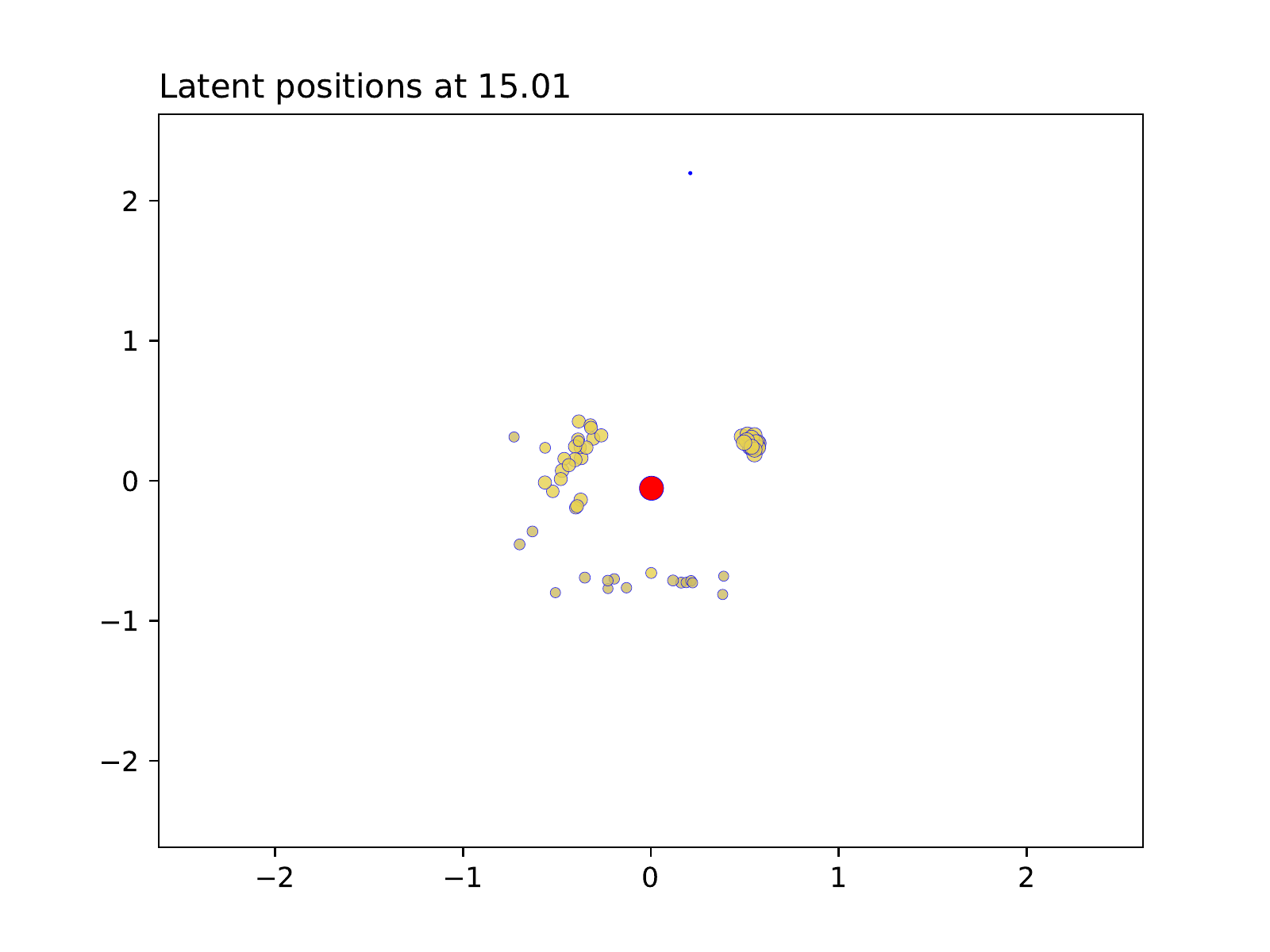}
     \end{subfigure}
     \centering
     \begin{subfigure}[b]{0.495\textwidth}
         \centering
         \includegraphics[width=\textwidth]{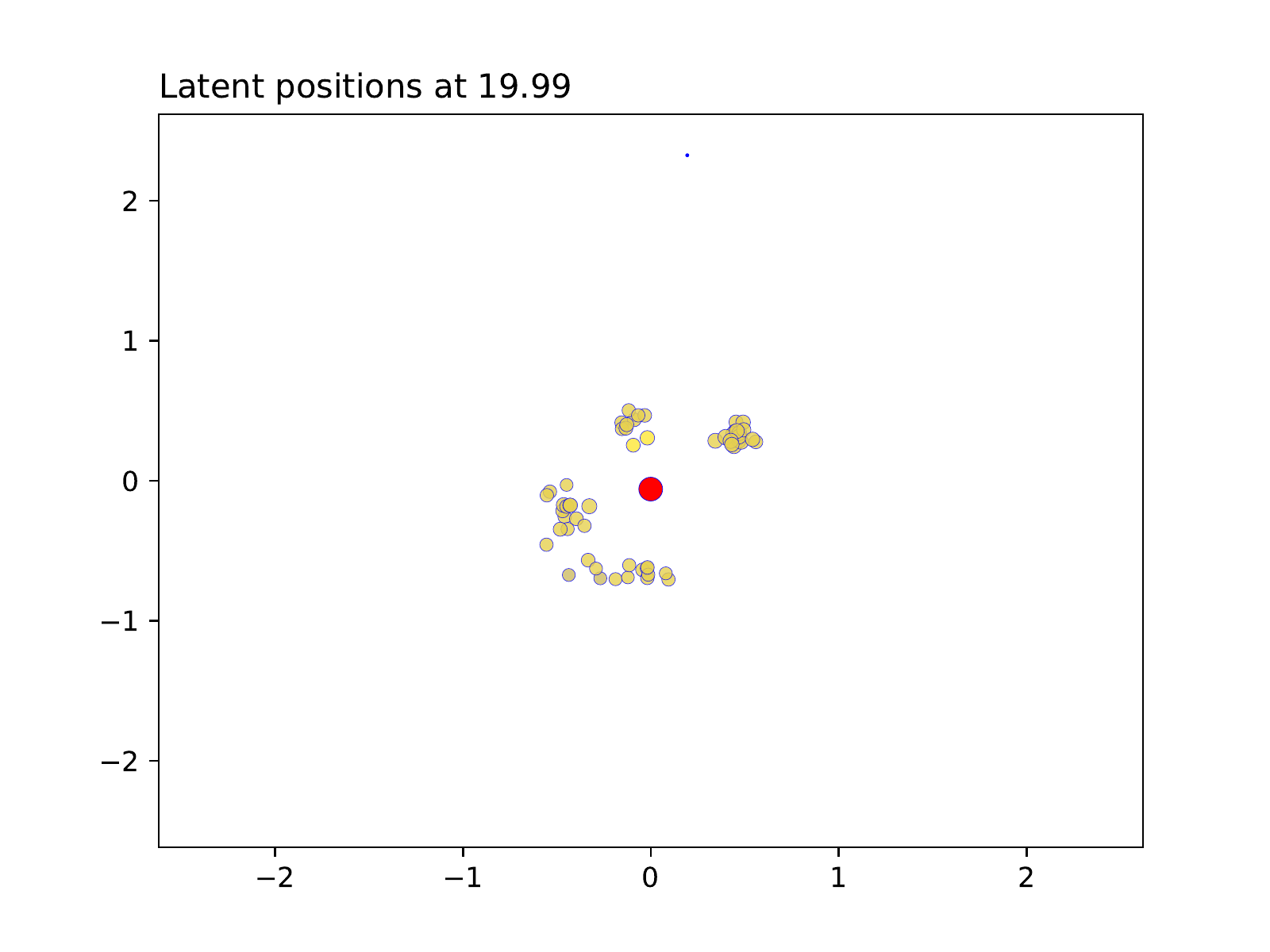}
     \end{subfigure}
     \hfill
     \begin{subfigure}[b]{0.495\textwidth}
         \centering
         \includegraphics[width=\textwidth]{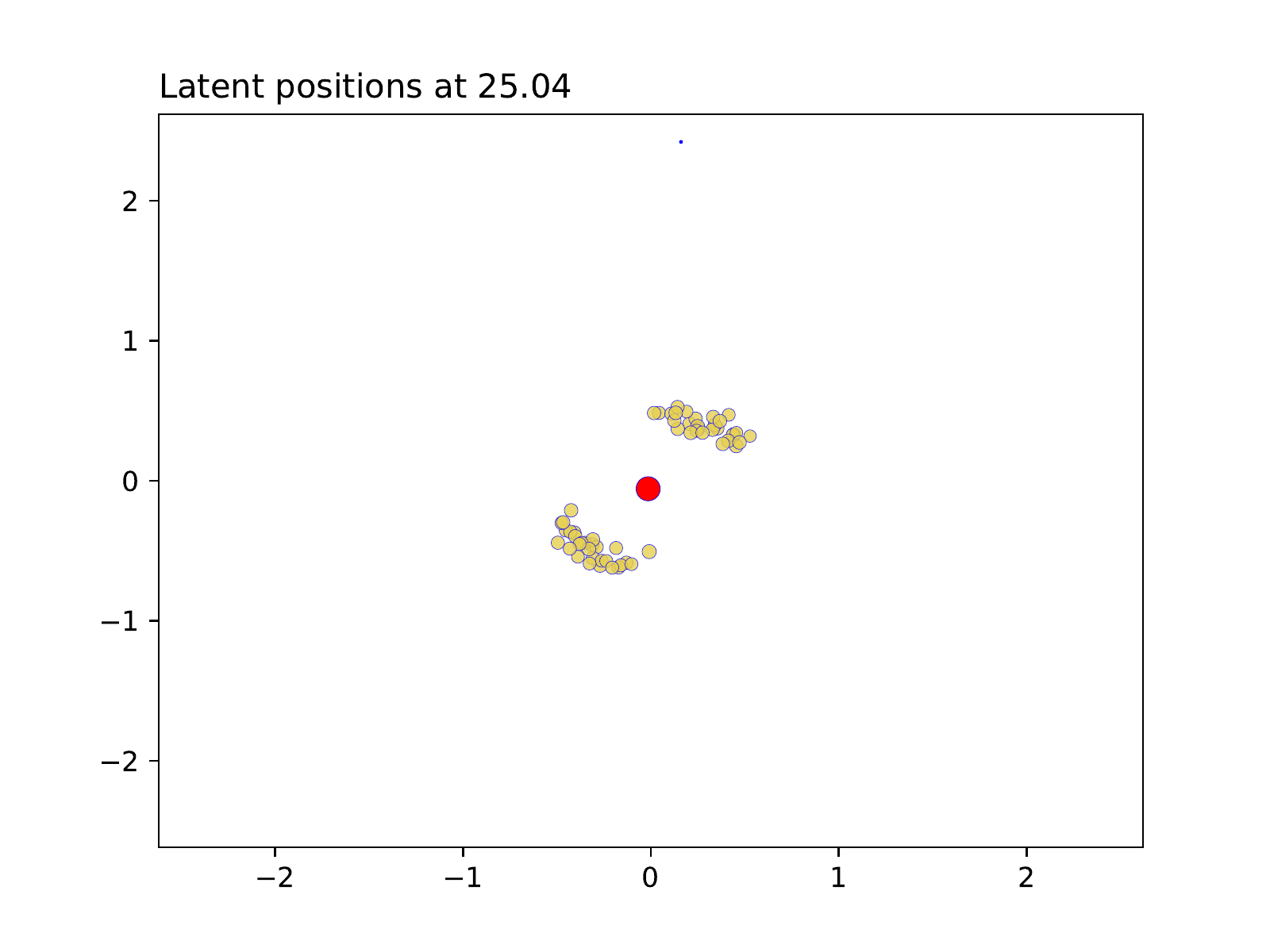}
     \end{subfigure}
     \begin{subfigure}[b]{0.495\textwidth}
         \centering
         \includegraphics[width=\textwidth]{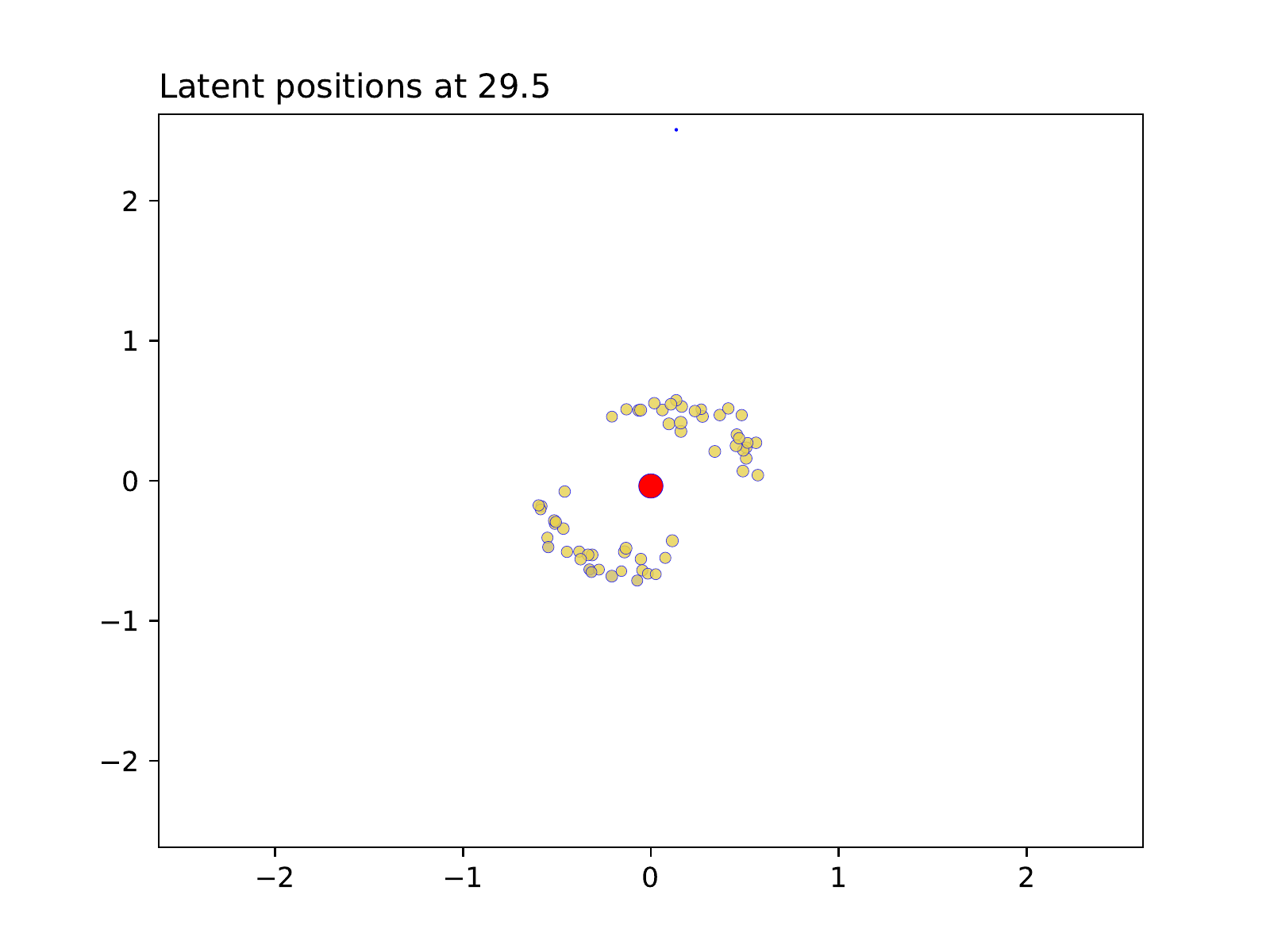}
     \end{subfigure}
     \hfill
     \begin{subfigure}[b]{0.495\textwidth}
         \centering
         \includegraphics[width=\textwidth]{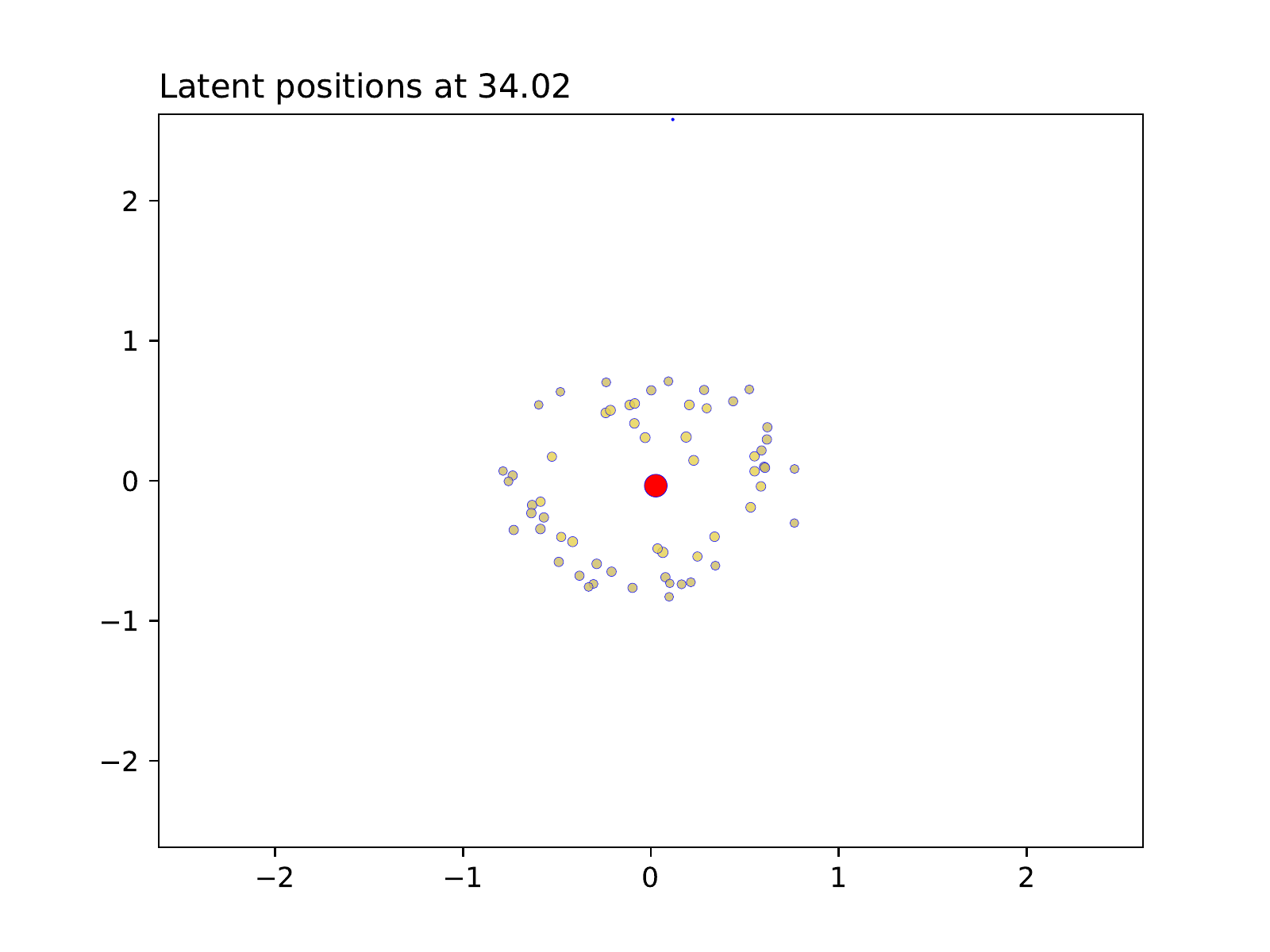}
     \end{subfigure}
        \caption{\textbf{Simulation study 1}: snapshots for the distance model.}
        \label{fig:sim_a_distance_results}
\end{figure}
In this case, the clusters are clearly separated at all times.
The cluster with a strong community structure is less dispersed than the clusters with a weaker community structure. 
The hub is constantly positioned in the centre of the space, as to minimize the distance from all of the nodes at the same time. 
The isolated node is instead wandering in the outskirts of the latent social space.
The creation and dissolution of communities only happens right at the proximity of start/end of each time segment.

\begin{remark}
One main difference between the projection model and the distance model is that, in two dimensions, the distance model seems more suited for representing a diverse community structure.
The main reason is that the projection model requires nodes of different communities (which have few interactions with each other) to be distributed along the axes, respectively. 
Indeed , this guarantees that they are close to perpendicular, hence facilitating a strong separation between communities. Now, in dimension two, this clearly can only happen with no more than two communities for the projection model. By contrast, the distance model does not have this limitation, and it can more easily accommodate a large number of completely separated communities.
We use this argument to favour the use of the distance \texttt{CLPM} in two dimensions, in our applications.
However, we note that the projection model may be of interest in higher dimension ($d>>2$) for purposes other than visualization (e.g. clustering or sub-space tracking).
\end{remark}

Technical details regarding the simulation's parameters, including penalization terms and number of change points, can be consulted on the CLPM code repository.

\ignorespacesafterend
\paragraph{Simulation study 2.}\label{subsec:simu_C}
In the second simulation study, we use again a blockmodel structure, however, in this case we approximate a continuous time framework by defining very short time segments, and letting the communities change from one time segment to the next. 
Since creations and dissolutions of communities would be unlikely in such a short period of time, we keep the community memberships unchanged, and we progressively increase the cohesiveness of the communities.
This means that we progressively increase the rates of interactions between any pairs of nodes that belong to the same community, while keeping any other rate constant.
The rate of interactions within each community starts at value $1$ and increases in a step-wise fashion over $40$ segments, up to the value $5$. The time interval is $[0,40]$, and we consider two communities.
Half way through the simulation, a special node moves from one community to the other.

For the projection \texttt{CLPM}, we show the results in Figure \ref{fig:sim_c_projection_results}, whereas Figure \ref{fig:sim_c_distance_results} shows the results for the distance model.
\begin{figure}
     \centering
     \begin{subfigure}[b]{0.495\textwidth}
         \centering
         \includegraphics[width=\textwidth]{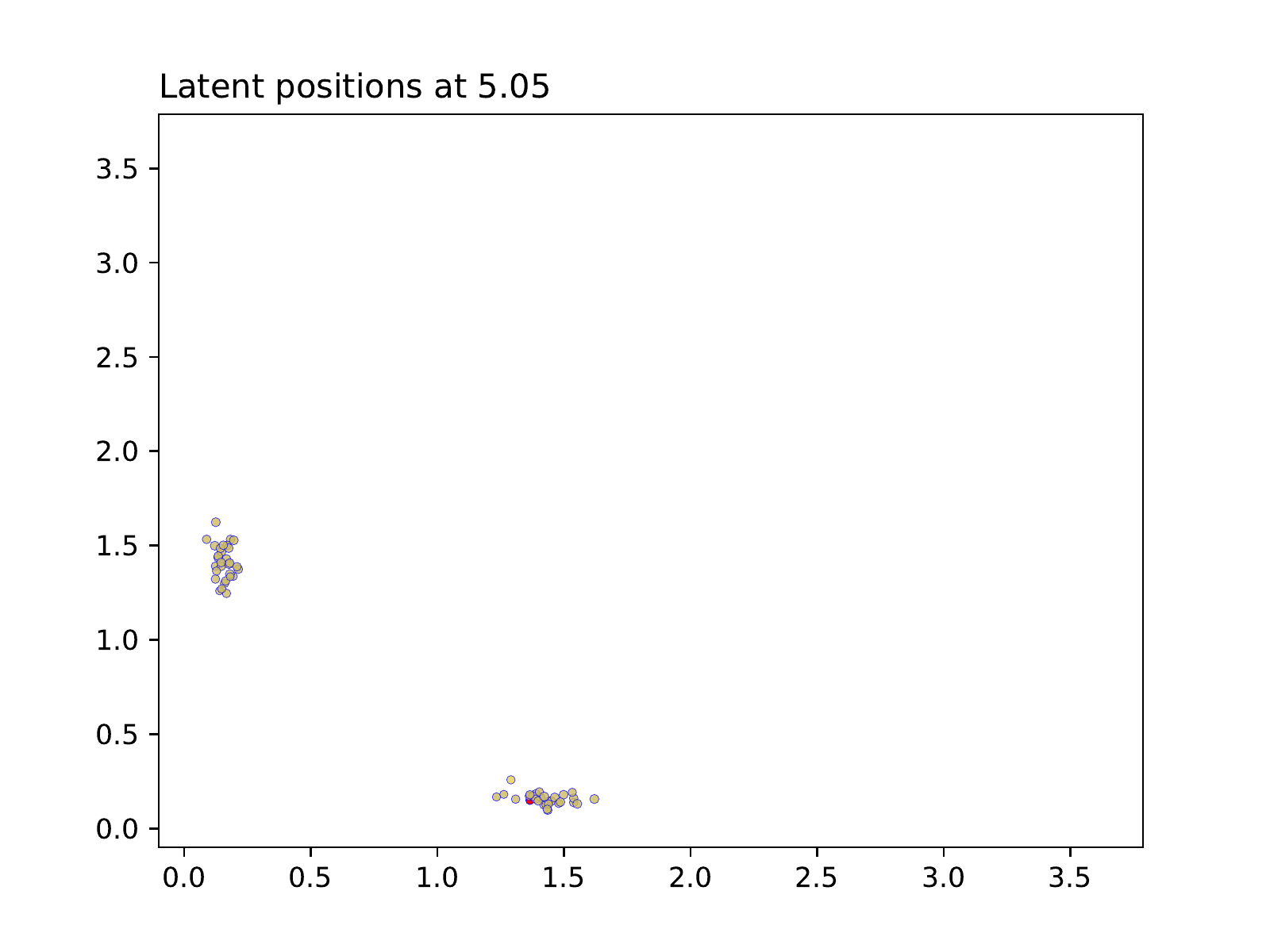}
     \end{subfigure}
     \hfill
     \begin{subfigure}[b]{0.495\textwidth}
         \centering
         \includegraphics[width=\textwidth]{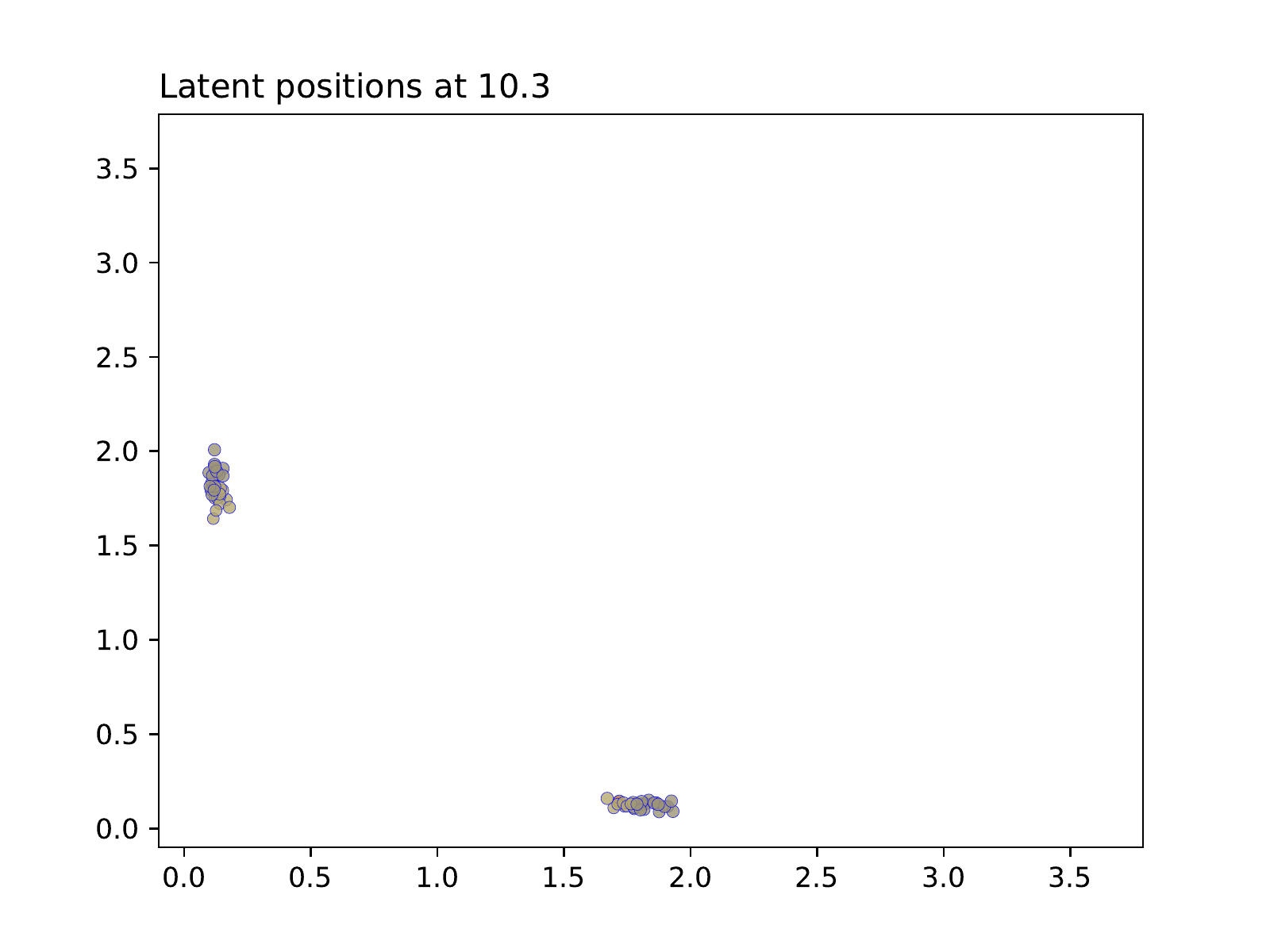}
     \end{subfigure}
     \centering
     \begin{subfigure}[b]{0.495\textwidth}
         \centering
         \includegraphics[width=\textwidth]{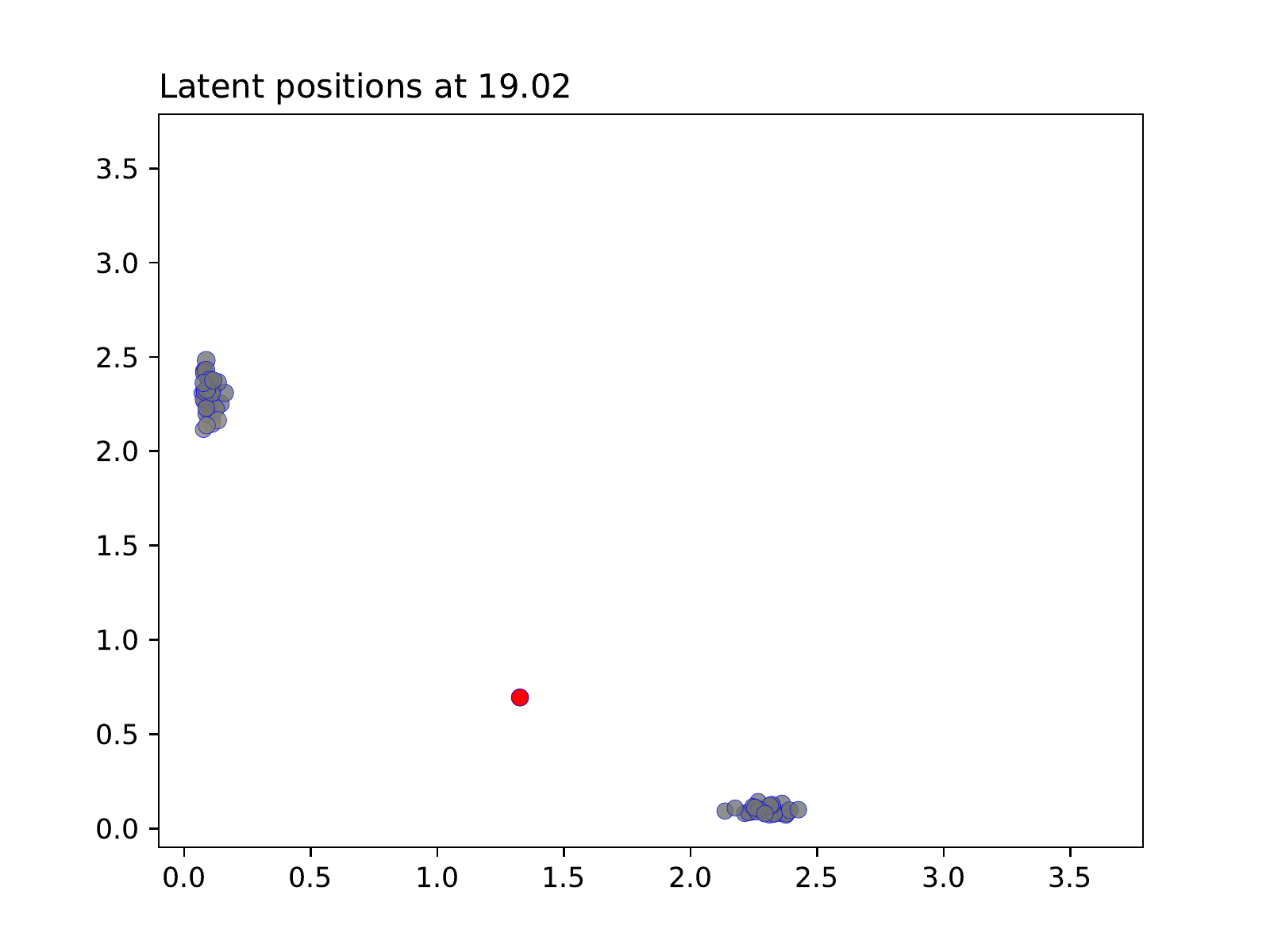}
     \end{subfigure}
     \hfill
     \begin{subfigure}[b]{0.495\textwidth}
         \centering
         \includegraphics[width=\textwidth]{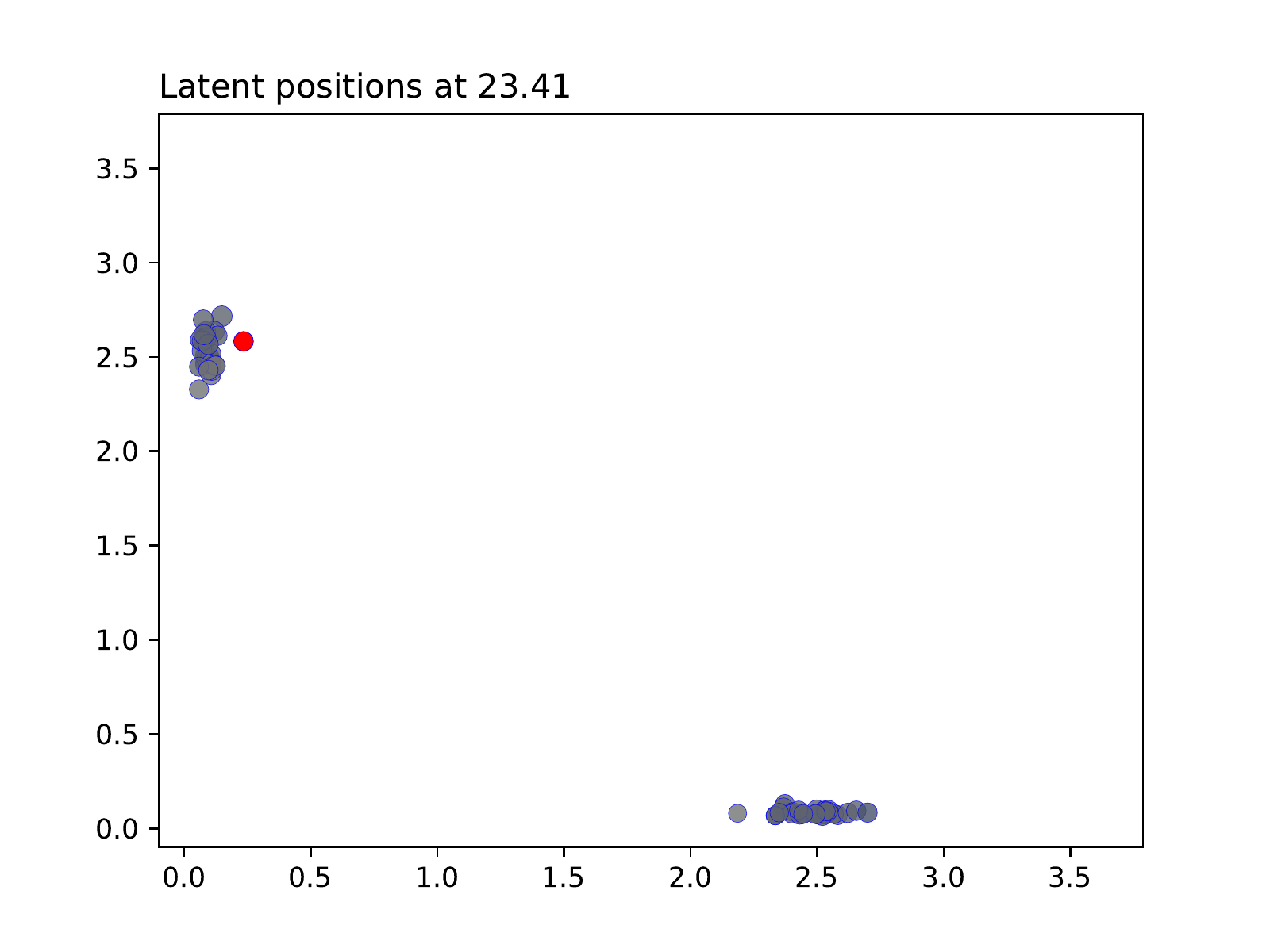}
     \end{subfigure}
        \caption{\textbf{Simulation study 2}: snapshots for the projection model.}
        \label{fig:sim_c_projection_results}
\end{figure}
\begin{figure}
     \centering
     \begin{subfigure}[b]{0.495\textwidth}
         \centering
         \includegraphics[width=\textwidth]{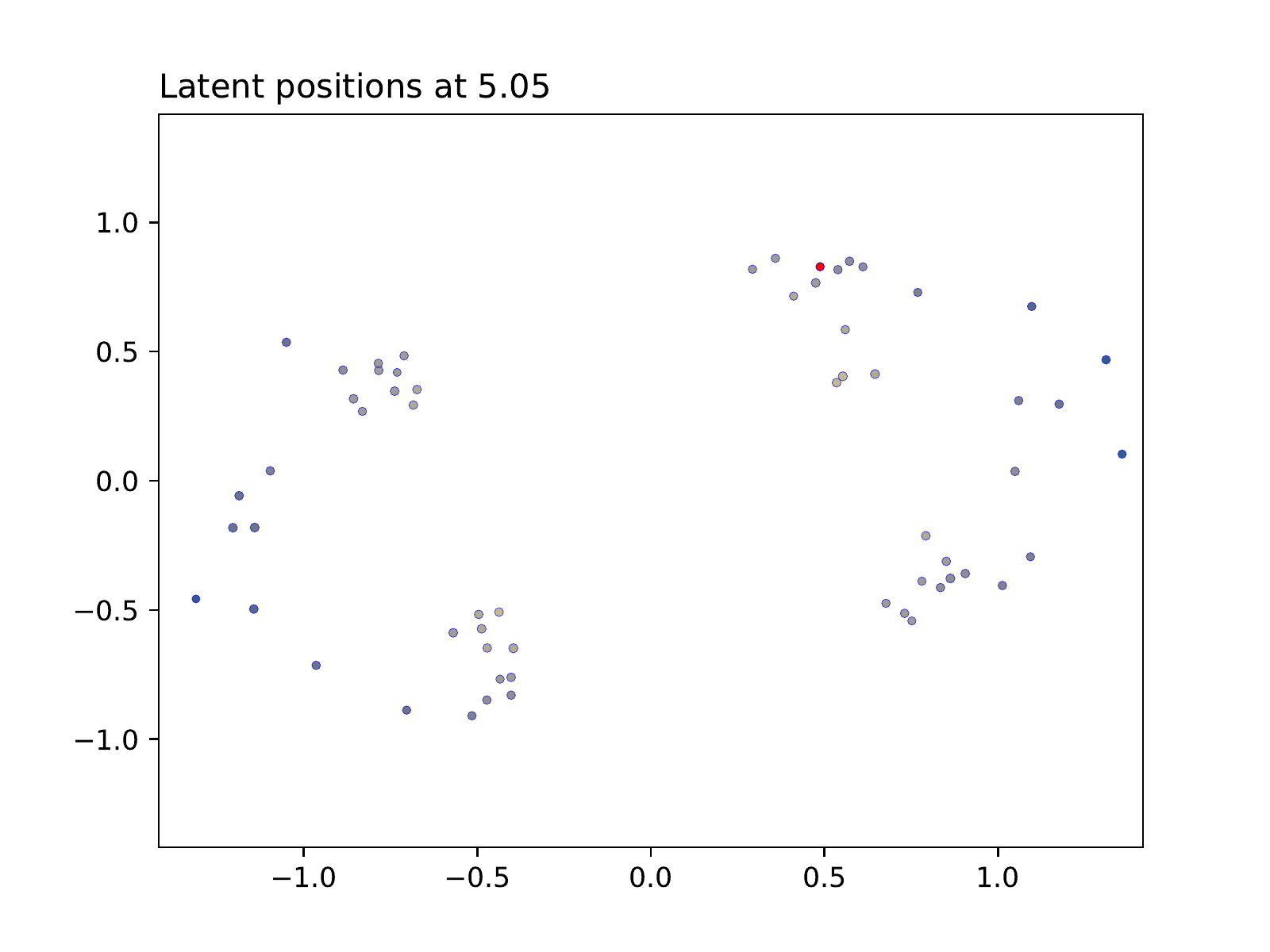}
     \end{subfigure}
     \hfill
     \begin{subfigure}[b]{0.495\textwidth}
         \centering
         \includegraphics[width=\textwidth]{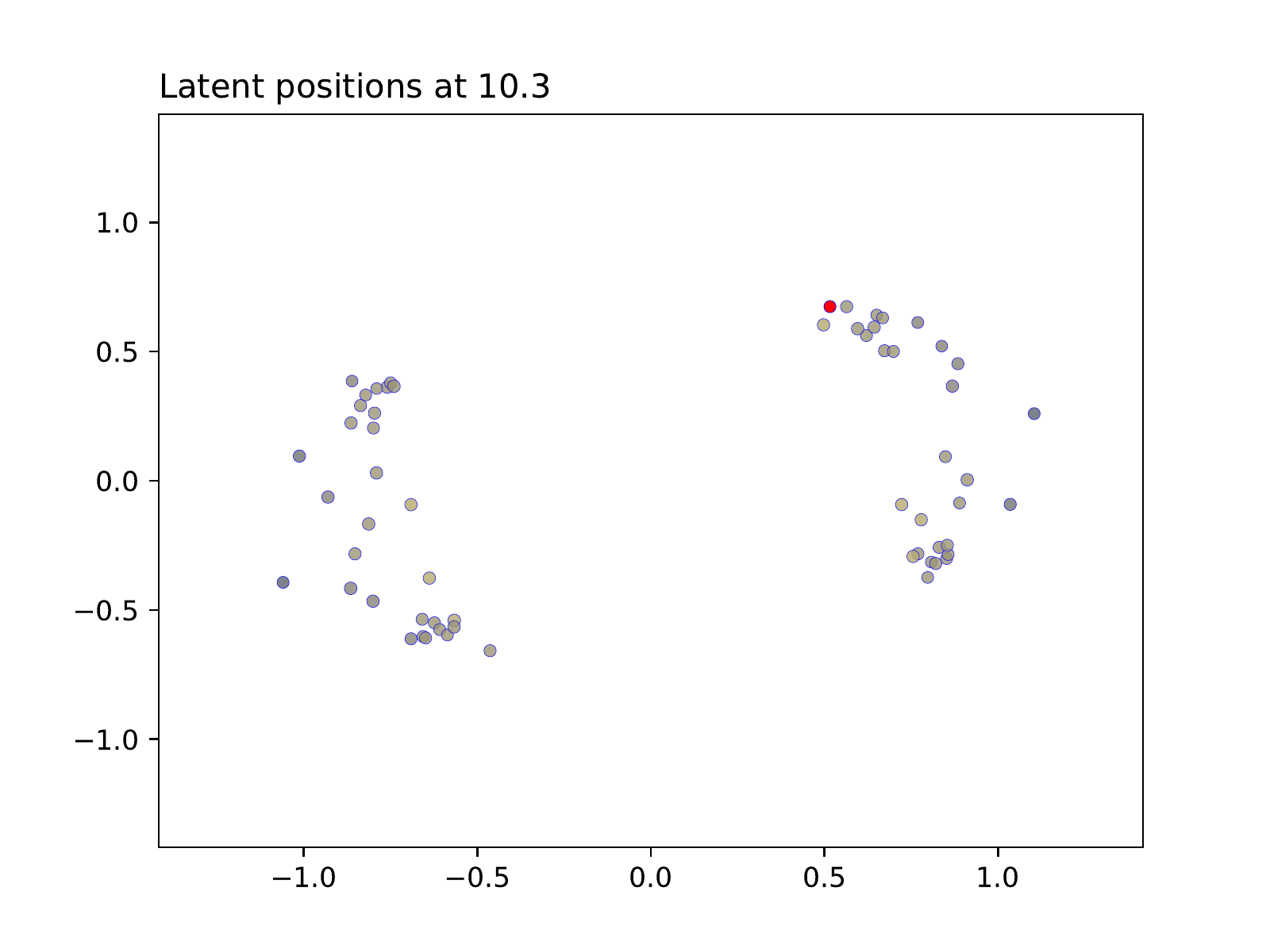}
     \end{subfigure}
     \centering
     \begin{subfigure}[b]{0.495\textwidth}
         \centering
         \includegraphics[width=\textwidth]{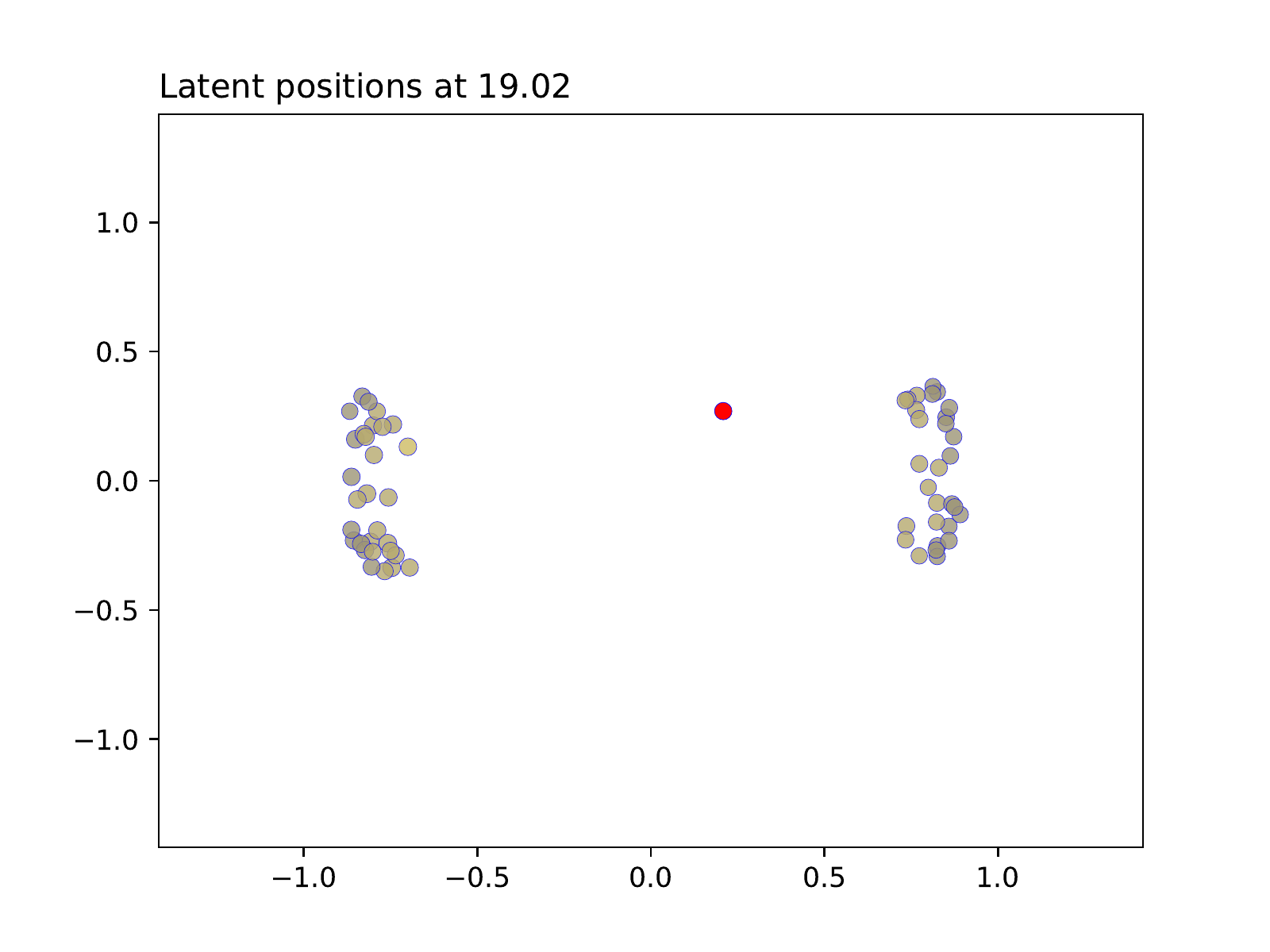}
     \end{subfigure}
     \hfill
     \begin{subfigure}[b]{0.495\textwidth}
         \centering
         \includegraphics[width=\textwidth]{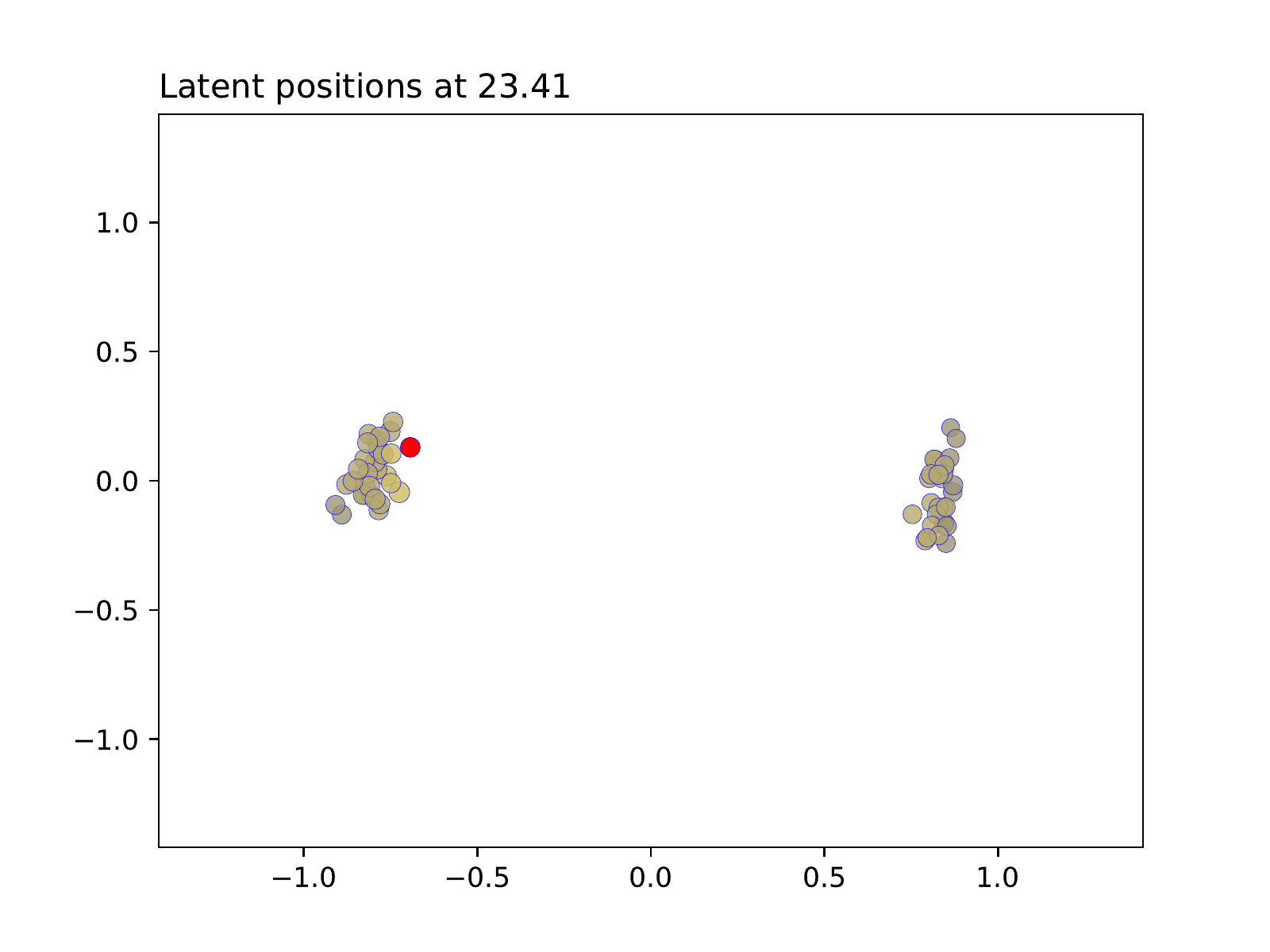}
     \end{subfigure}
        \caption{\textbf{Simulation study 2}: snapshots for the distance model.}
        \label{fig:sim_c_distance_results}
\end{figure}
Both approaches clearly capture the reinforcement of the communities over time by aggregating the nodes of each group.
We observe this behaviour both for the projection model and for the distance model.
The projection model also exhibits nodes getting farther from the centre of the space, since this would give them higher interaction rates, overall.
As concerns the special node moving from one community to the other, this is well captured in that the node transitions smoothly after approximately $20$ seconds, in both models.

\subsection{Distance Model}\label{subsec:simu_B}
\paragraph{Simulation study 3.} In this simulation study, we generate data from the latent distance model itself (Section~\ref{subsubsec:DM}).
In this case, our goal can be more ambitious and thus we aim at reconstructing the individual trajectory of each of the nodes, at every point in time, as accurately as possible.
To make the reading of the results easier, we assume that the nodes move along some pre-determined trajectories that are easy to visualize.
The $N=20$ nodes start on a ring which is centered at the origin of the space, and has radius equal to $1$.
The nodes are located consecutively and in line along the ring, with equal space in between any two consecutive nodes.
Then, they start to move at constant speed towards the centre of the space, which they reach after $5$ seconds.
After reaching the centre, they perform the same motion backwards, and they are back at their initial positions after $5$ more seconds.
The trajectories of the nodes make it so that, when the nodes are along the largest ring, their rate of interaction is essentially zero, however the rate increases as they are closer and closer to the center of the space.

Figure \ref{fig:sim_b_projection_results} shows a collection of snapshots for the projection model.
\begin{figure}
     \centering
     \begin{subfigure}[b]{0.495\textwidth}
         \centering
         \includegraphics[width=\textwidth]{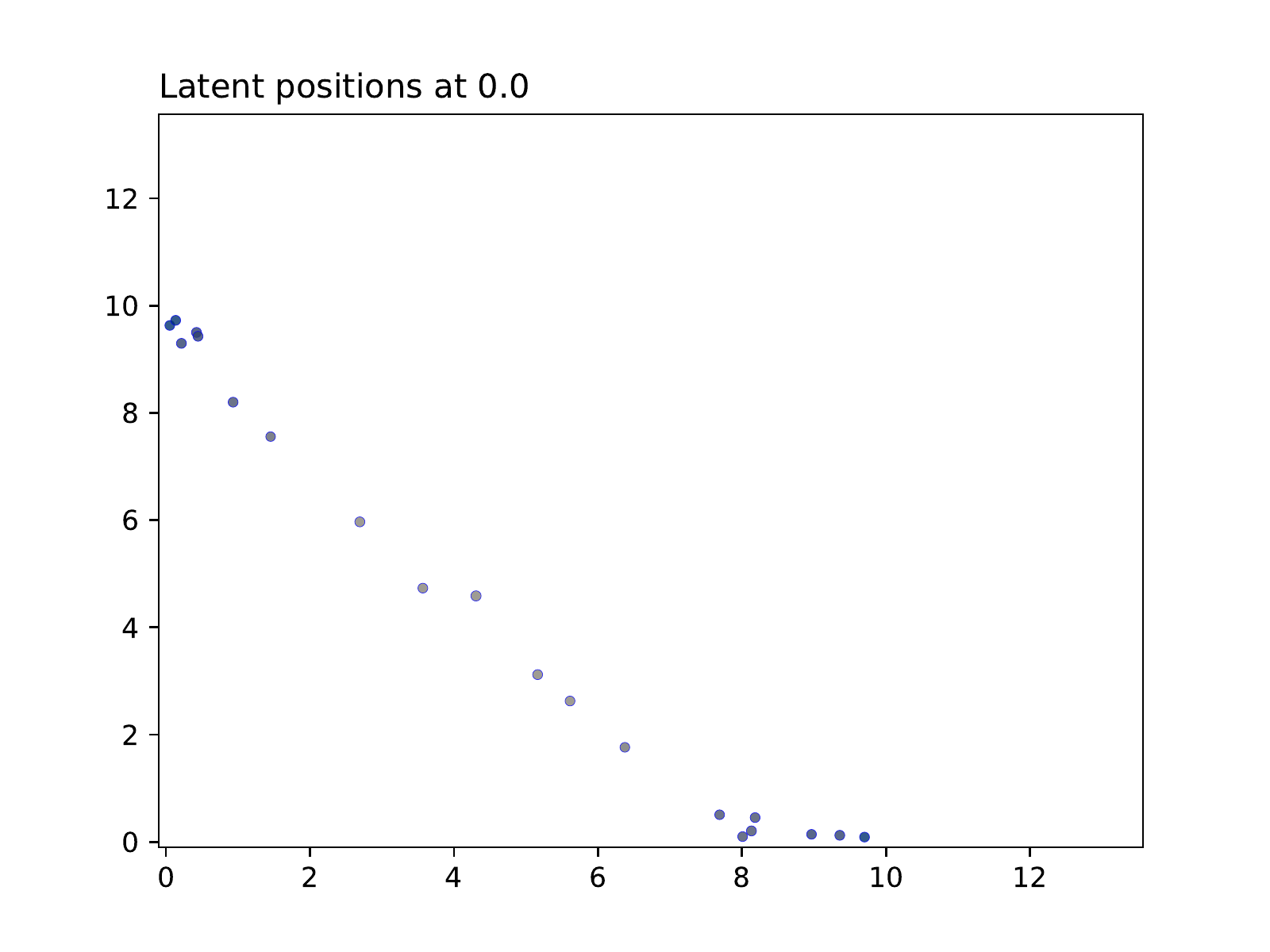}
     \end{subfigure}
     \hfill
     \begin{subfigure}[b]{0.495\textwidth}
         \centering
         \includegraphics[width=\textwidth]{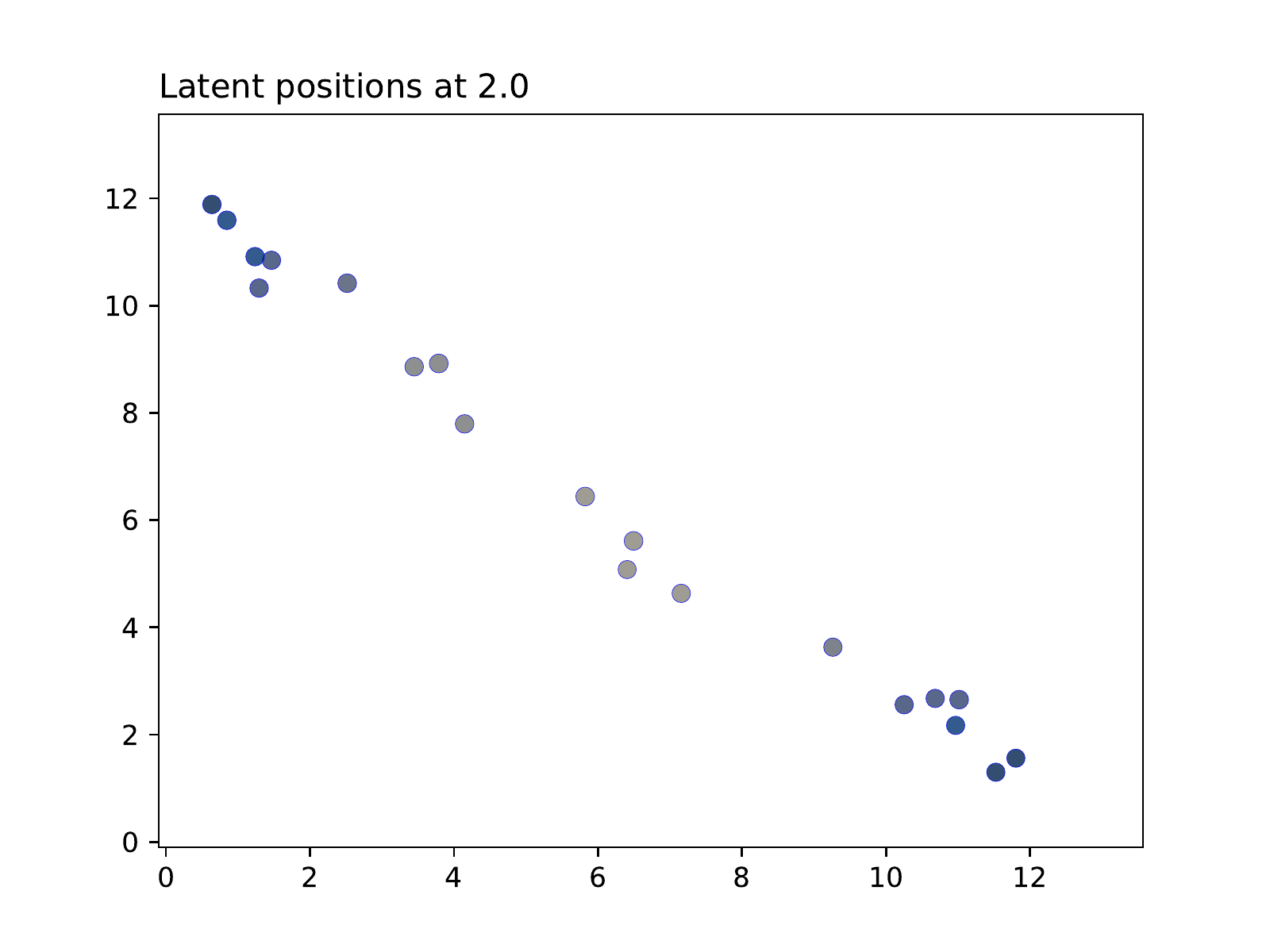}
     \end{subfigure}
     \centering
     \begin{subfigure}[b]{0.495\textwidth}
         \centering
         \includegraphics[width=\textwidth]{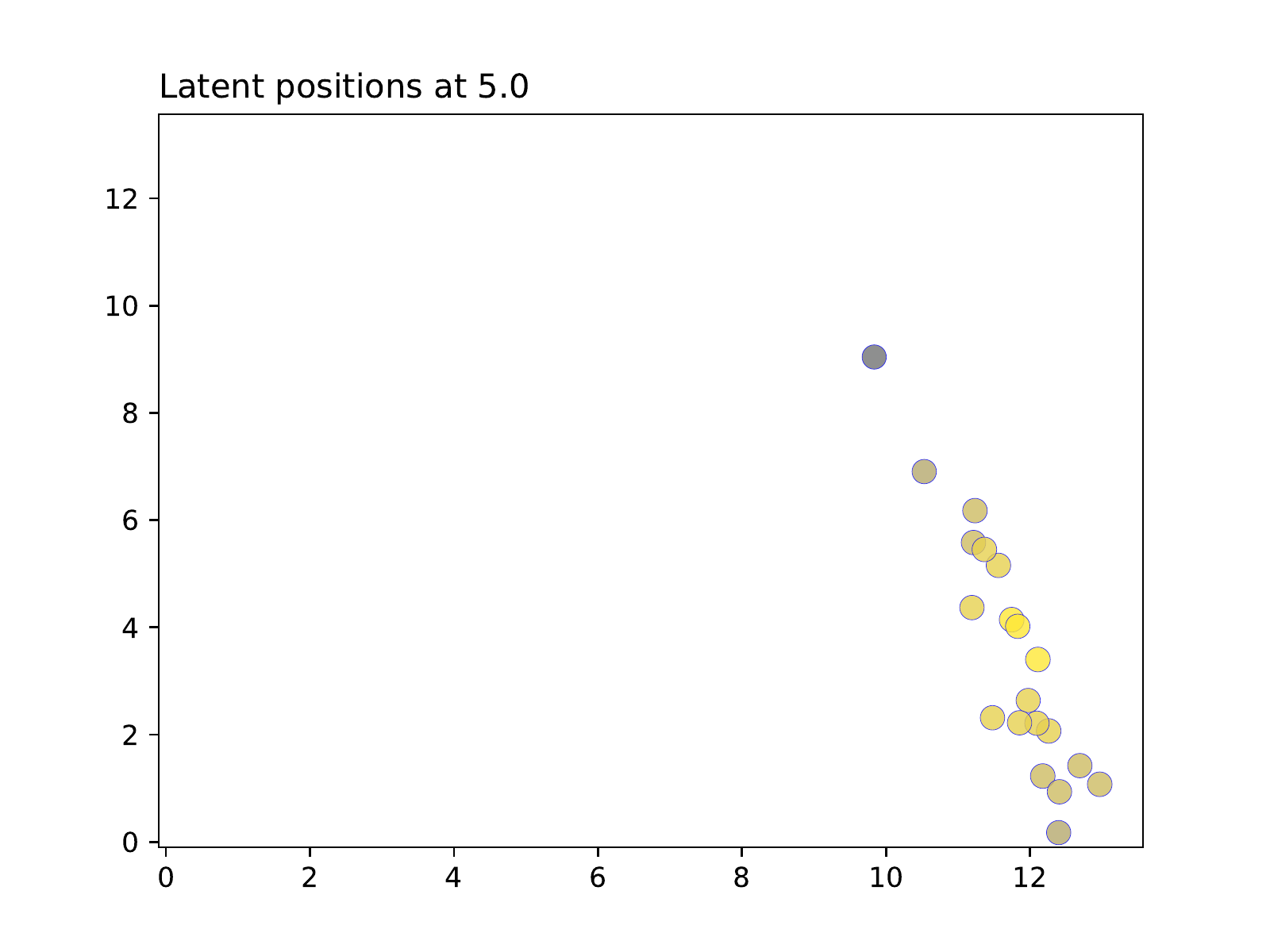}
     \end{subfigure}
     \hfill
     \begin{subfigure}[b]{0.495\textwidth}
         \centering
         \includegraphics[width=\textwidth]{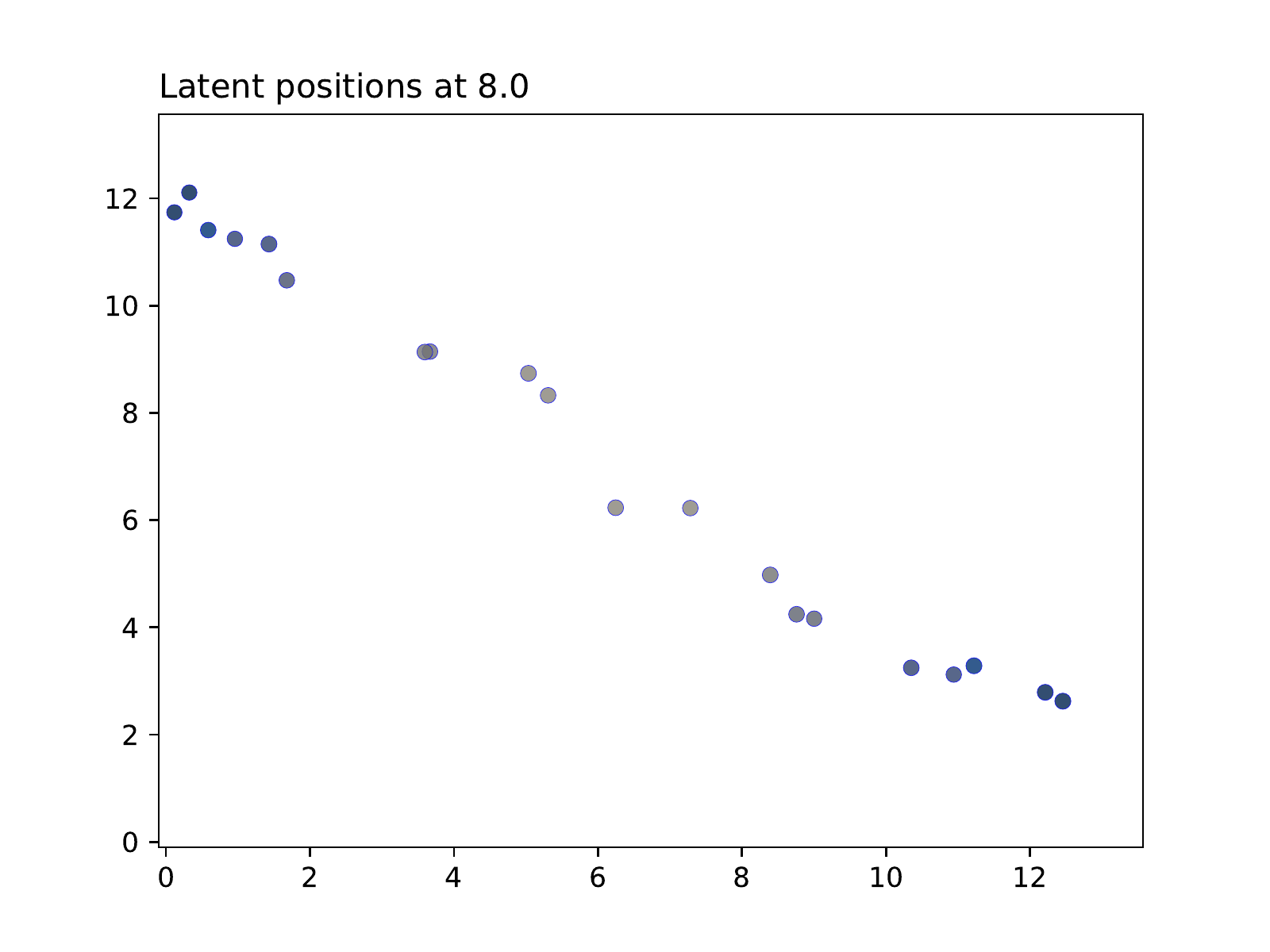}
     \end{subfigure}
        \caption{\textbf{Simulation study 3}: snapshots for the projection model.}
        \label{fig:sim_b_projection_results}
\end{figure}
The nodes are approximately equally spaced along a line and they progress outwards from the centre of the space. 
As they get far apart from the centre and from each other, their dot products increase and so do their interaction rates.
The projection model, which is not the same model that has generated the data, tends to spread out the nodes on the space, which is ideal and expected from these data.
However, this means that some of the nodes almost point in perpendicular directions, which is at odds with the fact that, half-way through the study, all nodes should interact with all others.

As concerns the results for the distance model, these are shown in Figure \ref{fig:sim_b_distance_results}, and they highlight that the true trajectories are essentially accurately recovered.
\begin{figure}
     \centering
     \begin{subfigure}[b]{0.495\textwidth}
         \centering
         \includegraphics[width=\textwidth]{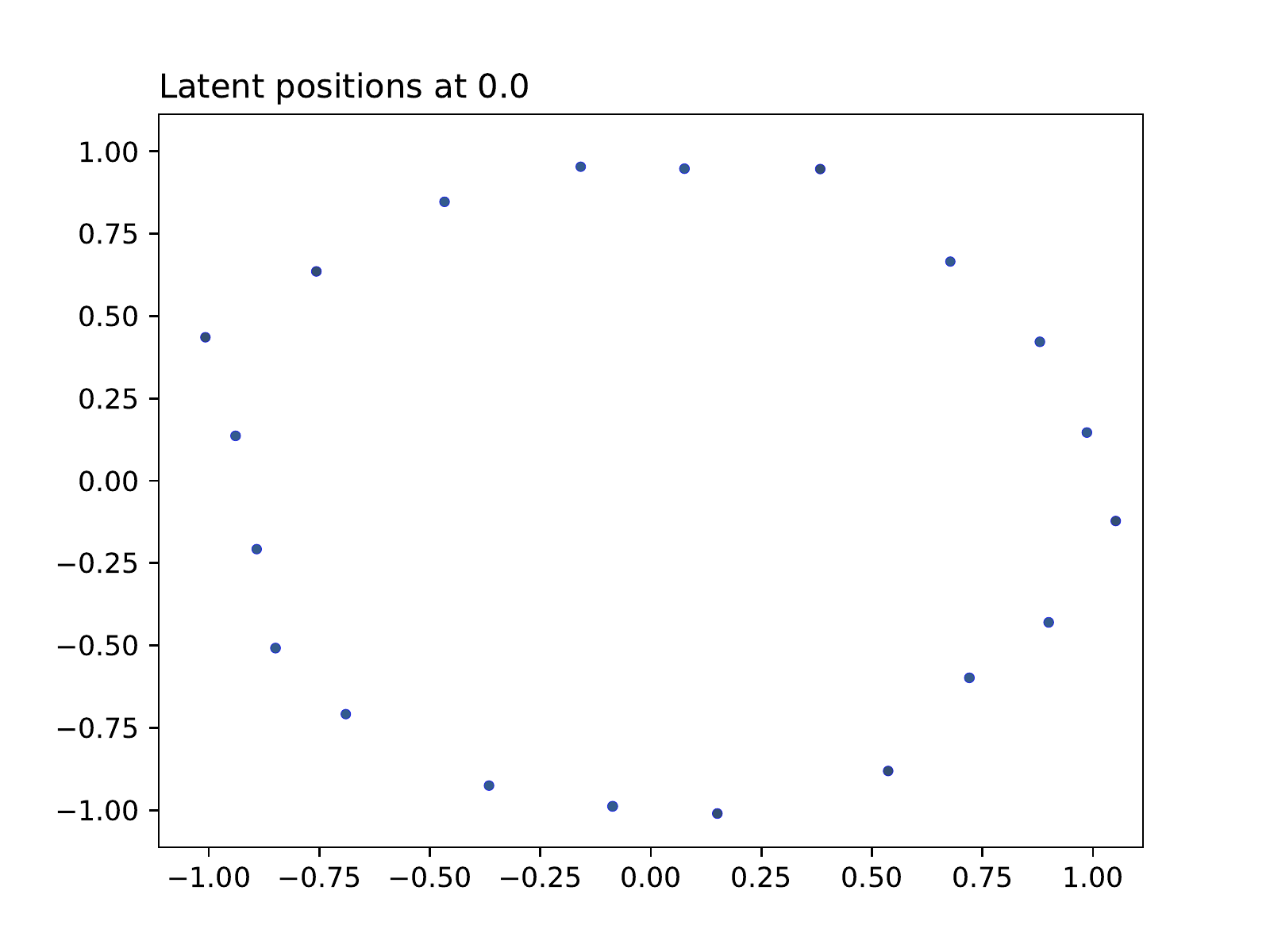}
     \end{subfigure}
     \hfill
     \begin{subfigure}[b]{0.495\textwidth}
         \centering
         \includegraphics[width=\textwidth]{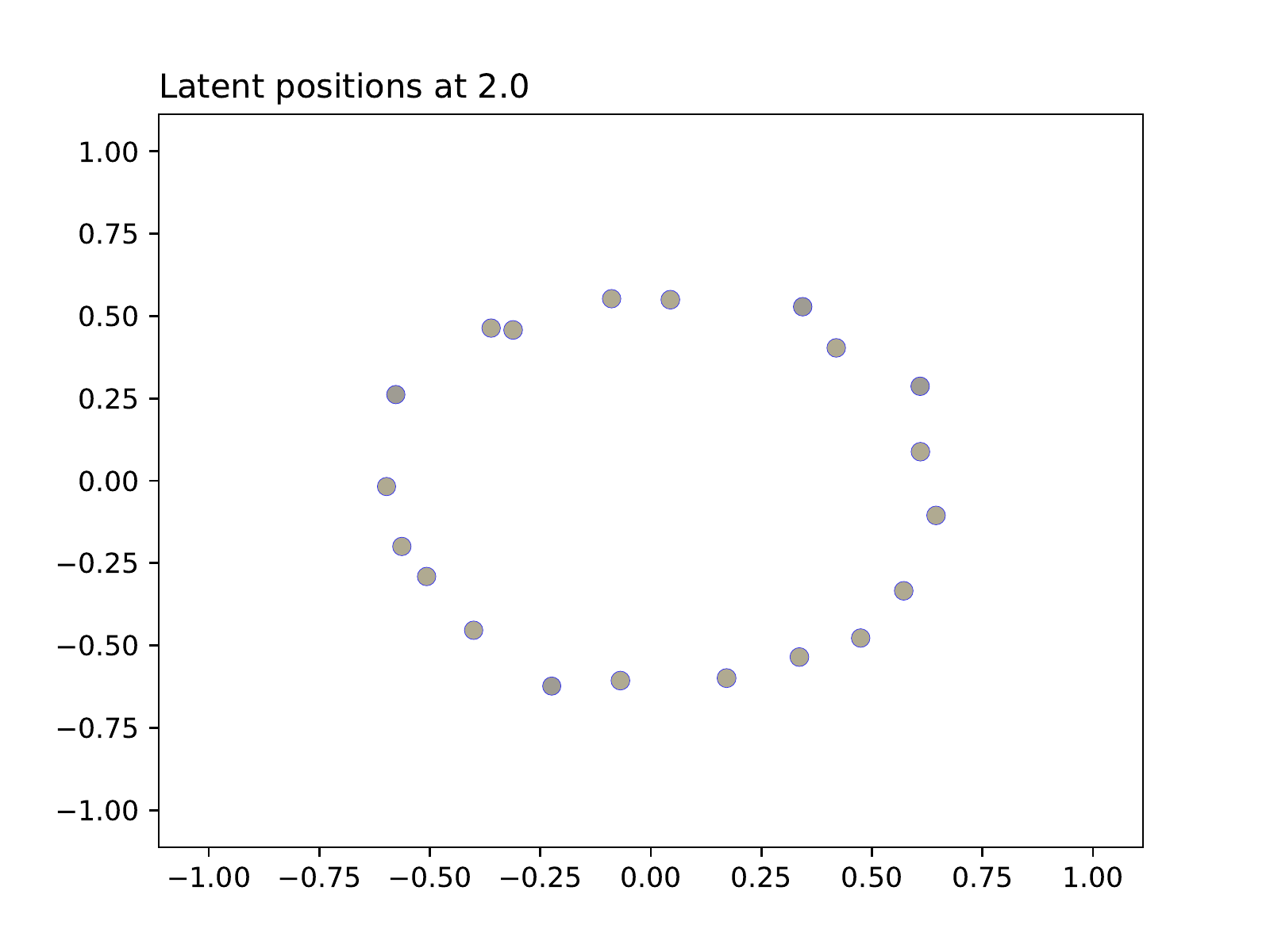}
     \end{subfigure}
     \centering
     \begin{subfigure}[b]{0.495\textwidth}
         \centering
         \includegraphics[width=\textwidth]{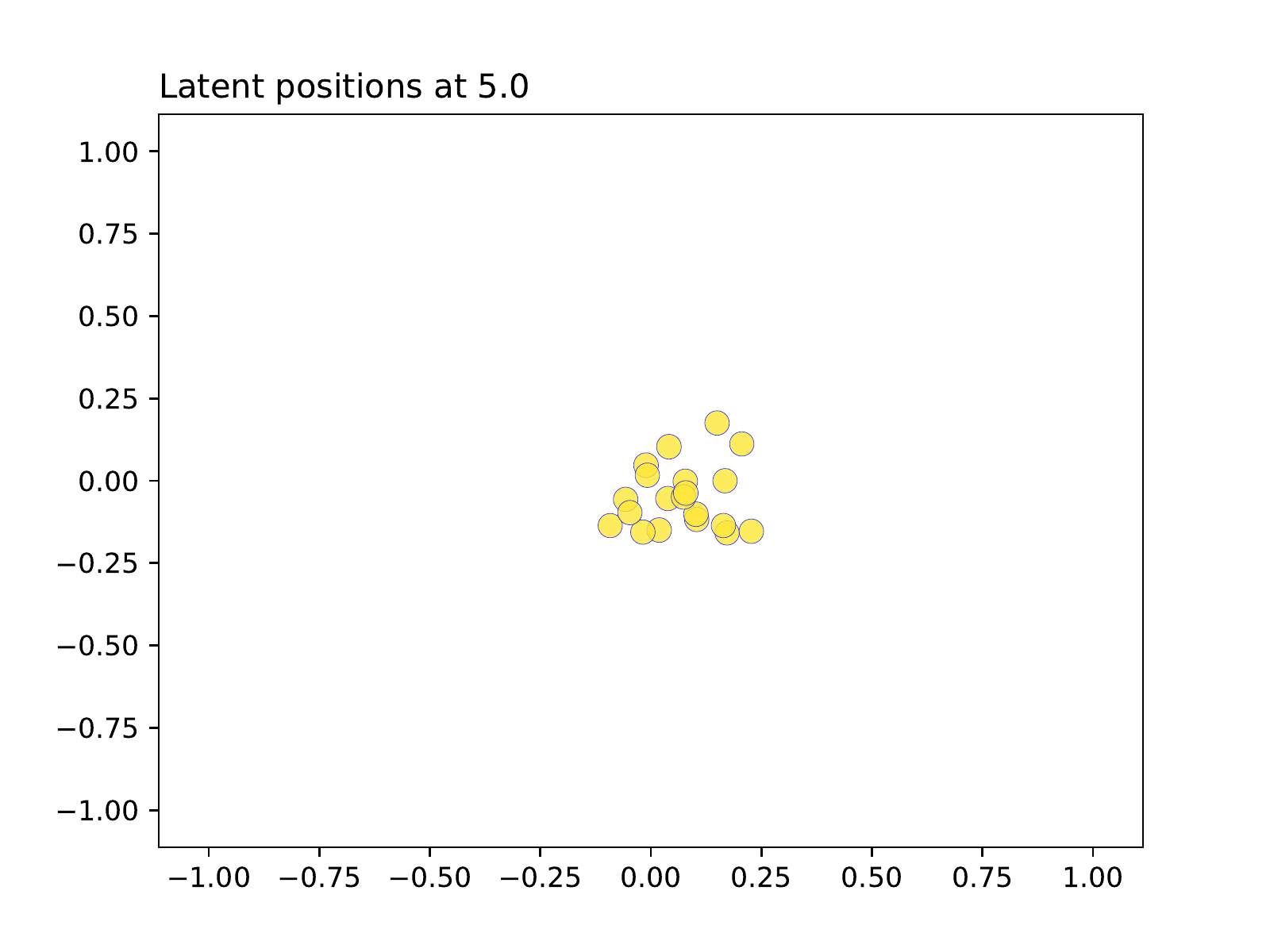}
     \end{subfigure}
     \hfill
     \begin{subfigure}[b]{0.495\textwidth}
         \centering
         \includegraphics[width=\textwidth]{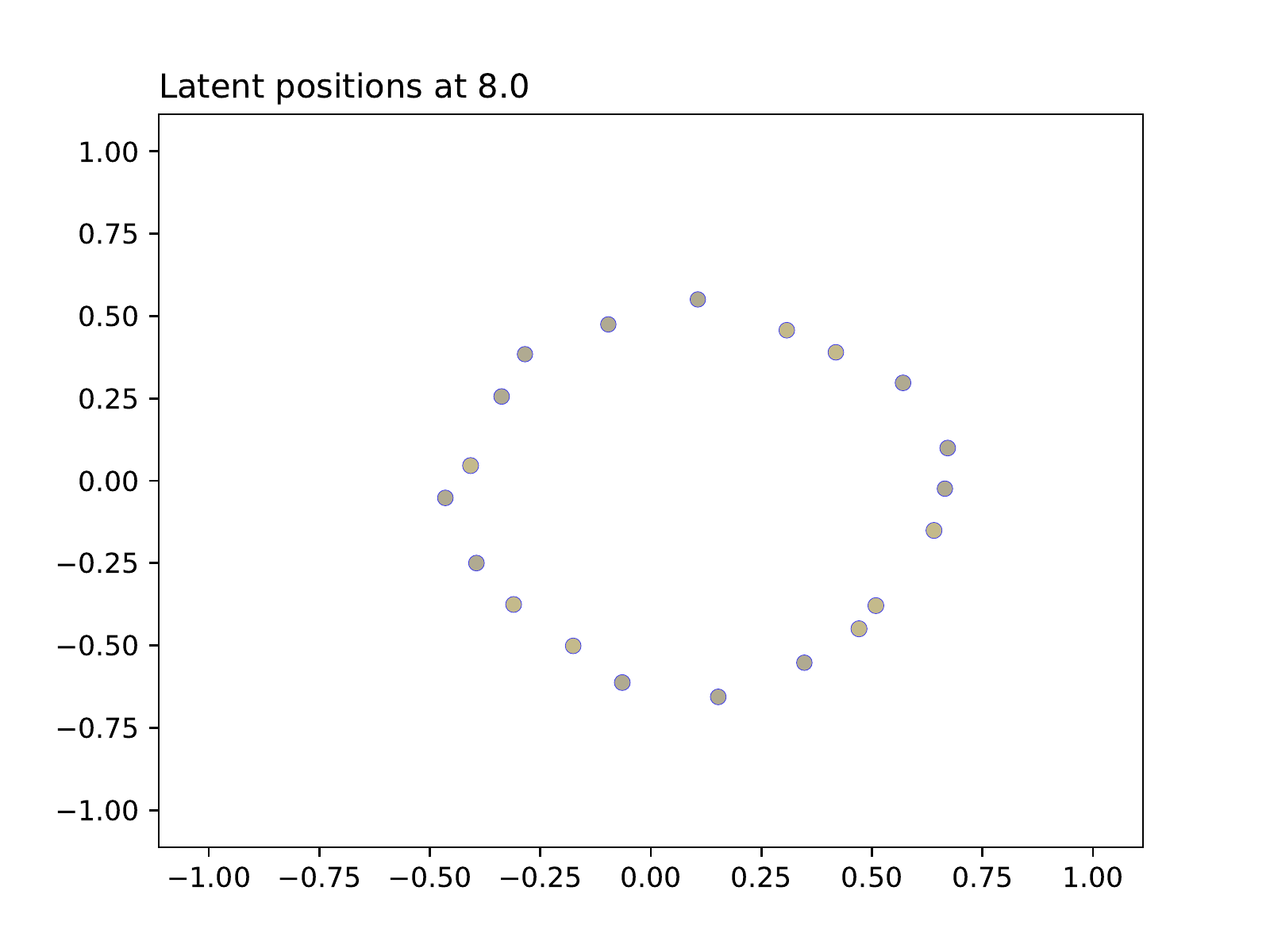}
     \end{subfigure}
        \caption{\textbf{Simulation study 3}: snapshots for the distance model.}
        \label{fig:sim_b_distance_results}
\end{figure}
The model can capture really well the contraction and expansion of the latent space, and the individual trajectories of the nodes are closely following the theoretical counterparts.
The scale of the latent space is also correctly estimated since the largest ring has approximately radius $1$.

There are some important remarks to make.
First, after $5$ seconds, i.e. when all nodes are located close to the centre, it is understandable that a rotation or reflection (with respect to the origin of the space) may happen.
This is inevitable since the solution can only be recovered up to a rotation/reflection of all the latent trajectories, but also because the first $5$ seconds and the last $5$ seconds can technically be seen as two independent problems.
The collapse to zero can be seen as a reset in terms of orientation of the latent space.
That is because the penalization terms only work with two consecutive change points, so, if we view them as identifiability constraints, they would lose their effectiveness when all the nodes collapse to zero for some time. 
A second fundamental remark is that the estimation procedure can lead to good results only if we observe an appropriate number of interactions. 
This is a specific trait of latent position models in general, since we can only guess the position of one node accurately when we know to whom it connects (or, in this context, how frequently), as we would tend to locate it close to its neighbors.
In our simulated setting, there are few to no interactions when nodes are along the largest ring, so it makes sense that the results seem a bit more noisy in those instants.

\section{Applications}\label{sec:applications}
In this section, we illustrate our approach over $3$ real datasets, highlighting how we can characterize the trajectories of individual nodes, the formation and dissolution of communities, and other types of connectivity patterns.
From the simulation studies, we have pointed out that the distance model generally provides a more convenient and appropriate framework to study these aspects of the data.
In addition, it is also easier to interpret, so, we only show the results for the distance model and redirect the reader to the associated code repository where the complete results can be found.

\subsection{ACM Hypertext conference dataset}
The ACM Hypertext 2009 conference was held over three days in Turin, Italy, from June 29 to July 1. 
At the conference, $113$ attendees wore special badges which recorded an interaction whenever two badges were facing each other at a distance of $1.5$ meters or less, for at least $20$ seconds.
For each of these interactions, a timestamp was recorded as well as the identifiers of the two personal badges.

This interaction dataset was first analysed by \textcite{konect:sociopatterns}, and is publicly available from \textcite{konect:2017:sociopatterns-hypertext}.
Similarly to \textcite{corneli2016exact}, we focus our analysis on the first day of the conference. 
On the first day, the main events that took place included a poster session in the morning (starting from 8 a.m.), a lunch break around 1 p.m., and a cheese and wine reception in the evening between 6 p.m. and 7 p.m.
\begin{figure}[t]
    \centering
    \includegraphics[width=.6\textwidth]{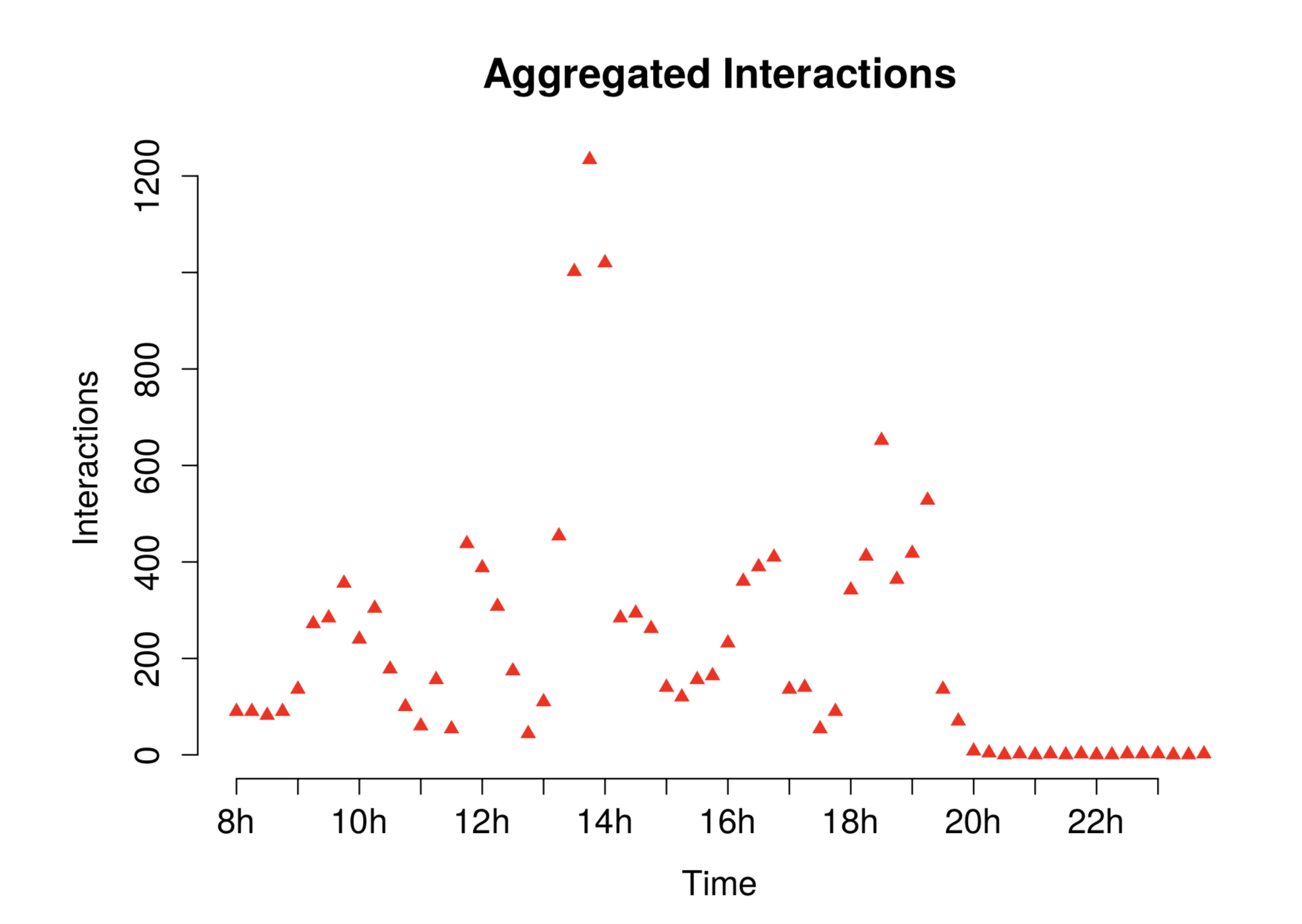}
    \caption{\textbf{ACM application}: cumulative number  of interactions for each quarter hour (first day).}
    \label{fig:acm_hist}
\end{figure}
We use our distance \texttt{CLPM} to provide a graphical representation of these data, and to note how the model responds to the various gatherings that happened during the day. We set a change point ($\eta_k$) every fifteen minutes. 
Figures \ref{fig:acm_1} and \ref{fig:acm_2} provide a number of snapshots highlighting some of the relevant moments of the day. 
\begin{figure}
     \centering
     \begin{subfigure}[b]{0.495\textwidth}
         \centering
         \includegraphics[width=\textwidth]{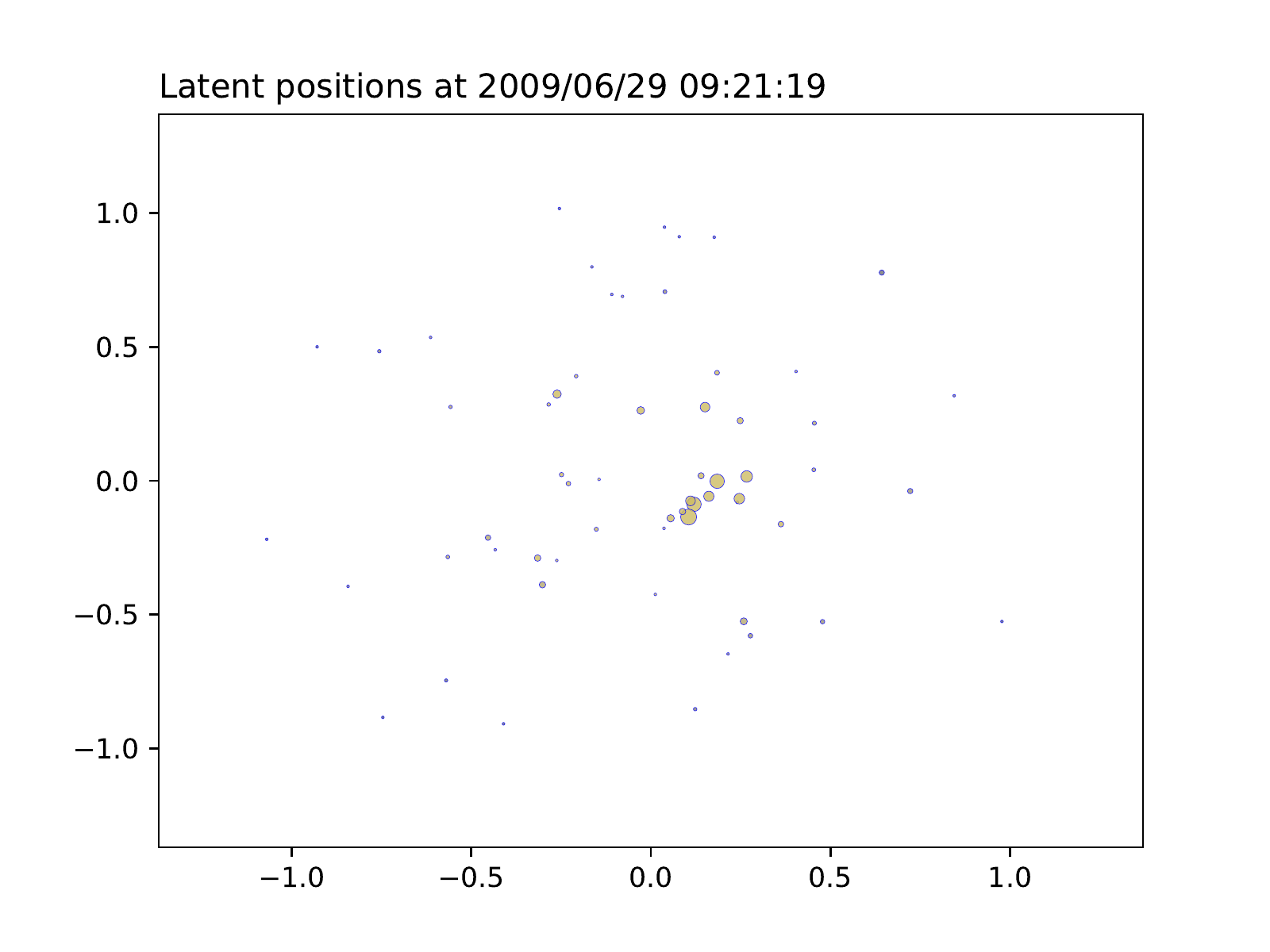}
     \end{subfigure}
     \hfill
     \begin{subfigure}[b]{0.495\textwidth}
         \centering
         \includegraphics[width=\textwidth]{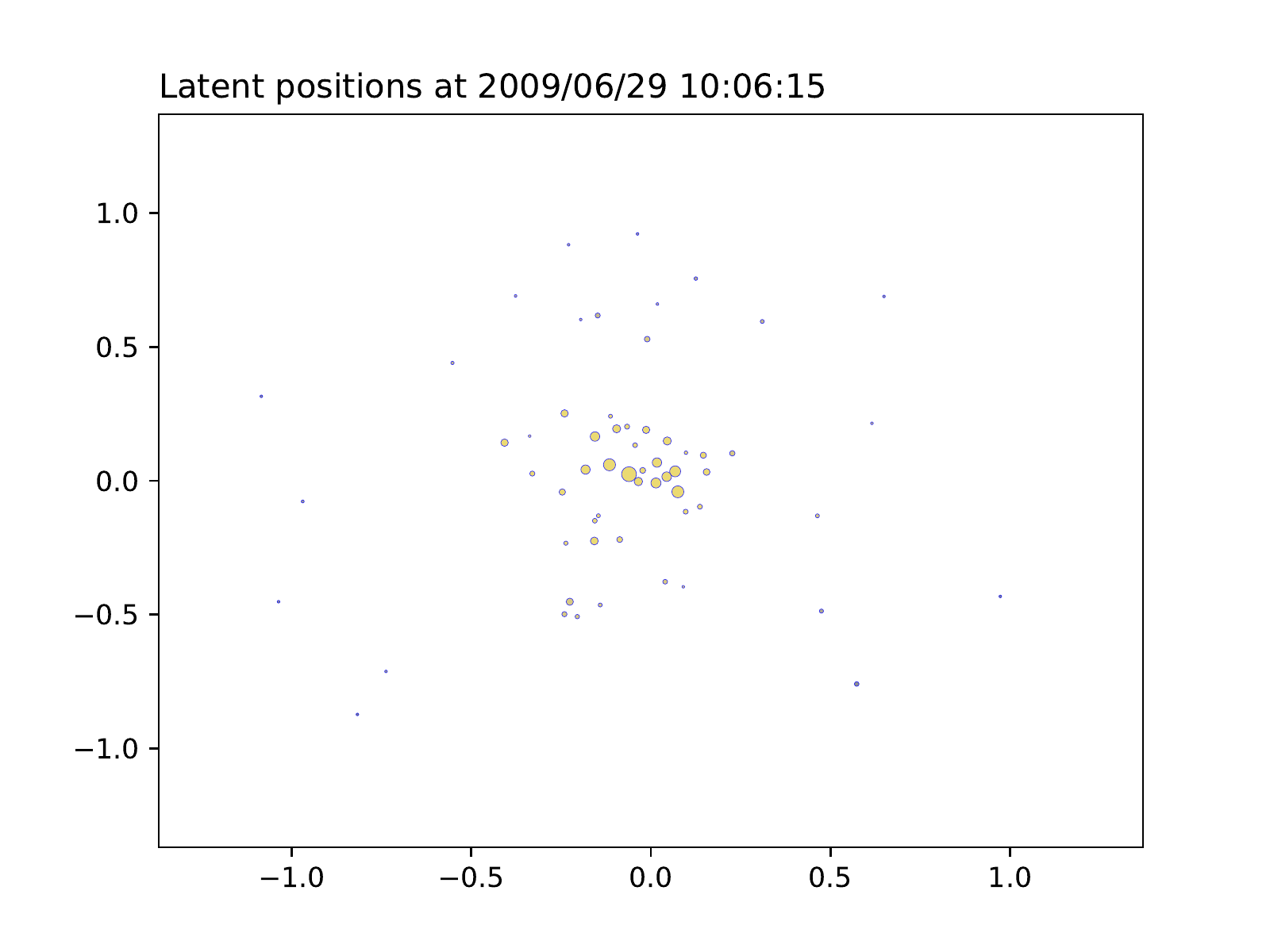}
     \end{subfigure}
     \centering
     \begin{subfigure}[b]{0.495\textwidth}
         \centering
         \includegraphics[width=\textwidth]{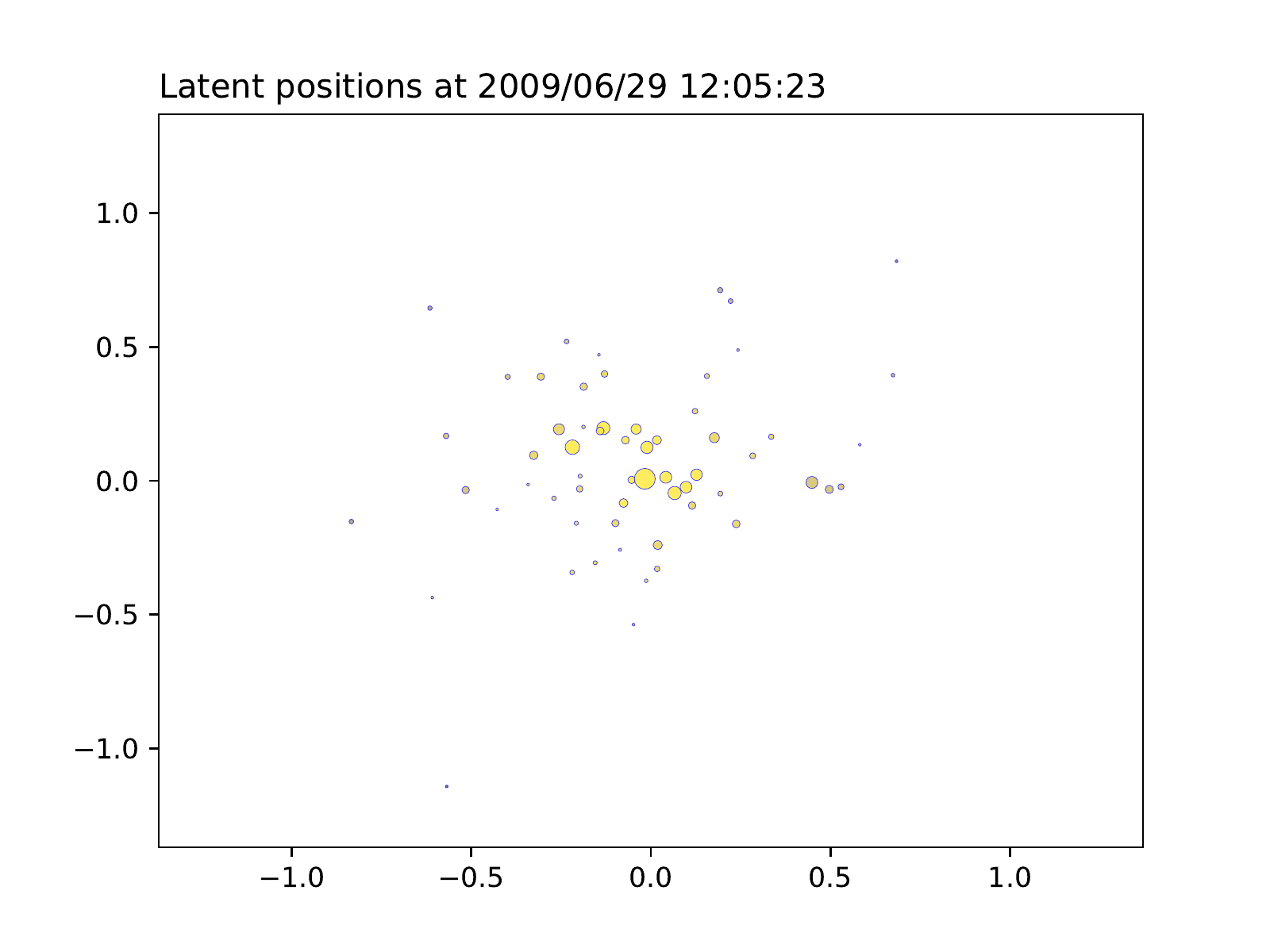}
     \end{subfigure}
     \hfill
     \begin{subfigure}[b]{0.495\textwidth}
         \centering
         \includegraphics[width=\textwidth]{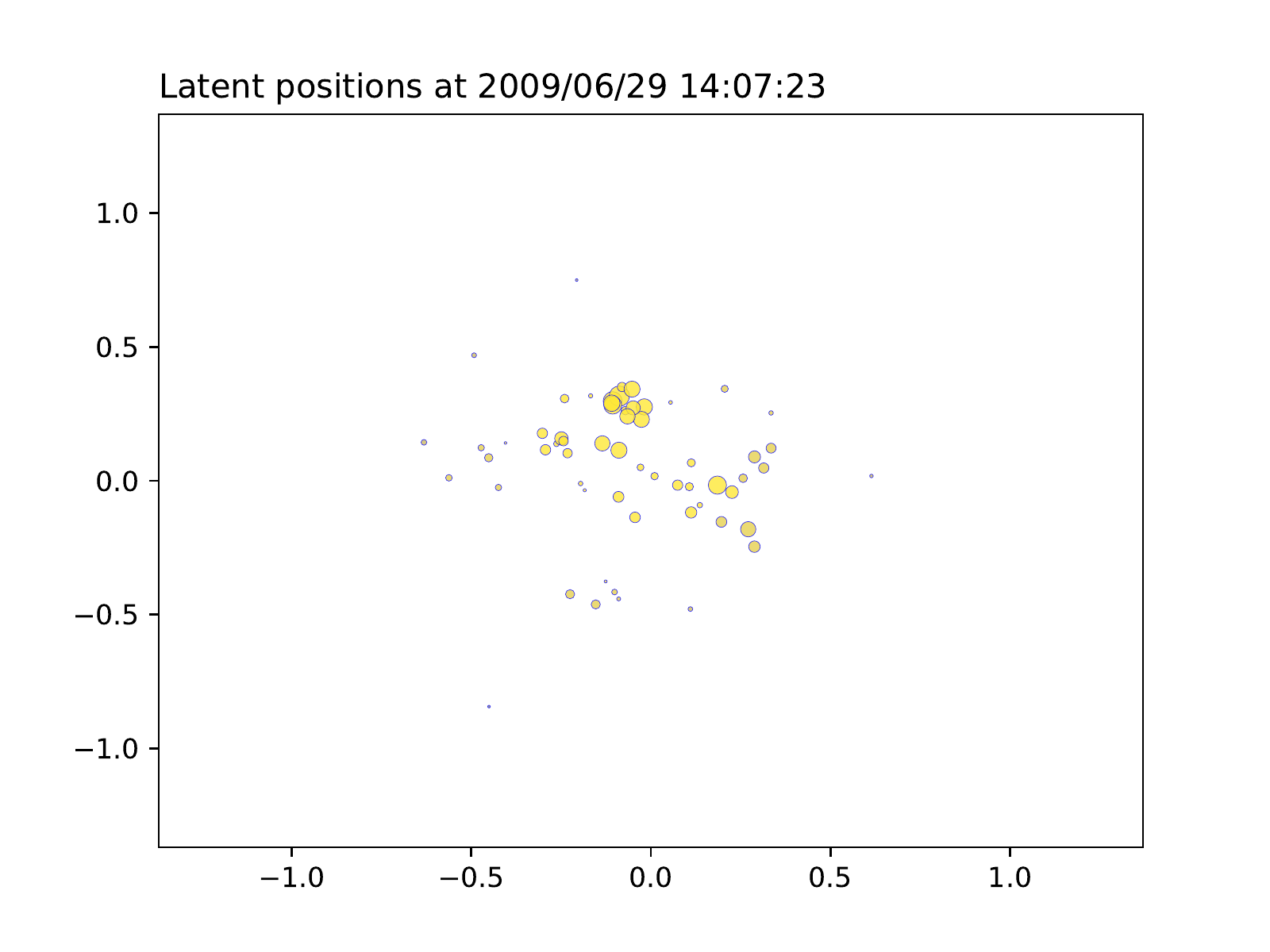}
     \end{subfigure}
        \caption{\textbf{ACM application}: snapshots for the distance model (morning hours).}
        \label{fig:acm_1}
\end{figure}
\begin{figure}
     \centering
     \begin{subfigure}[b]{0.495\textwidth}
         \centering
         \includegraphics[width=\textwidth]{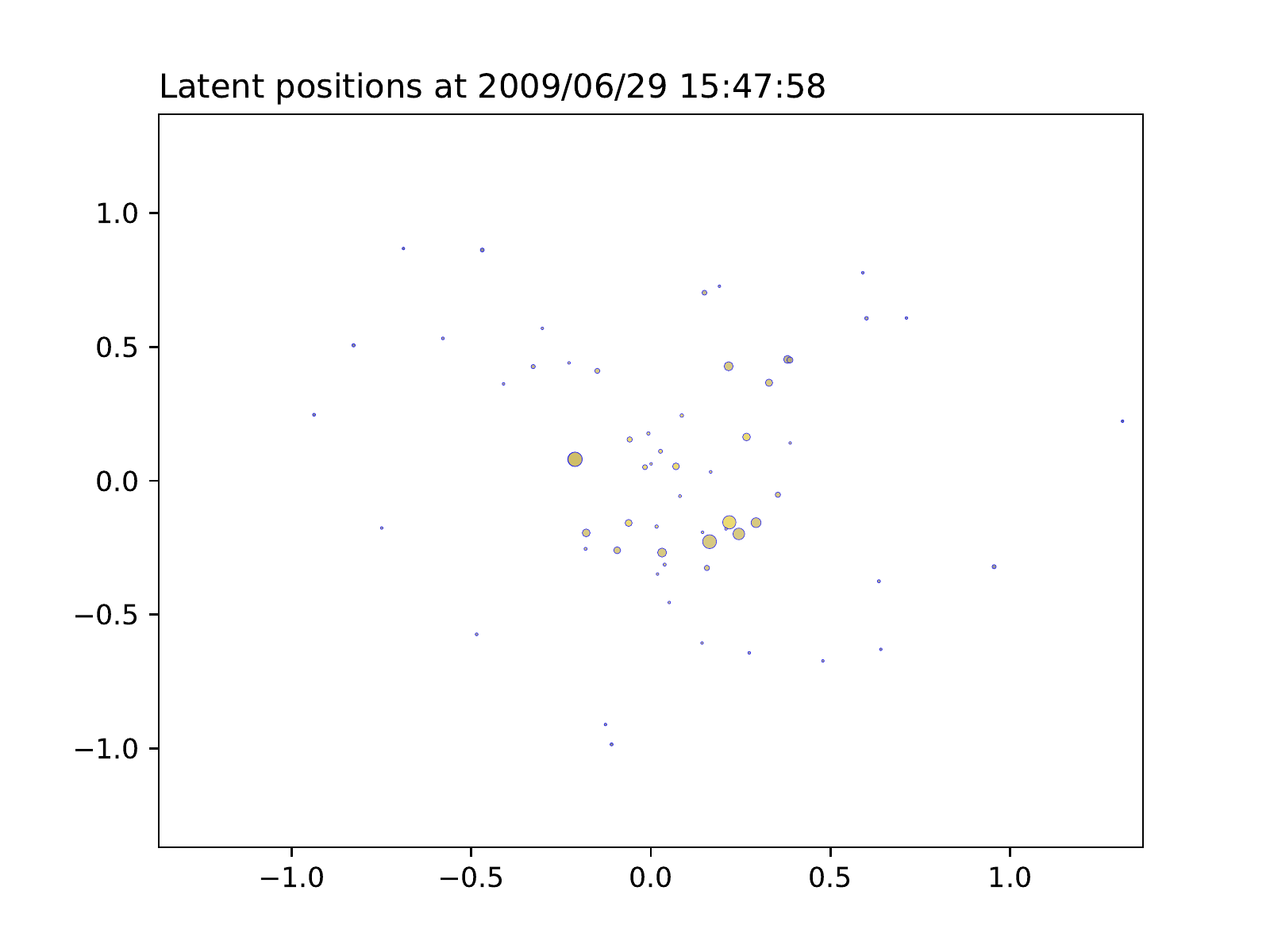}
     \end{subfigure}
     \hfill
     \begin{subfigure}[b]{0.495\textwidth}
         \centering
         \includegraphics[width=\textwidth]{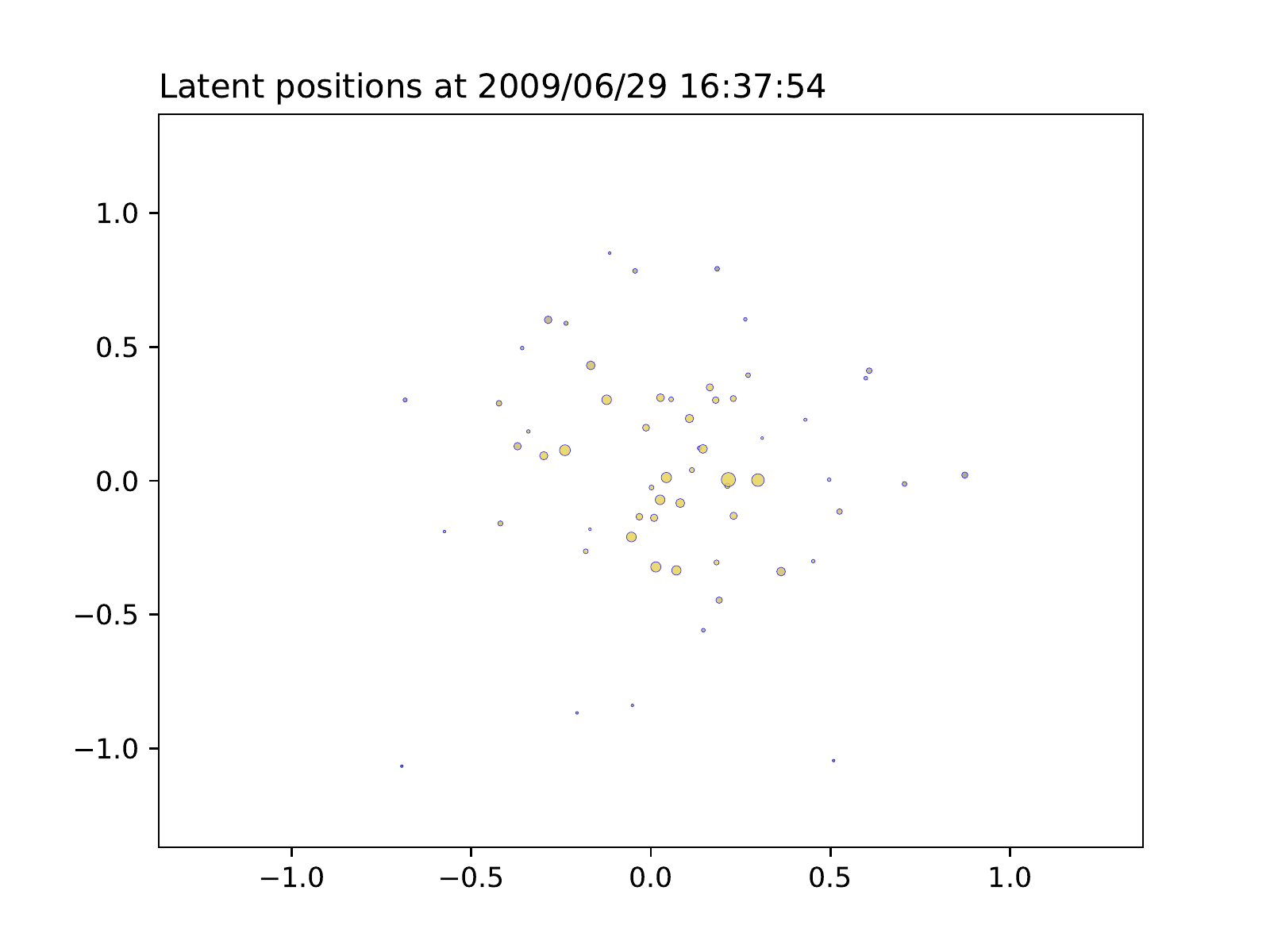}
     \end{subfigure}
     \centering
     \begin{subfigure}[b]{0.495\textwidth}
         \centering
         \includegraphics[width=\textwidth]{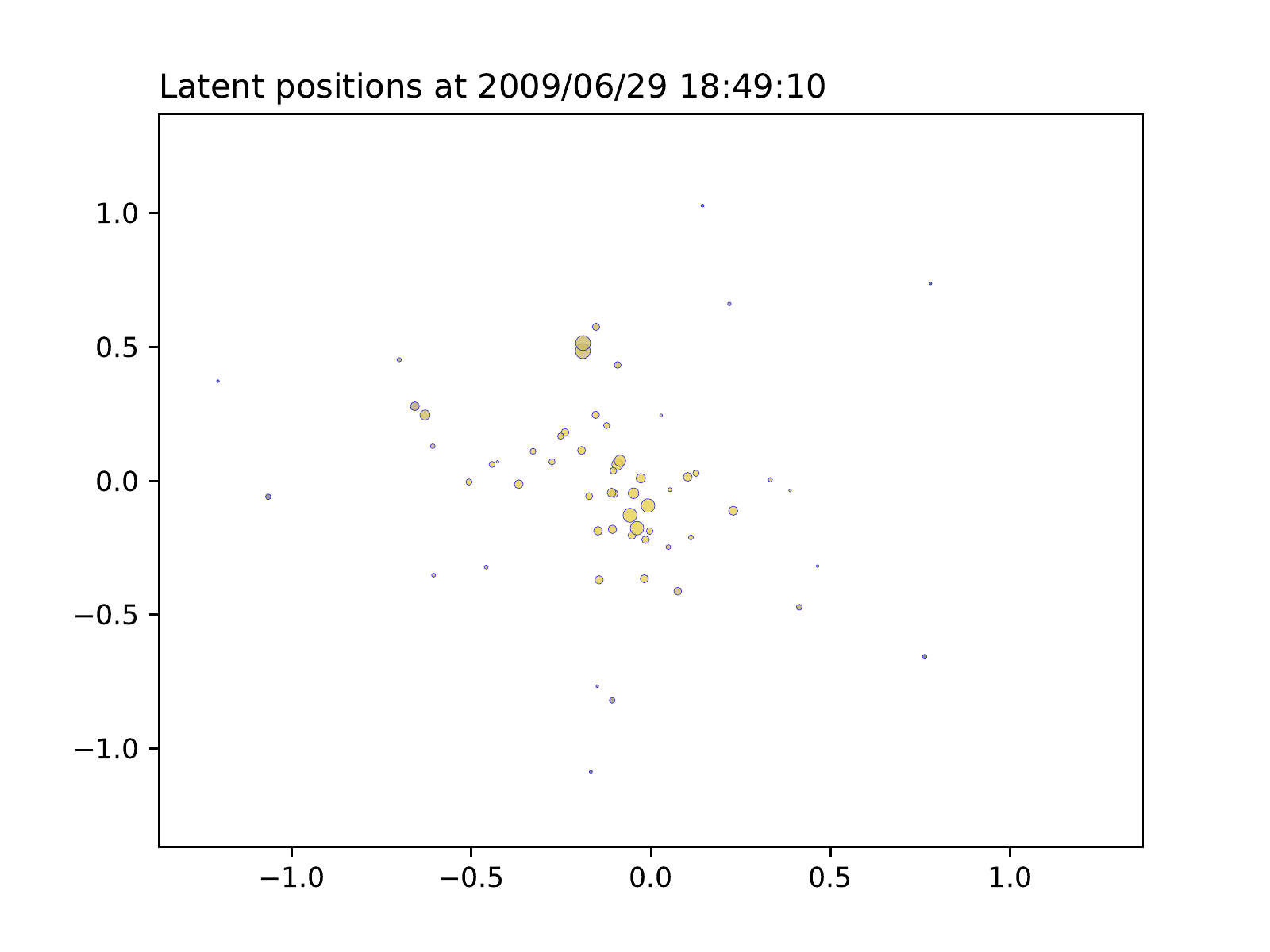}
     \end{subfigure}
     \hfill
     \begin{subfigure}[b]{0.495\textwidth}
         \centering
         \includegraphics[width=\textwidth]{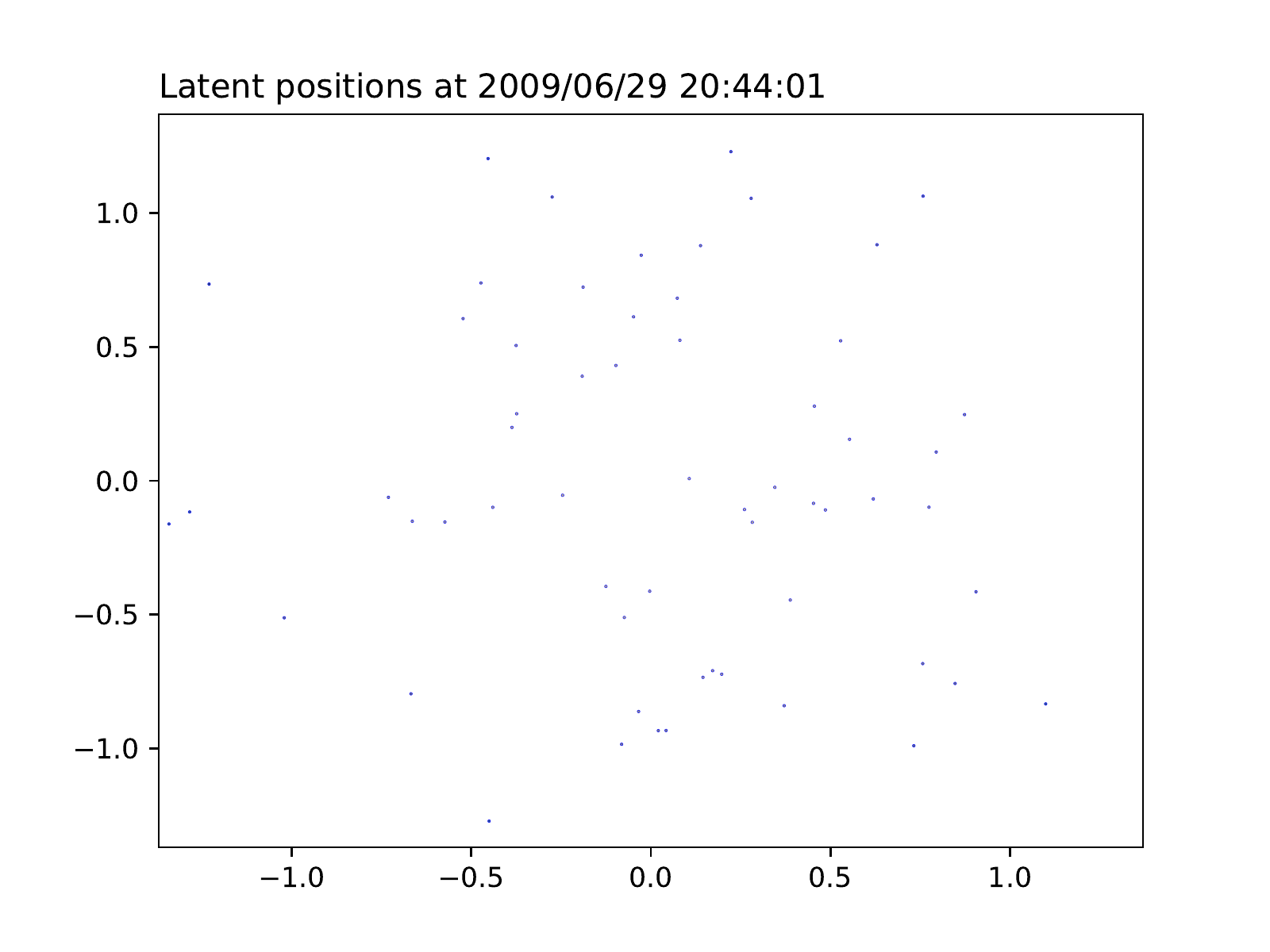}
     \end{subfigure}
        \caption{\textbf{ACM application}: snapshots for the distance model (afternoon hours).}
        \label{fig:acm_2}
\end{figure}
The complete results, shown as a video, can be found in the code repository that accompanies this paper.

We can see that, in the morning, there is a high level of mixing between the attendees. 
The visitors tend to merge and split into different communities that change very frequently and very randomly. 
These communities reach a high level of clusteredness, which signals that the participants of the study are mixing into different groups.
This is perfectly in agreement with the idea that the participants are moving from one location to another, as it usually happens during poster sessions and parallel talk sessions.
The nodes exhibit different types of patterns and behaviours, in that some nodes are central and tend to join many communities, whereas others have lower levels of participation and remain at the outskirts of the space.

In the late morning, we see a clear close gathering around 12 p.m., whereby almost all nodes move towards the centre of the space.
This is emphasized even more at 2 p.m., which corresponds to the lunch break.
It is especially interesting that, even though the space becomes more contracted at this time, we can still clearly see a strong clustering structure.

In the afternoon, we go back to the same patterns as in the morning, whereby the participants mix in different groups and move around the space.
The wine reception is also clearly captured around 6 p.m. where we see again some level of contraction of the space, to signal a large gathering of the participants.

After this event, the overall rate of interactions diminishes sharply, and as a consequence we see the nodes spreading out in the space.

\subsection{Reality mining}
The reality mining dataset \parencite{eagle2006reality} is derived from the Reality Commons project, which was run at the Massachusetts Institute of Technology (MIT) from $14$ September 2004 to $5$ May 2005.
The dataset describes proximity interactions in a group of $96$ students, collected primarily through bluetooth devices.
An overview of this network dataset is also given by \textcite{rastelli2019exact}.

In the context of this paper, the proximity interactions can be reasonably considered as instantaneous interactions, due to the study being $9$ months long.
With our latent space representation, we aim at highlighting the patterns of connections of the students during the study, and any social communities that arise and how these change over time.

Figure \ref{fig:app_reality} shows a few snapshots of our fitted distance \texttt{CLPM}. 
\begin{figure}
     \centering
     \begin{subfigure}[b]{0.495\textwidth}
         \centering
         \includegraphics[width=\textwidth]{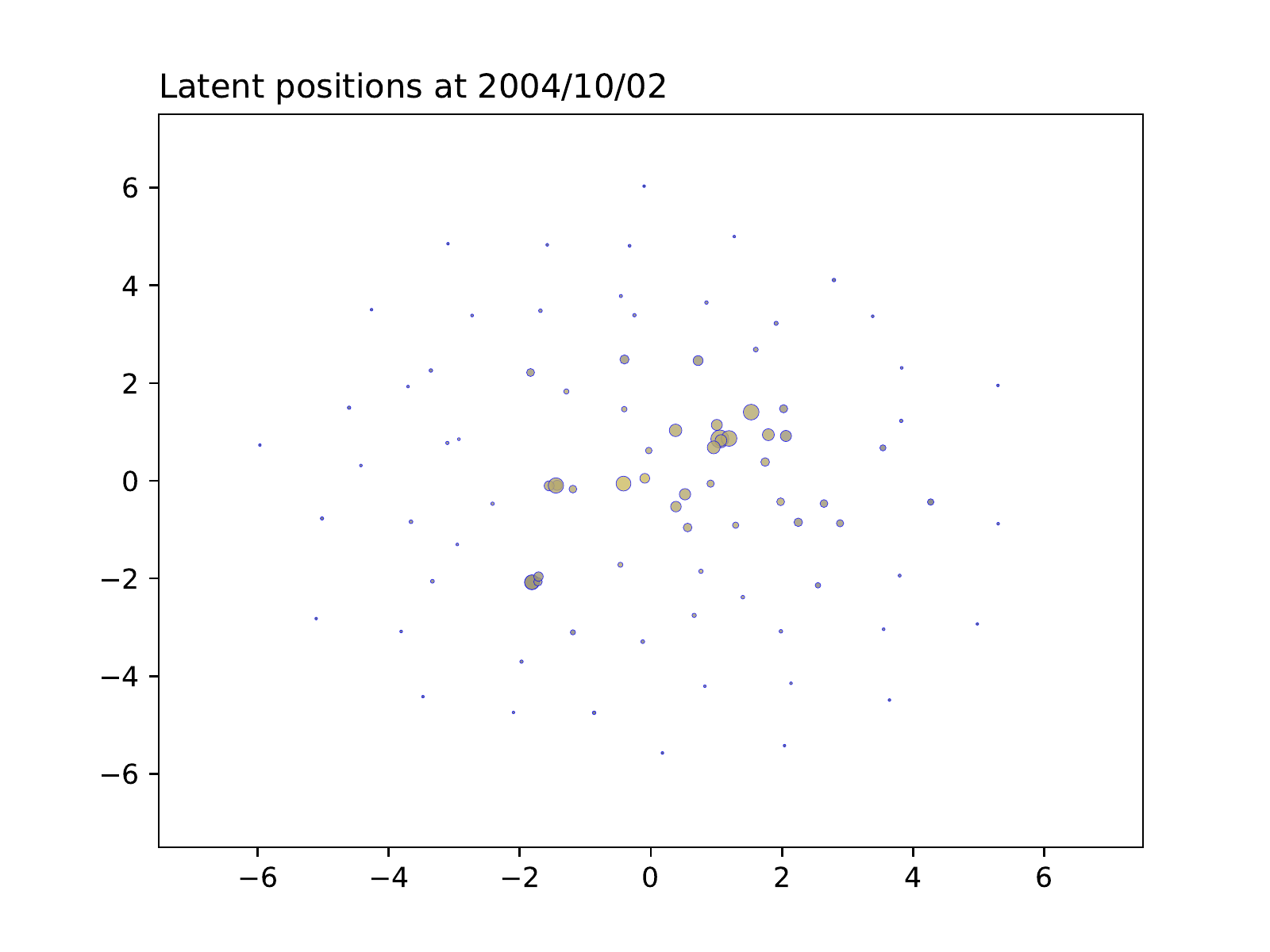}
     \end{subfigure}
     \hfill
     \begin{subfigure}[b]{0.495\textwidth}
         \centering
         \includegraphics[width=\textwidth]{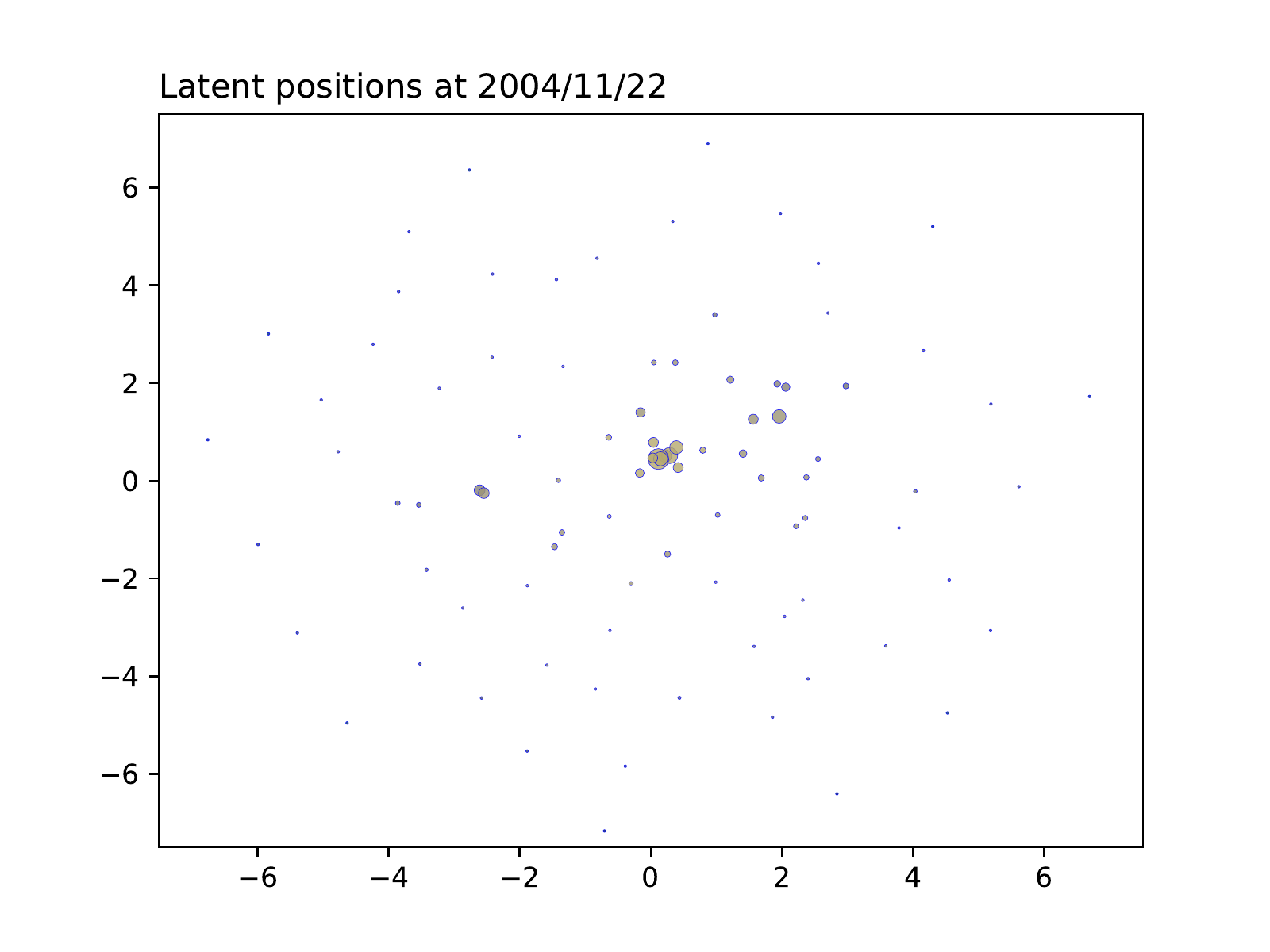}
     \end{subfigure}
     \centering
     \begin{subfigure}[b]{0.495\textwidth}
         \centering
         \includegraphics[width=\textwidth]{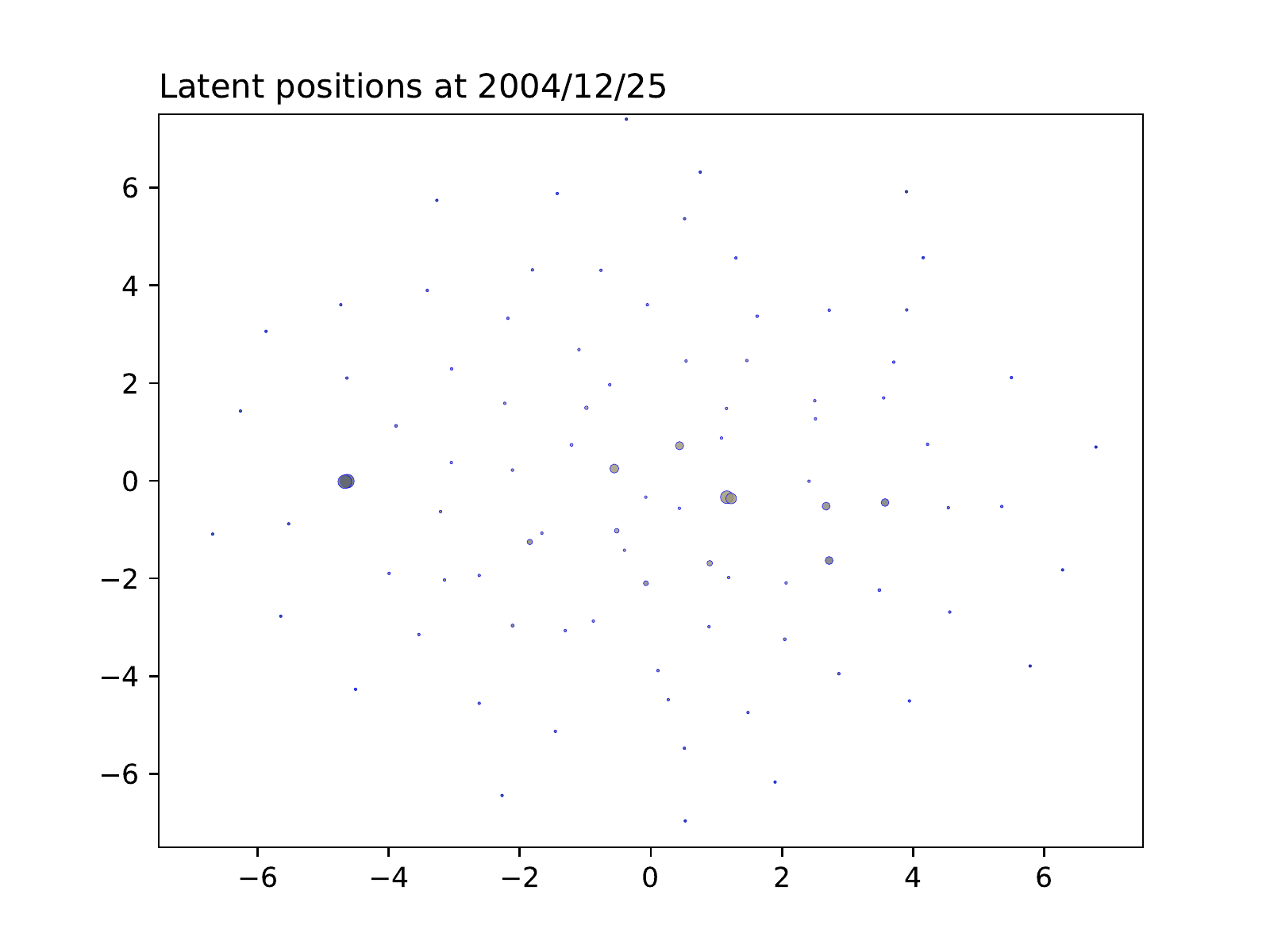}
     \end{subfigure}
     \hfill
     \begin{subfigure}[b]{0.495\textwidth}
         \centering
         \includegraphics[width=\textwidth]{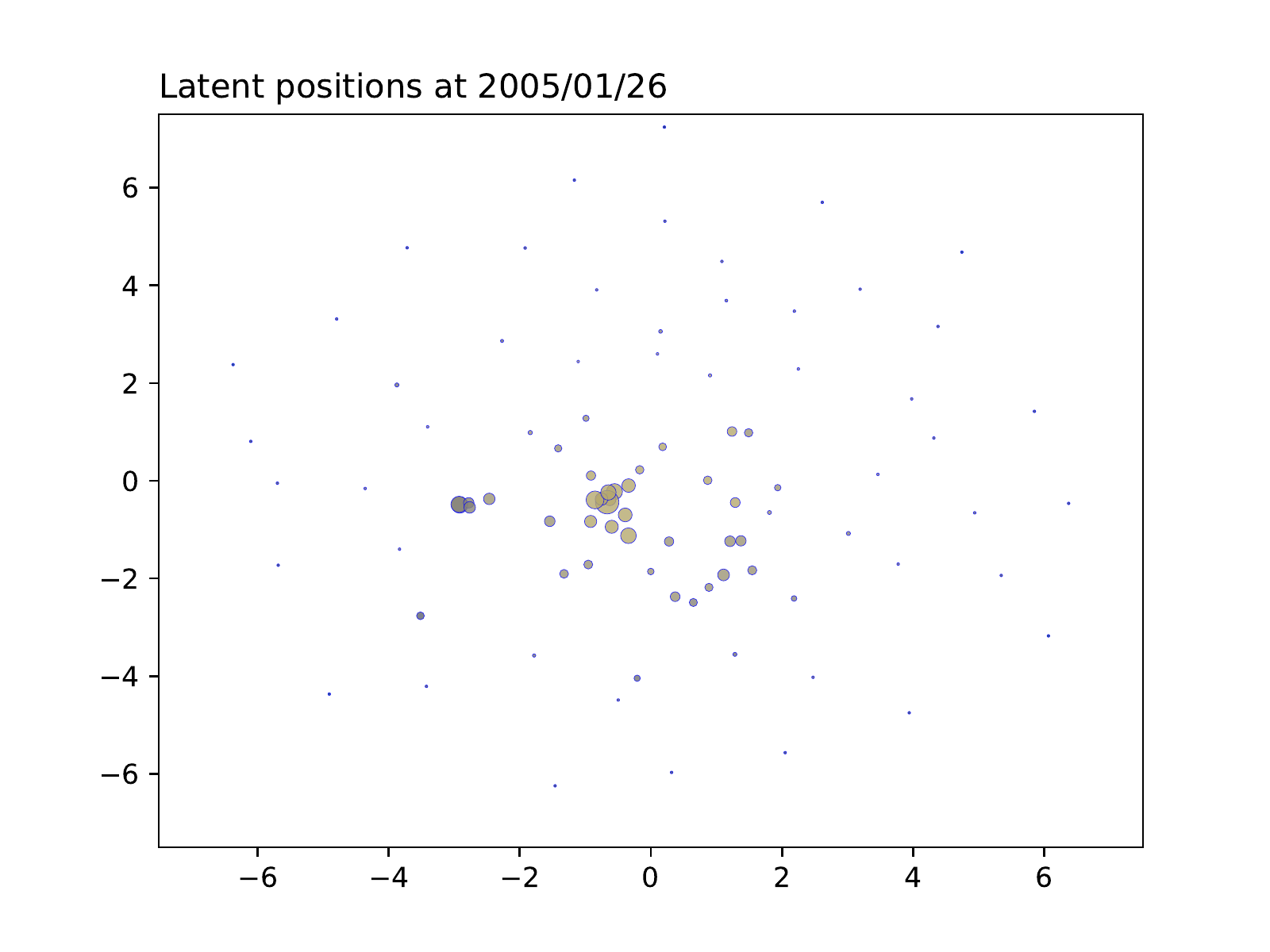}
     \end{subfigure}
     \begin{subfigure}[b]{0.495\textwidth}
         \centering
         \includegraphics[width=\textwidth]{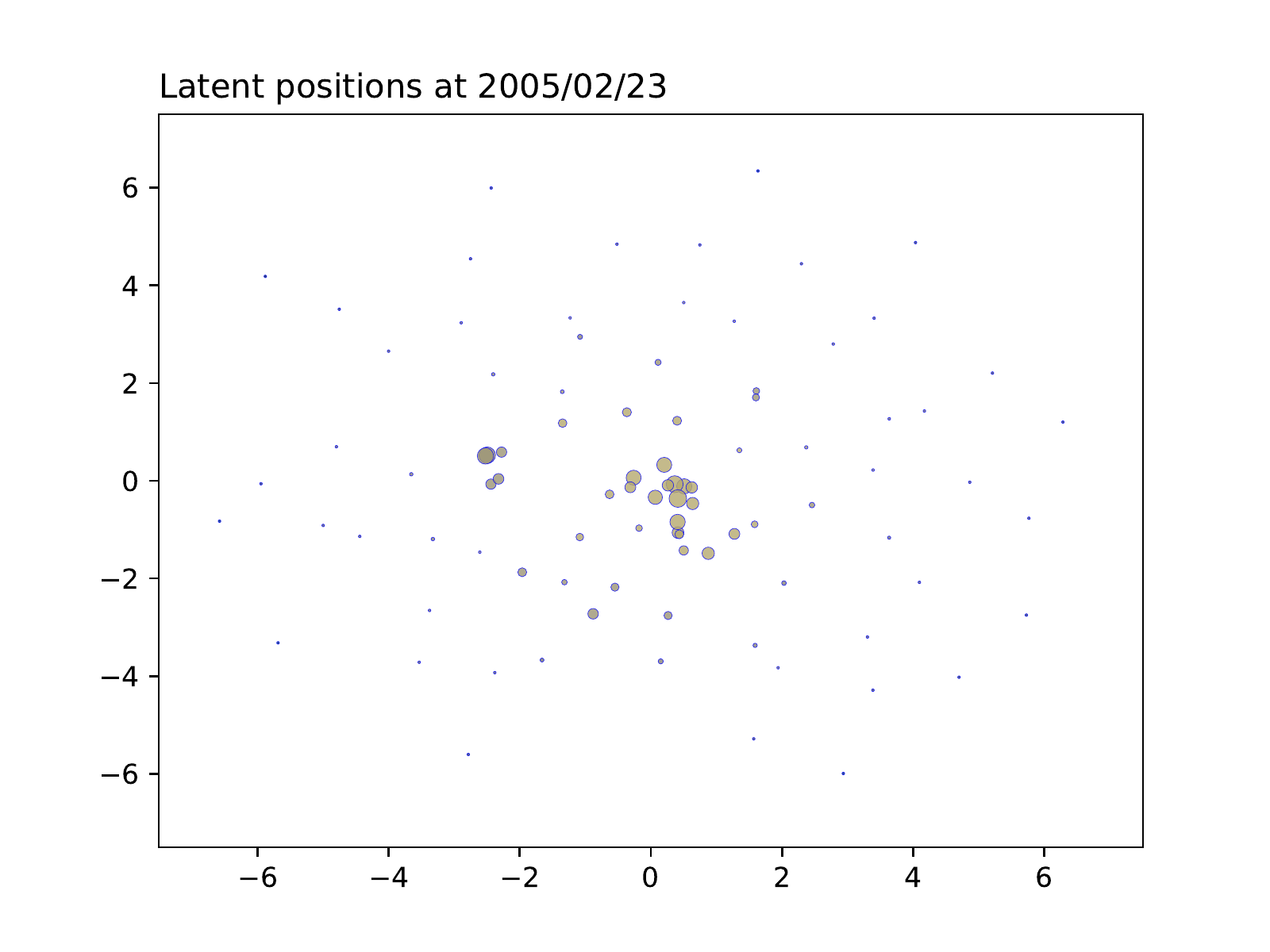}
     \end{subfigure}
     \hfill
     \begin{subfigure}[b]{0.495\textwidth}
         \centering
         \includegraphics[width=\textwidth]{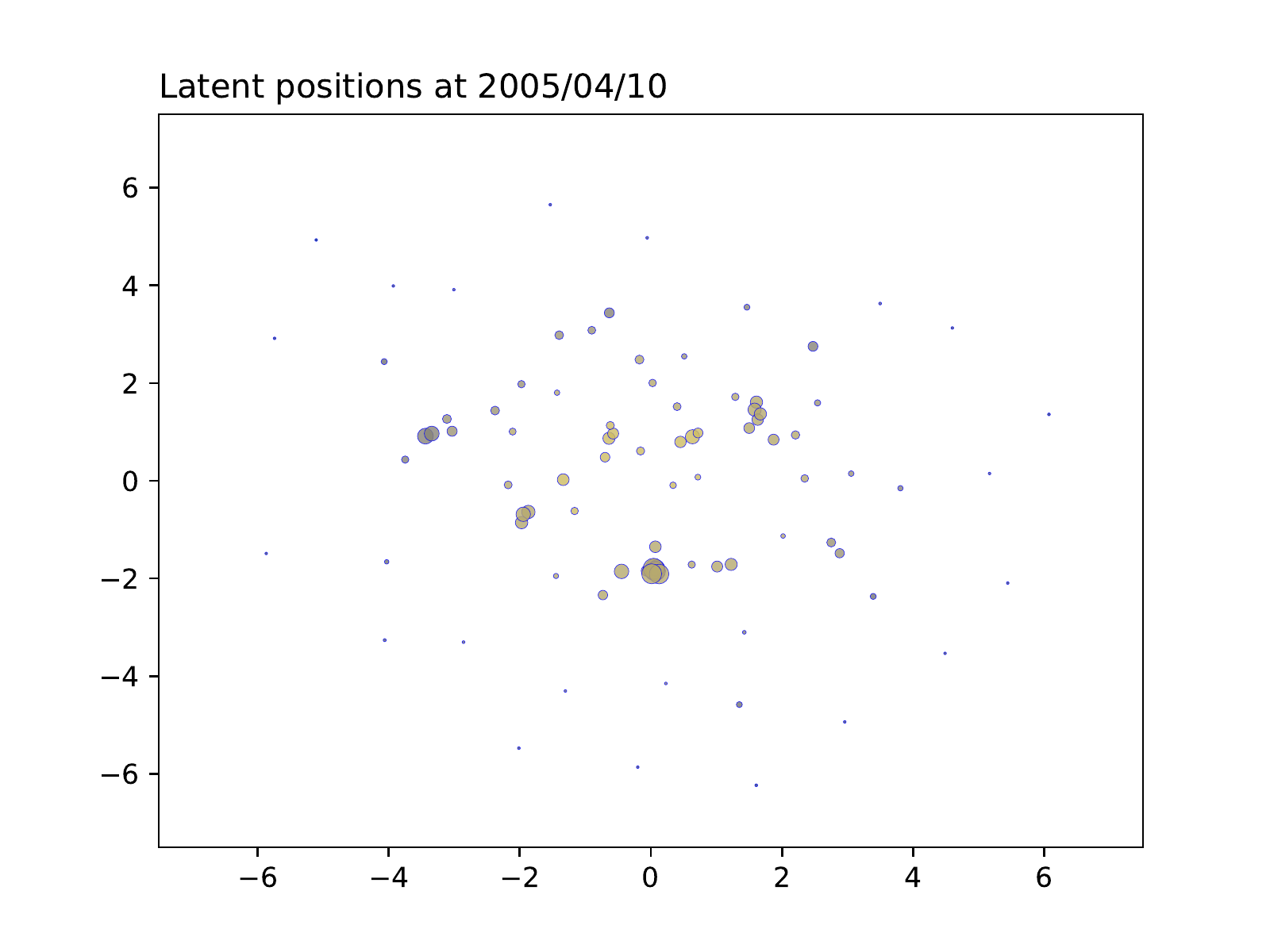}
     \end{subfigure}
        \caption{\textbf{MIT application}: snapshots for the distance model.}
        \label{fig:app_reality}
\end{figure}
The complete results, shown as a video, can be found in the code repository that accompanies this paper.

We observe that, in general, the students are quite separated and few communities arise.
This does not necessarily mean that the nodes do not interact, but it is a sign that there are no subgroups of students with an uncommonly high interaction rate.
Also, it is important to point out that the latent space is very expanded: coupled with an estimated intercept value of $3.9$, this signals that the latent space has a strong effect on modeling the interaction rates and that it can capture well the variability in the data.

Over time, the students tend to mix in different social groups, thus quickly forming and undoing communities.
This could be explained by the interactions that the students have due to college activities or other daily activities.
Near the end of the study, a large cluster appears, signaling a large gathering to which the students participated.
This may correspond to the period before a deadline, as outlined in \textcite{eagle2006reality}.

\subsection{London bikes}
Infrastructure networks provide an excellent example of instantaneous interaction data.
In this section we consider a network of bike hires which is collected and publicly distributed by \textcite{tfl2016cycle}.
We focus on a specific weekend day (Sunday 6 September 2015), and study the patterns of interactions between all bike hire stations in London.
The bike hire stations correspond to the nodes of our network, whereas an instantaneous interaction between two nodes at time $t$ simply means that a bike started a journey from one station towards the other, at that time (we consider undirected connections).

In Figure \ref{fig:app_bikes} we show a collection of snapshots at some critical time points during the day, for the distance model.
\begin{figure}
     \centering
     \begin{subfigure}[b]{0.495\textwidth}
         \centering
         \includegraphics[width=\textwidth]{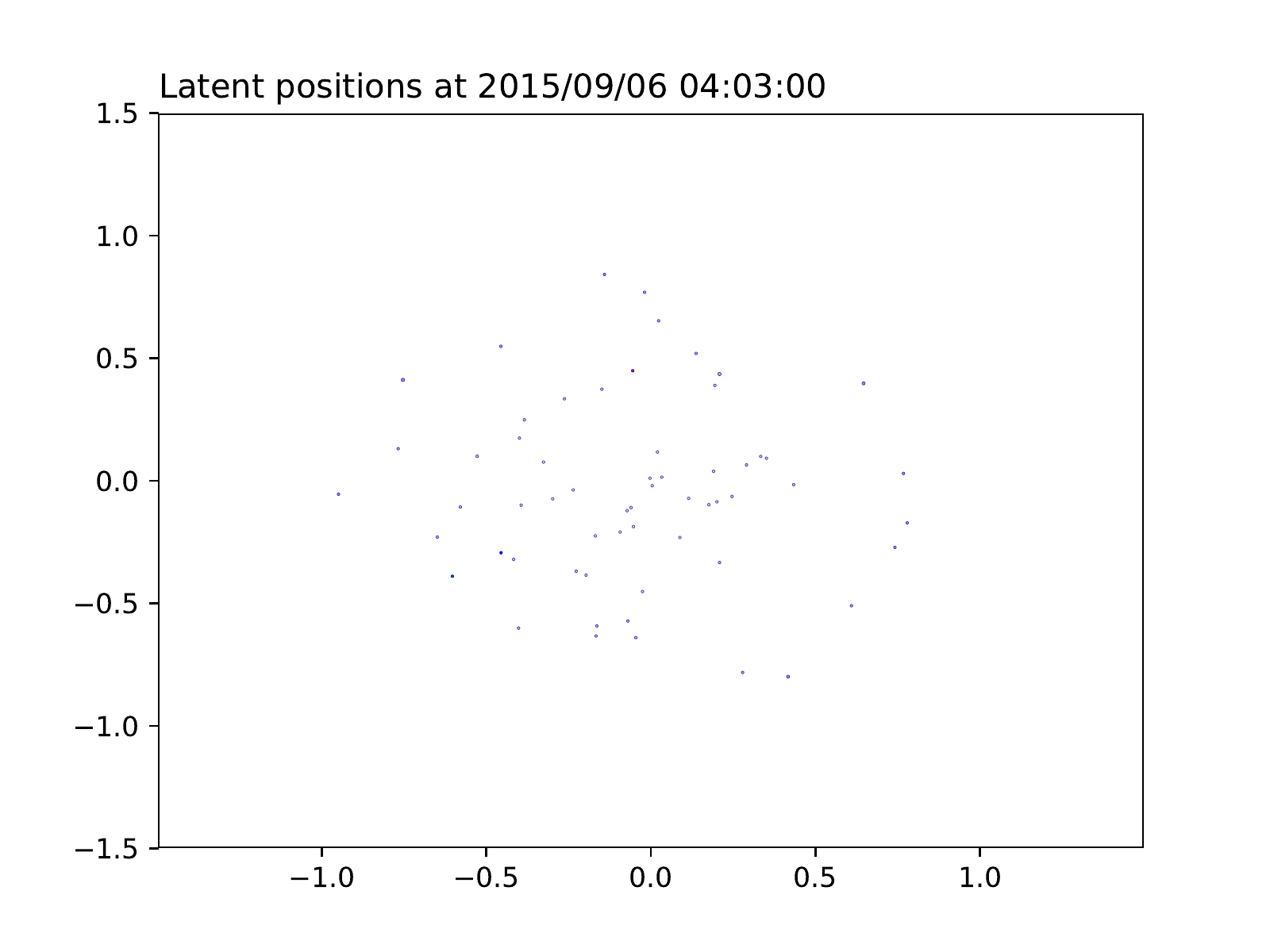}
     \end{subfigure}
     \hfill
     \begin{subfigure}[b]{0.495\textwidth}
         \centering
         \includegraphics[width=\textwidth]{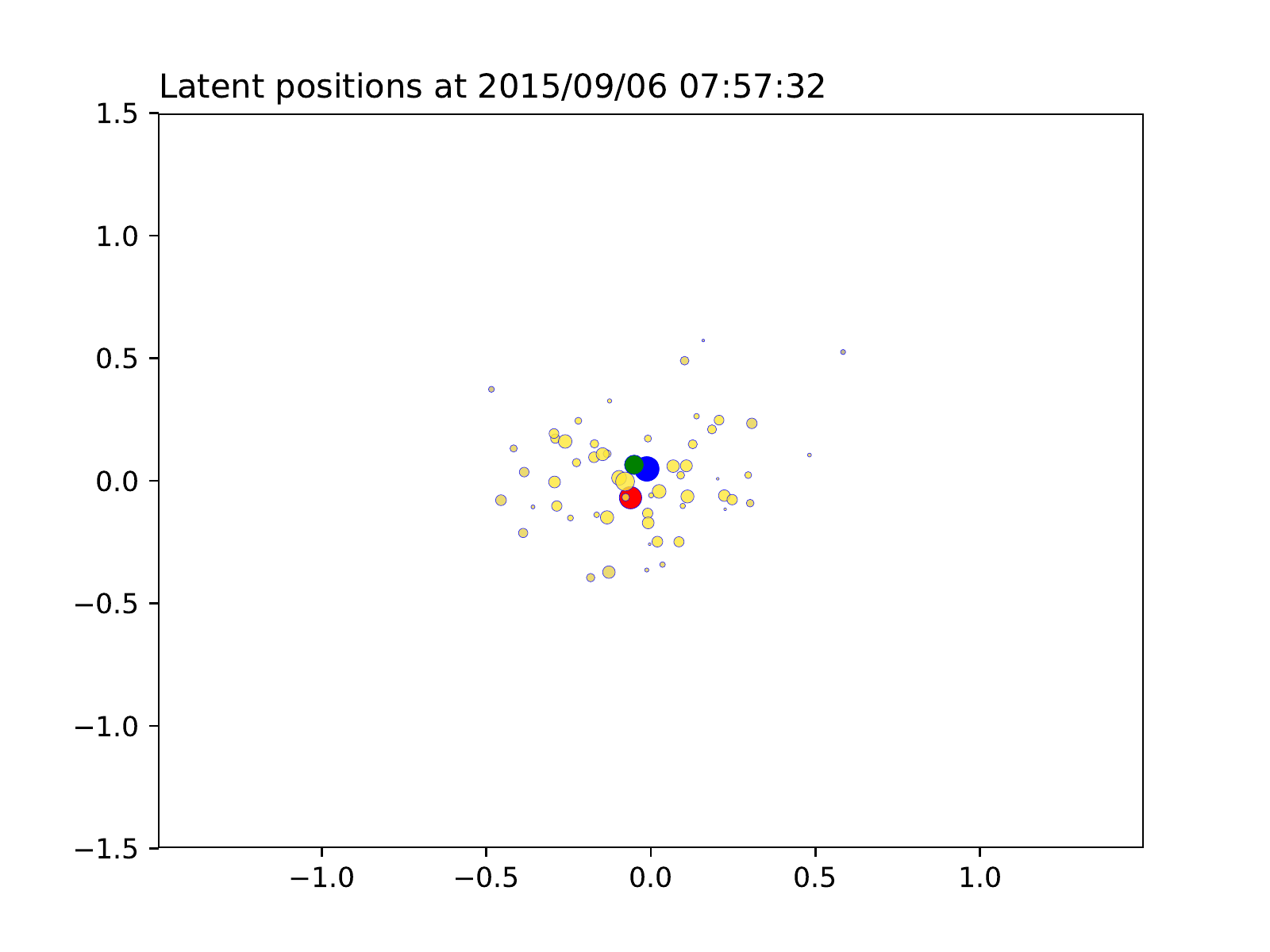}
     \end{subfigure}
     \centering
     \begin{subfigure}[b]{0.495\textwidth}
         \centering
         \includegraphics[width=\textwidth]{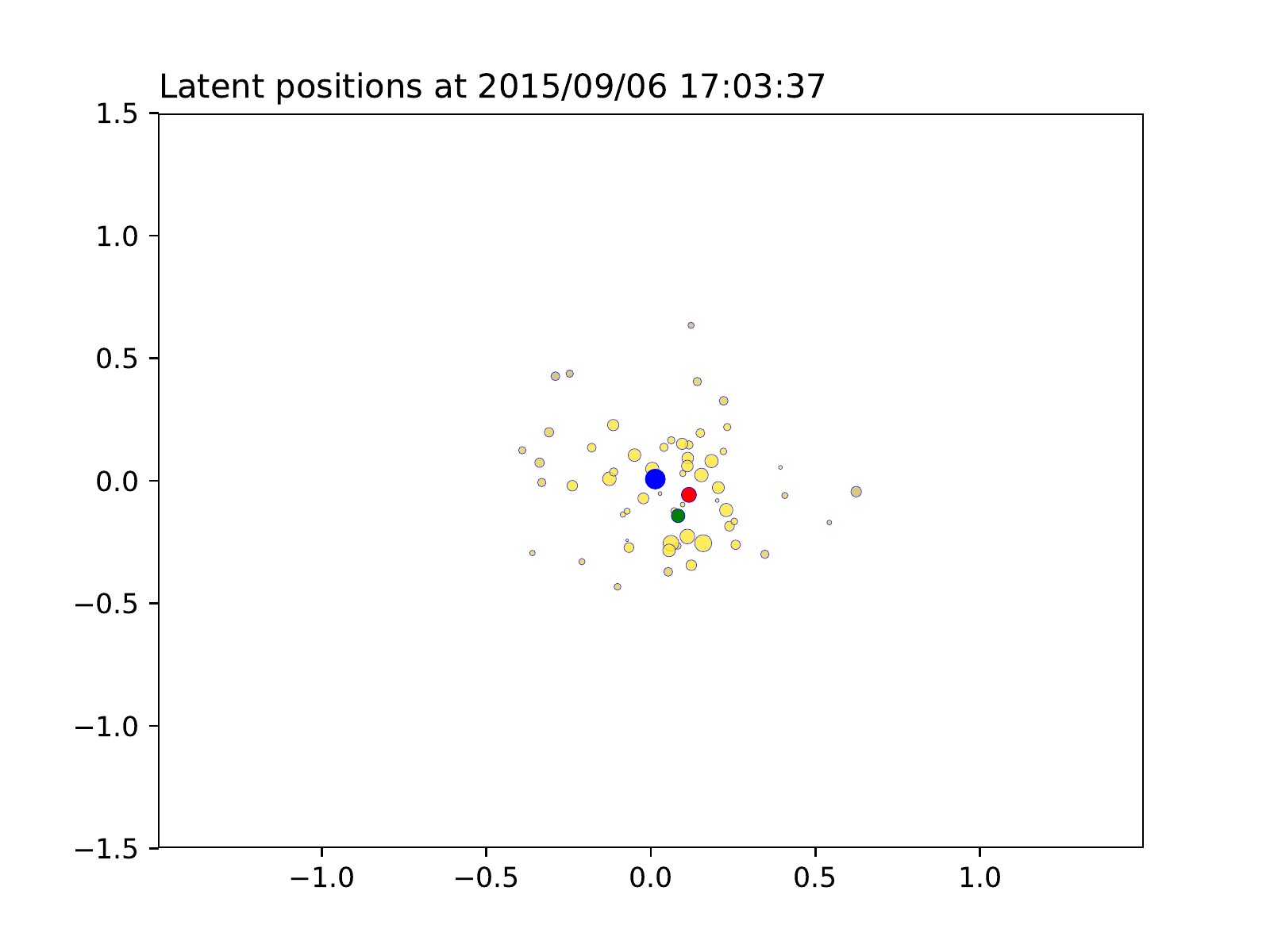}
     \end{subfigure}
     \hfill
     \begin{subfigure}[b]{0.495\textwidth}
         \centering
         \includegraphics[width=\textwidth]{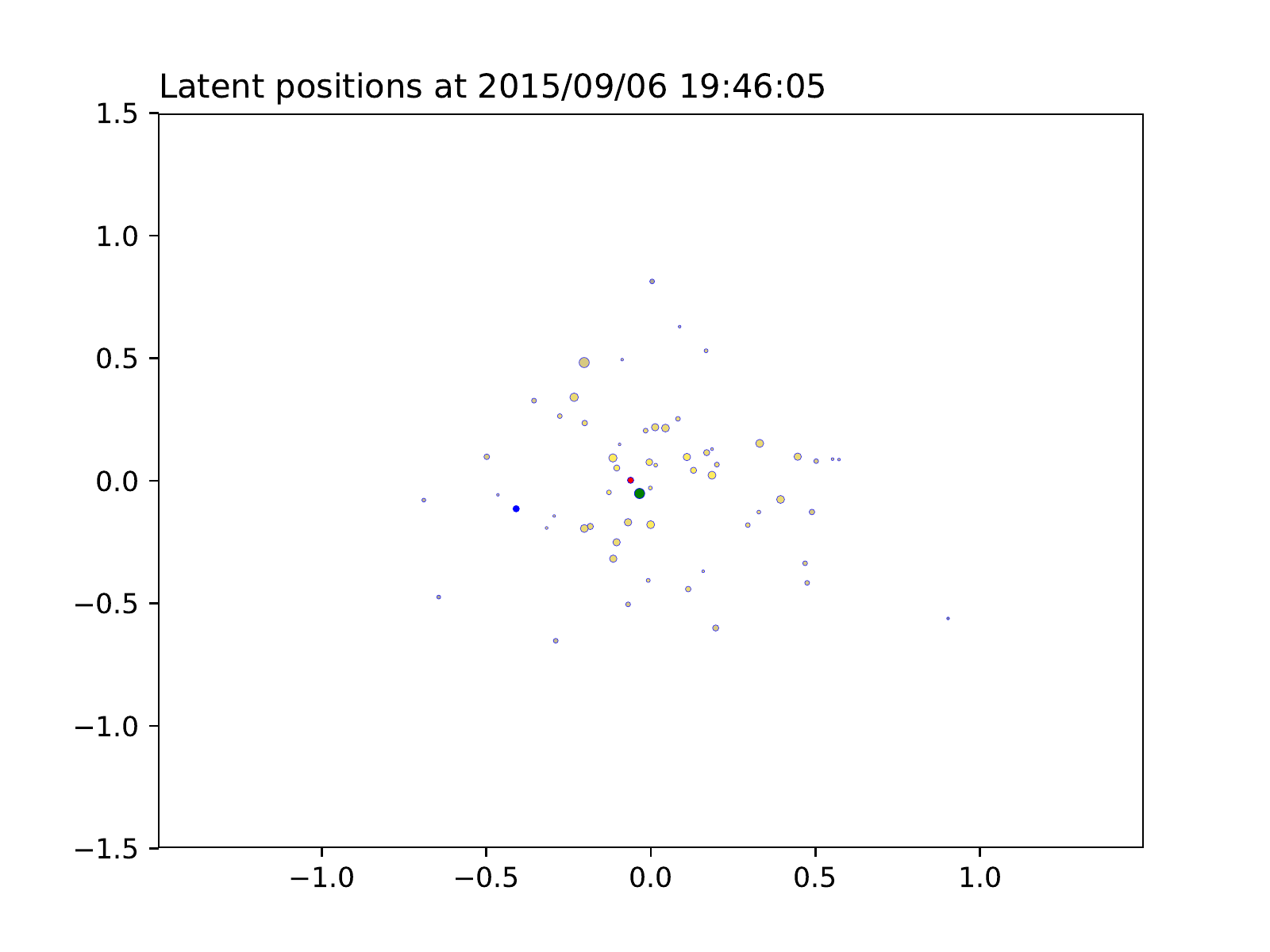}
     \end{subfigure}
        \caption{\textbf{London bikes application}: snapshots for the distance model.}
        \label{fig:app_bikes}
\end{figure}
The complete results, shown as a video, can be found in the code repository that accompanies this paper.

Although there are a total of $818$ stations that are active in this dataset, we provide a visualization for the $60$ most active stations only.
However, we emphasize that the results were obtained using the whole dataset.
In addition, we highlight with a different color the $3$ stations with the highest number of interactions, overall. 
These stations are:
\begin{itemize}
    \item Belgrove Street, King's Cross, situated next to King's Cross Square (shown in red);
    \item Finsbury Circus, Liverpool Street, situated next to Liverpool Street station (shown in blue);
    \item Newgate Street, St. Paul's, situated next to St. Paul's Cathedral (shown in green).
\end{itemize}

The first aspect that we notice is that the latent space expands during inactive times, and it contracts during busy hours.
The contractions and expansions are not homogeneous, rather they highlight the presence of dense and less dense clusters of stations.

The estimate for the intercept parameter is $-5.2$, and the dispersion of the points in the latent space is not particularly large. This signals that the latent space characterization is not having a very strong effect on the rate of interactions, and the model does not capture much variability in the rates of interactions. 
This highlights that these connectivity data follow patterns that cannot be completely explained by the purely geometrical nature of our model.
That is, the connections are determined by a variety of factors that cannot be framed into this latent positions context, and the geographical information on bike hiring accounts for only a part of the problem.

\section{Conclusions}\label{sec:conclusions}
We have introduced a new time-continuous version of the well known and widely used latent position model, as an extension which can model instantaneous interactions between entities.
We have proposed a new methodology which provides good flexibility while also allowing for an efficient inferential framework.
The methodology is implemented in our software \texttt{CLPM} which accompanies this paper and is publicly available.
This provides an essential additional tool for practitioners that are interested in deriving latent space visualizations from observed instantaneous interaction data.

The framework that we propose is highly inspired by the work of \textcite{hoff2002latent}, and by the vast literature that has followed in this direction.
Our work combines some crucial theoretical and statistical aspects of latent position modeling, with a pragmatic approach to inference and visualization of the results.
Crucially, we provide simulation studies and real data applications to demonstrate how our method leads to sensible and accurate results, with low computational demands.

As regards extensions and future work, our research opens up several new directions, to address and potentially change some crucial parts of our procedure.
A fundamental challenge is related to the geometric nature of the latent space. In this paper, and in the literature cited here, affine latent spaces are considered, endowed with the standard dot product, which, in turn, induces the Euclidean distance. However, some important works in the literature of the \emph{static} latent position model consider the latent space to be spherical~\parencite{mccormick2015latent} or hyperbolic~\parencite{krioukov2010hyperbolic,asta2015geometric}. 
As expected, since latent position models are generative models, the geometry of the latent space has crucial consequences on the properties of the simulated network~\parencite{smith2019geometry}. Indeed, recently, \textcite{lubold2020identifying} introduced a method to consistently estimate the manifold type, dimension, and curvature from a class of latent spaces. Addressing these topics in the context of dynamic latent position models is a promising avenue of research that can extend our work.
Another challenging aspect of our methodology regards inference: in this paper, we propose an optimization approach to maximize a penalized likelihood. An interesting alternative would be to consider a different approach that could allow one to also quantify uncertainty around the parameter estimates.
Finally, in terms of modeling, we use piece-wise continuous trajectories due to their flexibility and easy tractability, however, alternatives to this parametrization may also be considered.

\section*{Acknowledgements}
The authors would like to thank Prof. Charles Bouveyron for useful feedback in the early stages of this work, but also for supporting and coordinating a research visit which fostered this project.

\printbibliography

\clearpage
\appendix

\small

\section{Log-likelihood for the projection model}\label{sec:proof_projection}
To make the forthcoming results clearer, we introduce the following notation for the dot products:
\begin{equation}\label{eq:projection_scalar_notation}
S_{ij}^{gh} := \scalar{\bz_i(\eta_g)}{\bz_j(\eta_h)}
\end{equation}
for some nodes $i$ and $j$, and for any $g,h \in \left\{1,\dots,K\right\}$.
Most commonly, $g$ and $h$ will correspond to the labels of two consecutive change points, thus we will use them to identify two breakpoints for the trajectories, and to reconstruct the positions in between.
Also, we denote:
$$S_{ij}(t) := \scalar{\bz_i(t)}{\bz_j(t)}$$
for the dot product at a generic time $t \in [0,T]$.

\begin{proposition}\label{prop:projection_1}
Under the projection model, the log-likelihood is exactly equal to:
\begin{equation}\label{eq:log_likelihood_1}
    \log \mathcal{L}\left(\bZ\right) = \sum_{i,j:\ i < j} \left\{ \sum_{e \in \mathcal{E}_{ij}} \log S_{ij}(\tau_e)
    - \frac{1}{6} \sum_{g=1}^{K-1} \left(\eta_{h} - \eta_{g}\right)\left( 2S_{ij}^{gg} + S_{ij}^{gh} + S_{ij}^{hg} + 2S_{ij}^{hh}\right) \right\}
\end{equation}
where $h = g+1$.
\end{proposition}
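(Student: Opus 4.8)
The only nontrivial part of the claimed identity is the evaluation of the integral term $\int_0^T S_{ij}(t)\,dt$ appearing in Eq.~\eqref{eq:log_lik_proj}; the summation over the events $e\in\mathcal{E}_{ij}$ is already in the desired form. So the plan is to compute this integral in closed form, dyad by dyad, using only the piece-wise linearity of the trajectories and the bilinearity of the dot product. First I would break the domain of integration along the grid of change points, writing $\int_0^T S_{ij}(t)\,dt = \sum_{g=1}^{K-1} \int_{\eta_g}^{\eta_{g+1}} \scalar{\bz_i(t)}{\bz_j(t)}\,dt$, so that on each sub-interval $[\eta_g,\eta_{h}]$ (with $h=g+1$) both trajectories are affine.

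On a fixed sub-interval I would perform the change of variable $t = (1-s)\eta_g + s\eta_{h}$, $s\in[0,1]$, with $dt = (\eta_h-\eta_g)\,ds$. By Eq.~\eqref{eq:trajectories_1} this turns the integrand into $\scalar{(1-s)\bz_i(\eta_g)+s\bz_i(\eta_h)}{(1-s)\bz_j(\eta_g)+s\bz_j(\eta_h)}$. Expanding by bilinearity and using the notation of Eq.~\eqref{eq:projection_scalar_notation} gives
\begin{equation*}
(1-s)^2 S_{ij}^{gg} + s(1-s)\bigl(S_{ij}^{gh} + S_{ij}^{hg}\bigr) + s^2 S_{ij}^{hh},
\end{equation*}
where both cross terms $S_{ij}^{gh}$ and $S_{ij}^{hg}$ genuinely appear because the dot product is symmetric in the node labels but not in the time labels. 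I would then integrate the three elementary polynomials over $s\in[0,1]$, using $\int_0^1(1-s)^2\,ds = \int_0^1 s^2\,ds = \tfrac13$ and $\int_0^1 s(1-s)\,ds = \tfrac16$, obtaining
\begin{equation*}
\int_{\eta_g}^{\eta_h} S_{ij}(t)\,dt = \frac{\eta_h-\eta_g}{6}\left(2S_{ij}^{gg} + S_{ij}^{gh} + S_{ij}^{hg} + 2S_{ij}^{hh}\right).
\end{equation*}

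Finally I would sum this over $g=1,\dots,K-1$, substitute back into Eq.~\eqref{eq:log_lik_proj}, and sum over the dyads $i<j$, which yields exactly Eq.~\eqref{eq:log_likelihood_1}. There is no real obstacle here: the argument is a routine telescoping-free decomposition plus a one-variable polynomial integral, and the only points requiring a little care are (i) keeping track of the Jacobian factor $\eta_h-\eta_g$ from the reparametrization of each segment, and (ii) retaining both asymmetric cross terms $S_{ij}^{gh}$ and $S_{ij}^{hg}$ rather than prematurely collapsing them into $2S_{ij}^{gh}$, since the trajectories of $i$ and $j$ need not coincide. One may also remark that the event-term $\log S_{ij}(\tau_e)$ is left untouched because no analogous simplification is needed (or possible) for it.
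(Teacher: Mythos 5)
Your proposal is correct and follows essentially the same route as the paper's proof: decompose the integral along the change-point grid, reparametrize each segment to $[0,1]$, expand the dot product by bilinearity, and integrate the elementary polynomials to obtain the factor $\tfrac{\eta_h-\eta_g}{6}\left(2S_{ij}^{gg}+S_{ij}^{gh}+S_{ij}^{hg}+2S_{ij}^{hh}\right)$. No gaps.
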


\begin{proof}
The only non-trivial part in formula \eqref{eq:log_lik_proj} regards the integral:
$$
\int_{0}^{T} \scalar{\bz_i(s)}{\bz_j(s)}ds
$$
where we have that, for $s \in [\eta_g,\eta_{g+1}]$:
$$
\bz_i(s) = (1-t)\bz_i(\eta_g) + t\bz_i(\eta_{g+1})
$$
and, thanks to Eq. \eqref{eq:trajectories_1}, $t \in [0,1]$ is such that $s = (1-t)\eta_g + t\eta_{g+1}$.
Back to the calculation of the integral, we can decompose this into $K-1$ integrals using the $K$ change points as follows:
$$
\int_{0}^{T} \scalar{\bz_i(s)}{\bz_j(s)}ds
= \sum_{g=1}^{K-1} \int_{\eta_g}^{\eta_{g+1}} \scalar{\bz_i(s)}{\bz_j(s)}ds
$$
and then, in each integral, we apply the transformation $s = (1-t)\eta_g + t\eta_{g+1}$ to obtain:
\begin{equation*}
\begin{split}
\int_{0}^{T} \scalar{\bz_i(s)}{\bz_j(s)}ds
&= \sum_{g=1}^{K-1} \left(\eta_{g+1} - \eta_g\right)\int_{0}^{1} \scalar{\bz_i((1-t)\eta_g+t\eta_{g+1})}{\bz_j((1-t)\eta_g+t\eta_{g+1})}dt \\
&= \sum_{g=1}^{K-1} \left(\eta_{g+1} - \eta_g\right)\int_{0}^{1} \left[(1-t)^2S_{ij}^{gg} + t(1-t)(S_{ij}^{gh}+S_{ij}^{hg}) + t^2S_{ij}^{hh}\right]dt \\
\end{split}
\end{equation*}
where $S_{ij}^{gh}$ is defined in Eq. \eqref{eq:projection_scalar_notation} and $h = g+1$.
Since the $S$ terms do not depend on $t$, the integral can be solved analytically:
\begin{equation*}
\begin{split}
\int_{0}^{T} \scalar{\bz_i(s)}{\bz_j(s)}ds
&= \sum_{g=1}^{K-1} \left(\eta_{g+1} - \eta_g\right)\left[ S_{ij}^{gg}\int_{0}^{1}(1-t)^2dt + (S_{ij}^{gh}+S_{ij}^{hg})\int_{0}^{1}t(1-t)dt + S_{ij}^{hh}\int_{0}^{1}t^2dt \right] \\
&= \sum_{g=1}^{K-1} \left(\eta_{g+1} - \eta_g\right)\left[ \frac{S_{ij}^{gg}}{3} + \frac{S_{ij}^{gh}+S_{ij}^{hg}}{6} + \frac{S_{ij}^{hh}}{3}
\right]
\end{split}
\end{equation*}
which is equivalent to the result of the integral appearing in Eq. \eqref{eq:log_likelihood_1}.
\end{proof}
\section{Log-likelihood for the distance model}\label{sec:closed_form}

We focus on the integral in Eq.~\eqref{eq:log_l} and prove that it can be explicitly solved. 

\begin{proof}
In force of  Eq.~\eqref{eq:trajectories_1}, it reads
\begin{equation}\label{eq:ll_computing}
\begin{split}
&e^{\beta}\left(\sum_{g=1}^{K-1}\int_{0}^{1}  \exp\left\{-\norm{ (1-t)(\bz_i(\eta_g) - \bz_j(\eta_g)) + t(\bz_i(\eta_{g+1})-\bz_{j}(\eta_{g+1}))}^2\right\} (\eta_{g+1} - \eta_{g}) dt  \right)   \\
=&\quad e^{\beta}\left(\sum_{g=1}^{K-1}(\eta_{g+1} - \eta_{g}) \int_{0}^{1} \exp\left\{ - \parallel t (\Delta_i^g - \Delta_j^g) + (\bz_i^g - \bz_j^g)  \parallel_2^2  \right\} dt \right)
\end{split}
\end{equation}
where the variable change  $t = \frac{s-\eta_g}{\eta_{g+1}-\eta_g}$ was performed and the following notations were adopted to simplify the exposition
\begin{equation*}
\begin{split}
    \Delta_i^g :&= \bz_i(\eta_{g+1}) - \bz_i(\eta_{l}) \\      \Delta_j^g :&= \bz_j(\eta_{g+1}) - \bz_j(\eta_{l}) \\
    \bz_i^g :&= \bz_i(\eta_g) \\
    \bz_j^g :&= \bz_j(\eta_g)
\end{split}    
\end{equation*}
By denoting $g(t):=- \parallel t (\Delta_i^g - \Delta_j^g) + (\bz_i^g - \bz_j^g)  \parallel_2^2$ the exponent inside the integral, we can ``complete the square'' as follows
\begin{equation*}
\begin{split}
g(t) &=-\norm{\Delta_i^g - \Delta_j^g}^2\left(t^2 + 2t\scalar{\frac{\Delta_i^g - \Delta_j^g}{\norm{\Delta_i^g - \Delta_j^g}}}{\frac{\bz_i^g - \bz_j^g}{\norm{\Delta_i^g - \Delta_j^g}}}  + \frac{\norm{\bz_i^g - \bz_j^g}^2}{\norm{\Delta_i^g - \Delta_j^g}^2} \right) \\
&=-\norm{\Delta_i^g - \Delta_j^g}^2\left(t - \scalar{\frac{\Delta_i^g - \Delta_j^g}{\norm{\Delta_i^g - \Delta_j^g}}}{\frac{\bz_j^g - \bz_i^g}{\norm{\Delta_i^g - \Delta_j^g}}}\right)^2 \\
&- \left(\norm{\bz_i^g - \bz_j^g}^2 -  \left(\scalar{\frac{\Delta_i^g - \Delta_j^g}{\norm{\Delta_i^g - \Delta_j^g}}}{\bz_j^g - \bz_i^g} \right)^2 \right) \\
&= -\frac{1}{2(\sigma^2)_{ij}^g}(t - \mu_{ij})^2 - \left(\norm{\bz_i^g - \bz_j^g}^2 - (\norm{\Delta_i^g - \Delta_j^g}\mu_{ij}^g)^2 \right)
\end{split}
\end{equation*}
where 
\begin{equation}
\begin{split}
    \mu_{ij}^g :&= \scalar{\frac{\Delta_i^g - \Delta_j^g}{\norm{\Delta_i^g - \Delta_j^g}}}{\frac{\bz_j^g - \bz_i^g}{\norm{\Delta_i^g - \Delta_j^g}}}, \\
    (\sigma^2)^g_{ij} :&=\frac{1}{2\norm{\Delta_i^g - \Delta_j^g}^2}.
\end{split}
\end{equation}
By plugging all this into Eq.~\ref{eq:ll_computing} it follows that
\small
{\color{black}
\begin{equation}\label{eq:ll_close}
\begin{split}
\int_0^T   & e^{\beta - \parallel \bz_{i}(t)  - \bz_{j}(t) \parallel_2^2} dt = \\
&=\sqrt{2\pi} e^{\beta}\left\{\sum_{l=1}^{K-1}e^{-(\norm{\bz_i^g - \bz_j^g}^2 - (\norm{\Delta_i^g - \Delta_j^g}\mu_{ij}^g)^2 )} \sigma_{ij}^g (\eta_{g+1} - \eta_g) \left[\Phi\left(\frac{1-\mu_{ij}^g}{\sigma_{ij}^g}\right)- \Phi\left(\frac{0-\mu_{ij}^g}{\sigma_{ij}^g}\right)\right] \right\}
\end{split}
\end{equation}}
\end{proof}
\end{document}